\PassOptionsToPackage{authoryear,round}{natbib}
\documentclass{article}
\usepackage[preprint]{neurips_2024}
\usepackage[T1]{fontenc}
\usepackage{amsmath,amssymb,amsthm}
\usepackage{mathrsfs}
\usepackage{alphalph}
\usepackage{booktabs,longtable,array,multirow}
\usepackage{graphicx}
\usepackage{enumitem}
\usepackage[table]{xcolor}
\usepackage{tikz}
\usetikzlibrary{arrows.meta,positioning}
\usepackage{microtype}
\makeatletter
\def\verbatim@font{\scriptsize\ttfamily}
\makeatother
\definecolor{ink}{RGB}{43,38,34}
\definecolor{accent}{RGB}{156,86,54}
\definecolor{sand}{RGB}{246,241,233}
\definecolor{sage}{RGB}{117,132,106}
\definecolor{warmgray}{RGB}{117,105,94}
\definecolor{ruleline}{RGB}{214,202,188}
\definecolor{lightaccent}{RGB}{248,238,229}

\newtheorem{definition}{Definition}
\newtheorem{proposition}{Proposition}
\newtheorem{remark}{Remark}
\newtheorem{assumption}{Assumption}
\newcommand{\D}{\mathcal{D}}
\newcommand{\M}{\mathcal{M}}
\newcommand{\Spec}{\mathcal{S}}
\newcommand{\F}{\mathcal{F}}
\newcommand{\argmax}{\operatorname*{arg\,max}}

\title{Objective-Oriented Quantitative Investment:\\A Specification-Driven Framework for Automated\\Synthesis of Trading Strategy Pipelines}

\author{Liangliang Zhang\\
Shanghai Liangbai Technology Co., Ltd.\\
\texttt{liangliangzhang81@qq.com}\\
Sole and corresponding author}

\begin{document}
\maketitle

\begin{abstract}
Automated quantitative research has made striking progress---AutoML pipeline synthesis, neural architecture search, reinforcement-learned alpha mining, and LLM-agent research loops all promise to automate the construction of trading strategies. Yet every one of these systems answers the same question: \emph{which strategy scores highest on a scalar metric?} We argue this question is \emph{incomplete}. Professional investors do not order ``the highest return''; they order an \emph{identity}---pure stock-selection alpha uncontaminated by style exposure, resilient in unilateral market declines, within turnover and capacity budgets. We call the incumbent paradigm \emph{result-oriented} and propose, in its place, \textbf{Objective-Oriented Quantitative Investment (OOQI)}: a specification-driven framework in which (i) the full strategy pipeline is modeled as a typed design space of interchangeable modules with explicit interface contracts ($8.85\times 10^{8}$ assemblies in our reference instantiation, high $10^{8}$s after legality pruning); (ii) investor intent is formalized as a \emph{strategy profile specification}---a composable language of measurable, falsifiable clauses drawn from eight requirement families, with hard/soft semantics, priorities, and an interaction algebra capturing synergy and conflict among objectives; and (iii) a compiler translates specifications into constrained assemblies, verifies satisfaction clause-by-clause, and iterates with attribution. Because search over large assembly spaces inflates apparent satisfaction, we develop a \emph{verification protocol} that treats the satisfaction rate itself as a statistical object subject to deflation for search width, isolated temporal holdout, and random-assembly null models. An illustrative synthetic demonstration with 32 pipeline assemblies shows that result-oriented selection attains the top in-sample information ratio while satisfying only 25\% of the specification, whereas specification-driven selection satisfies 100\% of it at a measurable 5.5\% score cost---the two paradigms provably select different strategies. OOQI reframes automated quantitative research from \emph{optimization of a number} to \emph{satisfaction of an identity}, and supplies the formal apparatus---language, compiler, and statistics---to make that reframing rigorous. A theory of optimal assembly accompanies the framework: satisfaction-driven synthesis is NP-hard in general yet constant-factor approximable in a conflict-free (submodular) regime; specifications form a lattice dual to assemblies, yielding identity inference and zero-cost clause completion; each clause carries a Lagrangian shadow price---a clause-level transfer coefficient; and rolling re-certification can be made anytime-valid via e-processes, turning alpha decay into certificate expiry.
\end{abstract}

\section{Introduction: From Results to Objectives}\label{sec:intro}

Two decades of automation have transformed how trading strategies are built. Combined algorithm selection and hyperparameter optimization (CASH) searches over preprocessing-and-model pipelines \citep{thornton2013autoweka,feurer2015autosklearn,olson2016tpot}; neural architecture search (NAS) explores discrete spaces of predictor structures \citep{liu2019darts,pham2018enas,white2023nas1000}; evolutionary and reinforcement-learning systems mine formulaic alphas \citep{zhang2020autoalpha,yu2023alphagen}; and, most recently, LLM-agent frameworks close the loop of hypothesis generation, implementation, and backtest feedback \citep{li2025rdagentquant,wang2023alphagpt,novikov2025alphadevolve}. Across this entire landscape, one element has remained constant: the \emph{objective is a scalar}. Validation loss, information coefficient, information ratio, Sharpe ratio---systems differ in what they search but not in what they seek.

This constancy is, on reflection, strange, because it is not how professional capital is allocated. A pension fund does not mandate ``maximize Sharpe''; a quantitatively minded principal does not instruct a researcher to ``maximize IC.'' Practitioners issue \emph{identity constraints}: the return stream must be attributable to stock selection rather than to market direction or style timing; exposure to dividend and size factors must be zero; the strategy must remain profitable in unilateral declines; turnover must respect a cost budget; capacity must respect market impact. The celebrated decomposition of active returns into factor exposure and residual selection \citep{sharpe1992style,ang2009norway,clarke2017purefactor} exists precisely because sophisticated capital refuses to pay for disguised beta. \emph{What is ordered is not a result but an objective profile; what is delivered must satisfy a specification, not merely win a leaderboard.}

We call the incumbent paradigm \textbf{result-oriented}: strategy construction is cast as $d^* = \argmax_{d \in \D}\, g(\theta(d))$, where $\D$ is a space of candidate strategies, $\theta(d)$ a vector of evaluation statistics, and $g$ a scalarization---typically one coordinate of $\theta$. We propose \textbf{Objective-Oriented Quantitative Investment (OOQI)}, in which the primitive object is a \emph{specification} $\Spec$---a structured collection of measurable, falsifiable clauses over $\theta$---and the synthesis problem is to find assemblies that \emph{satisfy} $\Spec$, optimizing a scalar score only within the satisfying set:
\[
d^*_{\F} \;=\; \argmax_{d \in \F(\Spec)}\, g(\theta(d)), \qquad \F(\Spec) = \{ d \in \D : d \models \Spec \}.
\]
The two paradigms are not merely different in emphasis; we show formally (\S\ref{sec:formal}) and demonstrate empirically (\S\ref{sec:demo}) that they \emph{select different strategies}: result-oriented search systematically harvests whatever earns the score---including style beta---while specification-driven search purchases identity compliance at a measurable score cost.

\paragraph{Why now: the LLM compiler.} Specification-driven synthesis requires mapping semantic clauses (``zero style exposure'') onto structural decisions (label neutralization, exposure-projection post-processing, attribution clauses in evaluation). Until recently this mapping demanded a senior researcher. Large language models change the economics of the compiler: they carry the literature priors that turn blind combinatorial search into informed proposal-and-verification \citep{romera2023funsearch,novikov2025alphadevolve,li2025rdagentquant}. OOQI is deliberately agnostic about whether the compiler is an LLM agent, a Bayesian optimizer, or a human committee---but we note that LLM agents make the framework practical today.

\paragraph{Contributions.}
\begin{enumerate}[leftmargin=1.6em,itemsep=1pt]
\item \textbf{A problem reframing with formal teeth} (\S\ref{sec:formal}). We define the strategy-pipeline design space with typed interface contracts, the strategy profile specification language with hard/soft clauses and priorities, and the satisfaction semantics $d \models \Spec$. Three propositions establish: style contamination of result-oriented selection (P\ref{prop:contamination}); the non-negative, transfer-coefficient-like \emph{cost of specification} (P\ref{prop:cost}); and \emph{satisfaction inflation} under search width, motivating deflated reporting (P\ref{prop:inflation}).
\item \textbf{A specification language for strategy identity} (\S\ref{sec:framework}, App.~\ref{app:spec}). Eight requirement families---signal quality, return--risk, purity/style, regime robustness, execution feasibility, portfolio structure, temporal behavior, transparency/compliance---comprising 30+ composable clauses, each with a measurement protocol grounded in an established literature prototype, plus an interaction algebra (synergy/conflict) and an arbitration semantics for conflicting objectives (App.~\ref{app:algebra}).
\item \textbf{A verification protocol that treats satisfaction as a statistic} (\S\ref{sec:protocol}, App.~\ref{app:protocol}). Search ledger accounting, temporal holdout isolated from search, random-assembly null models, multi-seed confidence intervals, and a deflated satisfaction rate adapting PBO/CSCV and Deflated-Sharpe corrections \citep{bailey2017pbo,bailey2014dsr,harvey2016crosssection} from single-metric selection bias to multi-clause specification satisfaction.
\item \textbf{An illustrative demonstration} (\S\ref{sec:demo}, App.~\ref{app:repro}). On a fully disclosed synthetic market with 32 pipeline assemblies, result-oriented selection attains the highest in-sample IR (7.10) while satisfying 25\% of the specification; specification-driven selection satisfies 100\% of it at 5.5\% IR cost, and held-out satisfaction reveals the stochasticity that motivates our protocol.
\item \textbf{An inverse-problem formalization} (\S\ref{sec:inverse}, App.~\ref{app:inverse}). We identify specification-driven synthesis with the inverse problem of strategy synthesis, import Hadamard's well-posedness triad (existence/uniqueness/stability) as feasibility/underspecification/satisfaction-stability, and give margin-regularized selection a Tikhonov interpretation with a stability guarantee.
\item \textbf{A theory of optimal assembly} (\S\ref{sec:optimal}, App.~\ref{app:proofs}, \ref{app:inverse}). We establish NP-hardness of satisfaction-driven assembly via a 3-SAT reduction (P\ref{prop:nphard}); a constant-factor approximation regime under conflict-free (coverage/submodular) composition (P\ref{prop:submodular}); quality measures for the satisfying set---margin profiles, certified radius, Pareto fronts, robust stress depth, and a quality-diversity archive reading; a Galois-connection duality between specifications and assemblies that yields identity inference and zero-cost clause completion (P\ref{prop:galois}); Lagrangian shadow prices per clause (P\ref{prop:shadow}); and anytime-valid rolling certification via e-processes (P\ref{prop:evalue}).
\item \textbf{A framework walkthrough and a real-complexity case study} (\S\ref{sec:walkthrough}, \S\ref{sec:case}). We fix the locus of each object---which decisions live in the design space, which in the specification, which in the protocol---on a production-shaped pipeline, state the locus of accountability that distinguishes objective-oriented selection from result-chasing, and decompose a documented macro--meso--micro coupling architecture into the framework's grammar: every architectural decision becomes a switch, every engineering red line finds its formal home (legality, clause, or protocol), and an interpretability clause grown from practice ($w_t \ge 0.70$) enters the transparency family.
\end{enumerate}

\paragraph{Two gates against overfitting.} The discipline proposed here operates through two sequential gates, neither of which is itself a test statistic. The first is \emph{architectural transparency}: a strategy is not a monolithic model but an assembly over a contracted design space (Def.~\ref{def:space}), in which every module declares a side-effect signature. One knows what one is building: each switch's effect on each evaluation coordinate is declared ex ante (the contract's side-effect signature) and measurable ex post. The two disciplines are complementary, and the measurement half is concrete: averaging the sixteen matched assembly pairs of the demonstration ledger that differ only in portfolio width, narrowing 50$\to$20 raises mean $|\beta|_{\max}$ by $7.12$ (unstandardized DGP units, \S\ref{sec:demo} footnote) and lowers mean IR by $1.39$---the signature says \emph{which dials} a switch moves; the ledger says \emph{by how much}. Transparency is what makes the declared search width $N$ computable---the nominal count of legal assemblies---and without a declared $N$ the deflated statistics of \S\ref{sec:protocol} lack their essential input. The declared $N$ does not exhaust the effective trials: threshold deliberation, discarded specification drafts, module authors' tacit choices, and fixed environment or context configurations (\S\ref{sec:case}) sit outside any architecture contract, and we flag them as residual channels rather than claim they are closed. The second gate is \emph{demand-side priority}: the specification---what return, what risk, obtained through which declared exposures---is written down as clauses \emph{before} any ledger is consulted. This blocks the psychology of overfitting at its root: the canonical single-shot HARKing mechanism---hypothesizing after the results are known \citep{kerr1998harking}---is structurally blocked when the target is registered before the arrow is shot, while residual channels (specification-shopping, informed threshold deliberation, repeated registration) are governed, not eliminated, by the disciplines of App.~\ref{app:gov}. The two gates compose into a mechanism: transparency makes $N$ computable, giving deflated statistics their input; priority makes violations definable, turning overfitting from an invisible statistical drift into a detectable clause breach; the certificate and audit trail (\S\ref{sec:protocol}) make both attributable. The paradigm does not fight overfitting with cleverer models, but with the discipline of \emph{declare--compile--certify--audit}.

\paragraph{Scope and honesty.} OOQI is not a promise that any specification can be met: feasibility is decided by data, market, and budget, and the framework's duty is to \emph{report satisfaction or its absence with statistical integrity}, never to manufacture it. The demonstration of \S\ref{sec:demo} is synthetic by design---it validates the decision logic of the paradigm, not a live track record; a production case study is part of the research roadmap (App.~\ref{app:roadmap}).

\section{Related Work}\label{sec:related}

We organize four literature streams; an extended treatment is deferred to App.~\ref{app:related}. Table~\ref{tab:positioning} summarizes the positioning.

\paragraph{Supply-side automation: pipeline and architecture search.}
CASH formalizes joint algorithm selection and hyperparameter optimization over two-layer pipelines \citep{thornton2013autoweka}; auto-sklearn adds meta-learning and ensembling \citep{feurer2015autosklearn}; TPOT evolves operator trees and is notable for a rare second objective (pipeline complexity) \citep{olson2016tpot}; CASH+ acknowledges that modern adaptation stages induce mutually incompatible search spaces \citep{balef2025cashplus}. NAS searches discrete predictor structures \citep{liu2019darts,pham2018enas}; a meta-analysis of 1000 papers confirms that NAS targets accuracy, occasionally augmented by hardware-aware proxies \citep{white2023nas1000}. In finance, Qlib provides full-chain modularity \emph{without} search \citep{yang2020qlib}; Quant 4.0 articulates full-chain automation as a \emph{vision} without formalization \citep{guo2024quant4}; evolutionary and RL systems search \emph{one stage}---factor mining---with scalar fitness \citep{zhang2020autoalpha,yu2023alphagen}. \emph{Modularity and combinatorial search have never met inside one quantitative system, and---with the exceptions we now acknowledge---search objectives remain scalars.}

Two neighboring literatures constrain search rather than merely optimize scalars, and we position against them honestly. Constrained Bayesian optimization \citep{gelbart2014constrained,hernandez2016general} and safe reinforcement learning \citep{achiam2017cpo} maximize objectives subject to probabilistic or expected-cost constraints; multi-objective and hardware-aware NAS \citep{tan2019mnasnet,lu2019nsganet} return Pareto fronts over accuracy and deployment proxies. These are the closest formal ancestors of our constrained selection, and \S\ref{sec:demo} and App.~\ref{app:repro} therefore include explicit baselines (penalty scalarization; Pareto filtering). Three deltas remain: constraints there are (i) probabilistic-in-expectation rather than clause-wise auditable, (ii) defined over continuous knobs rather than typed pipeline identity, and (iii) unaccompanied by search-width deflation statistics. Specification formalisms from temporal logic (STL/MTL) offer hard/soft semantics we adopt, without strategy-space instantiation.

\paragraph{LLM-agent quantitative R\&D.}
R\&D-Agent evolves hypotheses under benchmark feedback and its quantitative instantiation jointly optimizes factors and models against IC/return metrics \citep{yang2025rdagent,li2025rdagentquant}. Alpha-GPT compiles natural-language \emph{trading ideas} into formulaic alphas with human-in-the-loop review \citep{wang2023alphagpt,yuan2024alphagpt2}. FunSearch and AlphaEvolve establish the LLM$+$evaluator$+$evolution paradigm against programmatically computable scalar evaluators \citep{romera2023funsearch,novikov2025alphadevolve}. Decision-type agents trade directly rather than synthesize research \citep{zhang2024finagent,xiao2024tradingagents}. The closest specification-flavored precedent is an option-strategy DSL compiling intent into executable option structures \citep{luo2026oql}; its requirement space (option legs, Greeks) shares nothing with strategy identity profiles, and it lacks satisfaction statistics. \emph{Existing systems start from data, metrics, or ideas; none starts from a verifiable multi-dimensional specification.}

\paragraph{Demand-side formalization.}
Goals-based wealth management replaces the scalar with target-achievement probability \citep{das2018gbwm,deguest2015gbwm,kim2020goalprog}; institutional mandates formalize tracking-error and beta constraints \citep{roll1992te,jorion2003te}; the transfer coefficient prices the alpha destroyed by constraints \citep{clarke2002tc}; pure-factor portfolios solve ``exposure one to X, zero to all else'' \citep{clarke2017purefactor}; returns-based style analysis provides the canonical identity measurement \citep{sharpe1992style}; the Norwegian-commission evaluation established that apparent alpha is largely compensated factor exposure \citep{ang2009norway}; regime models make market state a decision variable \citep{ang2002regime,shu2024jump}; multi-objective evolutionary methods trace Pareto fronts \citep{deb2002nsga2,anagnostopoulos2009moea}. \emph{Every clause of our language has an isolated mathematical prototype; no work unifies them into a composable specification language driving synthesis.}

\paragraph{Evaluation credibility.}
Backtest overfitting is quantified by PBO/CSCV \citep{bailey2017pbo}; the Deflated Sharpe Ratio corrects selection bias, non-normality, and trial counts \citep{bailey2014dsr}; factor-zoo multiple testing demands thresholds rising with test counts \citep{harvey2016crosssection,harvey2015backtesting}; reality-check and SPA tests control data snooping pool-wide \citep{white2000rc,hansen2005spa}; model selection and search-phase evaluation overfit in AutoML/NAS \citep{cawley2010overfitting,yang2020nasfrustrating,sciuto2020evalnas}; and underspecification motivates multi-property stress certification in place of single-metric validation \citep{damour2022underspecification}. \emph{All correct single scalars; none treats multi-clause specification satisfaction as the statistical object.}

\begin{table}[t]
\caption{Positioning: what starts the process, what is searched, what is sought. OOQI is the only row whose input is a verifiable identity specification and whose output is a satisfaction certificate.}
\label{tab:positioning}
\centering\small
\begin{tabular}{@{}p{3.1cm}p{3.1cm}p{3.2cm}p{3.6cm}@{}}
\toprule
\rowcolor{sand}\textbf{System family} & \textbf{Starts from} & \textbf{Searches} & \textbf{Seeks / outputs} \\
\midrule
CASH / AutoML / NAS \citep{thornton2013autoweka,liu2019darts} & dataset $+$ metric & preprocess $\times$ model, network cells & scalar-optimal pipeline \\
Quant factor mining \citep{zhang2020autoalpha,yu2023alphagen} & data $+$ IC fitness & one stage (factors) & scalar-optimal factor set \\
Qlib / Quant 4.0 \citep{yang2020qlib,guo2024quant4} & human workflow & \emph{none} (modular) & configured pipeline \\
R\&D-Agent-Quant \citep{li2025rdagentquant} & data $+$ backtest metrics & factor $\times$ model co-evolution & metric-improving hypotheses \\
Alpha-GPT 1/2 \citep{wang2023alphagpt,yuan2024alphagpt2} & trading \emph{idea} (NL) & alpha expressions & idea-consistent alphas \\
Option-strategy DSL \citep{luo2026oql} & option intent (DSL) & leg structures & executable option strategy \\
Goals-based WM \citep{das2018gbwm} & wealth targets & static allocations & goal-achievement probability \\
\midrule
\rowcolor{lightaccent}\textbf{OOQI (this work)} & \textbf{strategy profile specification} & \textbf{full-chain assemblies} & \textbf{spec satisfaction $+$ certificate} \\
\bottomrule
\end{tabular}
\end{table}

\section{Problem Formulation}\label{sec:formal}

\subsection{The design space: modules with typed contracts}

\begin{definition}[Module and assembly space]\label{def:space}
A strategy pipeline comprises $K$ \emph{stages} (in our reference instantiation, $K{=}9$: universe/data, label, feature processing, feature selection, predictor, training regime, post-processing, portfolio construction, risk overlay; App.~\ref{app:space}). Stage $i$ offers a finite set $\M_i$ of \emph{modules}. Each module $m \in \M_i$ declares (i) an \emph{input contract} $\iota(m)$ and \emph{output contract} $o(m)$ over typed channels (panel shape, frequency, exposure structure, weight continuity); and (ii) a \emph{side-effect signature} $\sigma(m) \in \mathbb{R}^{d_s}$ describing its first-order effect channels on evaluation coordinates (e.g., turnover pressure, exposure leakage, capacity footprint). An \emph{assembly} $d = (m_1,\dots,m_K)$ is \emph{legal} if contracts chain compatibly; the \emph{design space} is
\[
\D \;=\; \Big\{ d \in \textstyle\prod_{i=1}^{K} \M_i \;:\; \mathrm{compat}(d) = 1 \Big\}.
\]
\end{definition}

With the module counts of App.~\ref{app:space}, $|\D| \approx 4.4\times 10^{7}$ at the module-family level and $\approx 8.8\times 10^{8}$ once the families' discrete hyperparameter variants are counted (App.~\ref{app:comb} gives the two-level factorization). Three properties preclude exhaustive enumeration and motivate \emph{compilation}: evaluation cost is minutes-to-hours per assembly; legality constraints couple stages (e.g., an exposure-projection post-processor requires a predictor whose scores admit exposure regression); and stage effects interact non-additively (rank labels compound rank preprocessing; warm-start chaining offsets recency weighting).

\begin{definition}[Evaluation functional]\label{def:eval}
Fix data and a backtest protocol. Evaluation maps an assembly to a statistics vector $\theta(d) \in \mathbb{R}^{d_\theta}$ whose coordinates include: information-coefficient family (RankIC mean, ICIR, positive-day share); return--risk coordinates (annualized excess, IR, max drawdown); identity coordinates (style exposure $|\beta|_{\max}$ vs.\ a factor model, correlation with style indices, RBSA residual share); regime-conditional coordinates (conditional excess return and drawdown in down-market states); feasibility coordinates (turnover, cost-sensitivity slope, capacity utilization); and structural coordinates (holdings count, industry deviation).
\end{definition}

\subsection{The specification: a language for strategy identity}

\begin{definition}[Specification]\label{def:spec}
A \emph{strategy profile specification} $\Spec = \{ c_j \}_{j=1}^{J}$ is a set of \emph{clauses} $c_j = (\phi_j, \kappa_j, \pi_j)$ where: $\phi_j : \mathbb{R}^{d_\theta} \to \{0,1\}$ is a measurable, falsifiable predicate over evaluation coordinates (a \emph{threshold predicate} $\phi_j(\theta) = \mathbf{1}[\theta_{j} \ \mathrm{rel}_j\ \tau_j]$ with $\mathrm{rel}_j \in \{\le,\ge\}$); $\kappa_j \in \{\textsc{hard},\textsc{soft}\}$ is the clause's force; and $\pi_j \in \mathbb{N}$ is a priority. Write $d \models c_j$ iff $\phi_j(\theta(d)) = 1$, and $d \models \Spec$ iff $d \models c_j$ for every \textsc{hard} $c_j$. The \emph{satisfaction rate} is
\[
\mathrm{SR}(d, \Spec) \;=\; \frac{1}{J} \sum_{j=1}^{J} \phi_j(\theta(d)),
\]
reported with the clause-wise vector $\big(\phi_j(\theta(d))\big)_j$ (the \emph{satisfaction profile}).
\end{definition}

Clauses compose like building blocks; pairs interact as \emph{synergistic} (joint satisfaction cheaper than separate), \emph{conflicting} (joint satisfaction strictly costlier, up to infeasibility), or \emph{neutral} (App.~\ref{app:algebra} formalizes via the cost functional below). Conflicts are arbitrated by a two-level semantics: \textsc{hard} clauses carve the feasible set $\F(\Spec)$; \textsc{soft} clauses enter a weighted objective or, under lexicographic priority, are satisfied in order $\pi_1 \succ \pi_2 \succ \cdots$ in the manner of goal programming \citep{kim2020goalprog}.

\subsection{Two paradigms, three propositions}

\begin{definition}[Result-oriented vs.\ objective-oriented selection]\label{def:paradigms}
Let $g : \mathbb{R}^{d_\theta} \to \mathbb{R}$ be a scalar score (e.g., in-sample IR). \emph{Result-oriented} selection returns $d^* = \argmax_{d\in\D} g(\theta(d))$. \emph{Objective-oriented} selection returns $d^*_\F = \argmax_{d \in \F(\Spec)} g(\theta(d))$, with $\F(\Spec)$ from Def.~\ref{def:spec}.
\end{definition}

\begin{assumption}[Style premium]\label{ass:premium}
There exist module choices that raise the expected in-sample score by loading return on a priced style factor: formally, writing $X_d = g(\theta(d))$, a subset $\D_S \subset \D$ of style-loading assemblies with common mean $\mathbb{E}[X_d] = \mu_S$ for $d \in \D_S$, and a purity clause $c_{\mathrm{pure}}$ that separates the classes exactly: the style-pure class is $\D_P := \{d \in \D : d \models c_{\mathrm{pure}}\}$ with common mean $\mathbb{E}[X_d] = \mu_P$ for $d \in \D_P$, $\D_S \cap \D_P = \emptyset$, $\D = \D_S \cup \D_P$, and $\mu_S > \mu_P$.
\end{assumption}

\begin{proposition}[Style contamination of result-oriented selection]\label{prop:contamination}
Under Assumption~\ref{ass:premium}, suppose each score estimate $X_d = g(\theta(d))$ has sub-Gaussian noise with variance proxy $s^2$, i.e.\ $\Pr[\pm(X_d - \mathbb{E} X_d) \ge u] \le e^{-u^2/(2s^2)}$ for $u \ge 0$. Write $\Delta = \mu_S - \mu_P > 0$. Then
\[
\Pr\big[ d^* \not\models c_{\mathrm{pure}} \big] \;\ge\; 1 - \big(1 + |\D_P|\big)\,\exp\!\Big(-\frac{\Delta^2}{8s^2}\Big),
\]
which exceeds $1/2$ once $\Delta > 2s\sqrt{2\log\!\big(2(1+|\D_P|)\big)}$ and tends to $1$ as the premium grows or the noise shrinks, for fixed finite classes. Hence result-oriented selection violates identity clauses with high probability exactly when style premia are large relative to estimation noise and to the breadth of the style-pure class---a wide pure class imposes a finite-class penalty of order $s\sqrt{\log |\D_P|}$ before the guarantee bites. No such failure attaches to $d^*_\F$, which satisfies every hard clause by construction.
\end{proposition}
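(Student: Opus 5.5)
We reduce the event $\{d^* \models c_{\mathrm{pure}}\}$ to a union of deviation events and apply the sub-Gaussian tails. Since $\D = \D_S \cup \D_P$ is a disjoint partition and $\D_S$ is nonempty, we fix one style-loading assembly $d_0 \in \D_S$. If the argmax $d^*$ lies in $\D_P$, then some $d \in \D_P$ has $X_d \ge X_{d_0}$. We split this comparison at the midpoint $m = (\mu_S + \mu_P)/2$: either $X_{d_0} \le m$, or $X_d \ge m$ for some $d \in \D_P$. Hence
\[
\{d^* \models c_{\mathrm{pure}}\} \;\subseteq\; \{X_{d_0} - \mu_S \le -\Delta/2\} \,\cup\, \bigcup_{d \in \D_P} \{X_d - \mu_P \ge \Delta/2\}.
\]
Ties at the argmax are harmless, because any tie-breaking rule still requires $X_d \ge X_{d_0}$.

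Next we bound each event with the one-sided tail in the statement at $u = \Delta/2$. Each bound is $\exp(-\Delta^2/(8s^2))$. A union bound over the $1 + |\D_P|$ events needs no independence and gives $\Pr[d^* \models c_{\mathrm{pure}}] \le (1+|\D_P|)\exp(-\Delta^2/(8s^2))$. Taking complements yields the displayed inequality.

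The remaining claims follow directly. For the threshold, we set the right-hand side $> 1/2$, which requires $(1+|\D_P|)e^{-\Delta^2/(8s^2)} < 1/2$. This is $\Delta^2 > 8s^2 \log(2(1+|\D_P|))$, i.e. $\Delta > 2s\sqrt{2\log(2(1+|\D_P|))}$. The limit statement holds because the bound tends to $1$ as $\Delta/s \to \infty$ with $|\D_P|$ fixed and finite. The finite-class penalty of order $s\sqrt{\log|\D_P|}$ is read off from the same threshold. The last sentence holds by Definition~\ref{def:paradigms}: $d^*_\F \in \F(\Spec)$, so $d^*_\F$ satisfies every hard clause. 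This covers $c_{\mathrm{pure}}$ when it is declared \textsc{hard}, and we state that assumption explicitly.

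We expect the only delicate point to be the choice of comparator. A naive bound over $\max_{d\in\D_S}$ would introduce $|\D_S|$ into the bound. Fixing a single $d_0$ avoids this and explains why only $|\D_P|$ appears. We also need to check that the midpoint split uses the lower tail for $d_0$ and the upper tail for $\D_P$. The two-sided ``$\pm$'' hypothesis supplies exactly these two tails.
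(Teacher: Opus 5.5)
Your proof is correct and matches the paper's argument: you fix a single style-loading comparator, split at the midpoint $(\mu_S+\mu_P)/2$, and union-bound the $1+|\D_P|$ one-sided sub-Gaussian tails at $u=\Delta/2$, which needs no independence. Two small additions go beyond what the paper leaves implicit and are legitimate: your explicit handling of ties at the argmax, and your note that the final sentence requires $c_{\mathrm{pure}}$ to be declared \textsc{hard}.
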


\begin{proposition}[Cost of specification]\label{prop:cost}
Define $\mathrm{CoS}(\Spec) = g(\theta(d^*)) - g(\theta(d^*_\F))$. Then $\mathrm{CoS}(\Spec) \ge 0$, with $\mathrm{CoS} = 0$ iff some score argmax is feasible (if several assemblies tie for the argmax, $\mathrm{CoS} = 0$ iff \emph{at least one} of them lies in $\F(\Spec)$). Moreover $\mathrm{CoS}$ is monotone in specification strength: adding hard clauses weakly increases it. Assume further that the specification contains the purity clause, so $\F(\Spec) \subseteq \D_P$, and adopt Assumption~\ref{ass:premium} with sub-Gaussian noise of variance proxy $s^2$. Then, writing $\tilde\mu_P = \max_{d \in \F(\Spec)} \mathbb{E}[X_d]$ for the best expected score of a feasible assembly ($\tilde\mu_P = \mu_P$ under the common-mean clause of Assumption~\ref{ass:premium}),
\[
\mathrm{CoS} \;\le\; \mu_S - \tilde\mu_P + 4s\sqrt{2\log |\D|}
\quad\text{with probability at least } 1 - 2|\D|^{-3}:
\]
the price of identity is bounded by the temptation it forgoes plus an estimation-width term---a direct analogue of the transfer coefficient \citep{clarke2002tc}, computed per specification rather than per portfolio constraint.
\end{proposition}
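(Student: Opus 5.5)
The proof splits into a deterministic part (non-negativity, the zero condition, monotonicity) and a probabilistic part (the high-probability upper bound). Both are read conditionally on the realized scores $X_d = g(\theta(d))$, with $d^*$ and $d^*_\F$ the maximizers from Def.~\ref{def:paradigms}.

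\textbf{Deterministic claims.} Since $\F(\Spec)\subseteq\D$, the maximum over the feasible set is at most the maximum over $\D$. Hence $X_{d^*_\F}\le X_{d^*}$, i.e.\ $\mathrm{CoS}\ge 0$.

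Equality holds iff $\max_{\F(\Spec)} X = \max_{\D} X$. This happens iff some element of $\F(\Spec)$ attains the global maximum, i.e.\ iff at least one member of the argmax set is feasible, which also handles ties.

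For monotonicity, adding a hard clause $c$ gives $\F(\Spec\cup\{c\}) = \F(\Spec)\cap\{d: d\models c\}\subseteq\F(\Spec)$. So the feasible maximum weakly decreases while $X_{d^*}$ is unchanged, and $\mathrm{CoS}$ weakly increases. If the feasible set becomes empty, adopt the convention $\max\emptyset=-\infty$.

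\textbf{Probabilistic bound.} Decompose
\[
\mathrm{CoS} = \big(X_{d^*}-\mu_S\big) + \big(\mu_S-\tilde\mu_P\big) + \big(\tilde\mu_P - X_{d^*_\F}\big).
\]
Let $u = 2s\sqrt{2\log|\D|}$. Define the good event $E$ on which $|X_d-\mathbb{E}X_d| < u$ for every $d\in\D$.

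Each one-sided sub-Gaussian tail is at most $e^{-u^2/(2s^2)} = e^{-4\log|\D|} = |\D|^{-4}$. A union bound over $|\D|$ assemblies and the two signs gives $\Pr[E^c]\le 2|\D|\cdot|\D|^{-4} = 2|\D|^{-3}$, which is exactly the stated probability.

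On $E$ the two noise terms are bounded as follows.
\begin{itemize}
\item First term: $X_{d^*} = \max_d X_d < \max_d \mathbb{E}X_d + u$. Under Assumption~\ref{ass:premium} we have $\max_d\mathbb{E}X_d = \max(\mu_S,\mu_P) = \mu_S$, so $X_{d^*}-\mu_S < u$.
\item Third term: let $\tilde d\in\F(\Spec)$ attain $\tilde\mu_P$. Then $X_{d^*_\F}\ge X_{\tilde d} > \tilde\mu_P - u$, so $\tilde\mu_P - X_{d^*_\F} < u$.
\end{itemize}
Summing gives $\mathrm{CoS} < \mu_S-\tilde\mu_P + 2u = \mu_S-\tilde\mu_P + 4s\sqrt{2\log|\D|}$, as claimed.

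\textbf{Caveats.} The main point needing care is the first term. It uses that $\mu_S$ dominates every expected score, which follows from $\D=\D_S\cup\D_P$ with common means and $\mu_S>\mu_P$. If $\D_S$ is empty, $\mu_S$ is undefined and the statement is vacuous, so I would note the implicit nonemptiness of $\D_S$. I would also require $\F(\Spec)\neq\emptyset$, needed for $d^*_\F$ and $\tilde\mu_P$ to exist, and $|\D|\ge 2$, so that the log is positive.

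Finally, observe that $\F(\Spec)\subseteq\D_P$ gives $\tilde\mu_P=\mu_P$ under the common-mean clause of Assumption~\ref{ass:premium}.
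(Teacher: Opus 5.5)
Your proof is correct and follows the paper's argument. The deterministic claims use the same inclusion $\F(\Spec)\subseteq\D$, and adding hard clauses shrinks the feasible set. The bound uses the same uniform-deviation event at level $t=2s\sqrt{2\log|\D|}$, holding with probability at least $1-2|\D|^{-3}$, on which $X_{d^*}\le\mu_S+t$ and $X_{d^*_\F}\ge\tilde\mu_P-t$. Your write-up is tidier than the paper's, whose intermediate display $\mu_S-\mu_P+s\sqrt{2\log|\D_S|}+s\sqrt{2\log|\D_P|}$ does not follow from that event, whereas you go directly to the $2t$ term that the final bound actually needs.
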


\begin{proposition}[Satisfaction inflation under search]\label{prop:inflation}
Suppose the truth is \emph{null}: no assembly satisfies clause set $\{c_j\}_{j=1}^J$, and each clause's IS estimate is satisfied spuriously with probability $q_j$ independently. After evaluating $N$ assemblies,
\[
\Pr\big[ \exists\, d \in \D_N : d \models \Spec \ \text{(apparently)} \big] \;=\; 1 - \Big(1 - \textstyle\prod_j q_j\Big)^{N},
\]
which approaches $1$ exponentially fast in $N$ even when $\prod_j q_j$ is tiny. Consequently, an observed satisfaction profile from a searched space is upward-biased, and reporting $\mathrm{SR}$ without the \emph{search ledger} $N$ is statistically meaningless; \S\ref{sec:protocol} defines the deflated counterpart $\widehat{\mathrm{SR}}_{\mathrm{def}}$.
\end{proposition}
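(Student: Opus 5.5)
The plan is to reduce the event to independent Bernoulli trials and compute its probability through the complement. Fix the evaluated ledger $\D_N=\{d_1,\dots,d_N\}$. For each assembly $d_i$, let $A_i$ be the event that $d_i$ apparently satisfies $\Spec$. This means every clause's in-sample predicate fires, i.e.\ $\phi_j(\hat\theta(d_i))=1$ for all $j$. Apparent satisfaction is read in the strong sense that all $J$ clauses are spuriously met, so every clause is treated as \textsc{hard}. If only the hard clauses count, as in Def.~\ref{def:spec}, the same argument goes through with the product restricted to the hard indices, and it only gets larger.

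\textbf{Step 1: per-assembly probability.} Under the null, no assembly truly satisfies any clause, so $\phi_j=1$ can occur only spuriously. By hypothesis the clause events are independent with probabilities $q_j$. Hence $\Pr[A_i]=\prod_{j=1}^J q_j=:p$. This uses independence across clauses within an assembly.

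\textbf{Step 2: independence across assemblies and the complement.} The statement's ``independently'' is read as covering both clauses and assemblies. This means the spurious-satisfaction indicators are mutually independent over all pairs $(i,j)$, and that assumption is what the displayed formula needs. The events $A_1,\dots,A_N$ are then independent, so
\[
\Pr\Big[\textstyle\bigcup_i A_i\Big]=1-\prod_{i=1}^N(1-\Pr[A_i])=1-(1-p)^N .
\]
For the ``exponentially fast'' claim, use $(1-p)^N\le e^{-pN}$. The failure probability therefore decays geometrically in $N$ at rate $p$. It drops below any $\varepsilon$ once $N\ge p^{-1}\log(1/\varepsilon)$, however small $p$ is, provided $p>0$.

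\textbf{Step 3: consequences.} The upward bias of a reported profile follows by comparison. The observed maximal satisfaction over the ledger stochastically dominates that of a single pre-registered assembly, whose false-satisfaction probability is only $p$. This gap grows with $N$, so $\mathrm{SR}$ is uninterpretable without the ledger count. That point is a remark, not a further estimate.

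\textbf{Main obstacle.} The only delicate step is Step 2. In a real searched space, assemblies share modules and data, so independence across assemblies fails. I would state explicitly that the equality holds under the independence hypothesis. Without independence, the union bound still gives the upper bound $\Pr[\bigcup_i A_i]\le Np$. That bound alone does not give a lower bound tending to $1$, so the convergence claim genuinely relies on independence, or at least on a mixing condition giving $\Pr[\bigcap_i A_i^c]\le (1-p)^{N_{\mathrm{eff}}}$ for some effective trial count $N_{\mathrm{eff}}$. That effective-count refinement is the one I would flag for the protocol section.
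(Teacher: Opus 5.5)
Your proof is correct and follows the paper's argument. The steps match: per-assembly probability $p=\prod_j q_j$ from clause independence, independence across the $N$ evaluated assemblies, the complement $(1-p)^N$, and $1-(1-p)^N\ge 1-e^{-Np}$ for the exponential onset. Your caveat is also right: the convergence to $1$ genuinely needs cross-assembly independence, since an upper bound such as $Np$, or the paper's $p\le\min_j q_j$ aside, cannot push the probability toward $1$. This matches the paper's later remark that correlated trials reduce the effective trial count.
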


Proofs are given in App.~\ref{app:proofs}. The propositions are elementary by design---their role is to make the paradigm's claims \emph{falsifiable and budgetable}: contamination is not an accusation but a probability; cost is not a regret but a quantity with an upper bound; inflation is not a worry but an exponential law.

\subsection{The inverse-problem reading}\label{sec:inverse}

There is a classical mathematical duality that subsumes the two paradigms of Def.~\ref{def:paradigms}. \emph{Forward problems} infer effects from causes: given an operator and its inputs, solve, and check whether the solution conforms to reality. Result-oriented research is the forward problem of quantitative finance---given assembly $d$ and market realization $\omega$, evaluate $\theta(d,\omega)$ and inspect. \emph{Inverse problems} infer causes from effects: given requirements on the solution, recover the operator. OOQI is exactly the inverse problem of strategy synthesis---given a specification on the return stream's identity, recover assemblies whose evaluation conforms. Inverse problem theory then organizes, under one roof, questions this paper has been asking all along (full formalization in App.~\ref{app:inverse}): Hadamard's triad---\emph{existence, uniqueness, stability}---maps respectively onto \emph{feasibility of the specification} (is $\F(\Spec)$ empty?), \emph{underspecification} (the satisfying set is typically an equivalence class, not a point), and \emph{stability of satisfaction} (Proposition~\ref{prop:inflation} is precisely an instability law); and Tikhonov-style regularization corresponds to our margin-maximizing selection and priority structure, which purchase stability at the price of bias. The market's operator being stochastic and nonstationary, the correct theoretical home is \emph{statistical} inverse problems \citep{kaipio2005statistical}, and rolling re-certification is mandatory. We stress that the analogy is structural, and where it ends (App.~\ref{app:inverse}), it ends honestly.

\section{Optimal Assembly: How to Compose, How to Judge, What Each Clause Costs}\label{sec:optimal}

Definition~\ref{def:paradigms} poses the synthesis problem; this section develops its theory. Three questions a principal inevitably asks next organize the development: \emph{how hard} is optimal assembly (\S\ref{sec:complexity}); \emph{which} member of the satisfying set should be returned, and \emph{how should the many tied solutions be judged} (\S\ref{sec:quality}, \S\ref{sec:lattice}); and \emph{what does each clause cost} in equilibrium (\S\ref{sec:shadow}). A statistical consequence for certification over time closes the section (\S\ref{sec:anytime}). Proofs are collected in App.~\ref{app:proofs}.

\subsection{Composition is combinatorially hard---but not uniformly}\label{sec:complexity}

The design space (Def.~\ref{def:space}) motivates proposal-and-verification because exhaustion is infeasible. We now show the difficulty is \emph{intrinsic}, not merely practical---and then isolate the structural regime in which principled approximation exists.

\begin{proposition}[NP-hardness of satisfaction-driven assembly]\label{prop:nphard}
Deciding whether $\F(\Spec) \neq \emptyset$ is NP-hard in the joint (stages $\times$ clauses) dimension, even when contract legality is trivial (all module pairs compatible) and every clause predicate is evaluable in polynomial time. With polynomial-time predicates, the problem is NP-complete.
\end{proposition}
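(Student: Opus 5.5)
We reduce from 3-SAT, as the contributions list announces. Take an instance with Boolean variables $x_1,\dots,x_n$ and clauses $C_1,\dots,C_m$, each a disjunction of three literals. We build a design space with $K=n$ stages, one per variable. Each stage offers two modules, $\M_i=\{m_i^{0},m_i^{1}\}$, read as $x_i=\textsc{false}$ and $x_i=\textsc{true}$. Every pair of modules is declared compatible, so $\mathrm{compat}\equiv 1$ and $\D=\prod_i\M_i\cong\{0,1\}^n$. This places us in the trivial-legality regime the statement demands.

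The evaluation functional (Def.~\ref{def:eval}) is arbitrary up to its coordinate list, so we choose a synthetic but legitimate $\theta$. For each Boolean clause $C_k$ we let $\theta_k(d)$ count the literals of $C_k$ made true by the assignment that $d$ encodes. This coordinate is computable in $O(1)$ time from $d$. For instance, $\theta$ can be realized as additive side-effect contributions: module $m_i^{b}$ adds $1$ to coordinate $k$ exactly when its choice satisfies a literal of $C_k$. We then set $\Spec=\{c_k\}_{k=1}^m$, where each $c_k$ is a \textsc{hard} threshold predicate $\mathbf{1}[\theta_k\ge 1]$, exactly of the form in Def.~\ref{def:spec}.

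Correctness follows directly. By construction, $d\models\Spec$ if and only if the encoded assignment satisfies every $C_k$. Hence $\F(\Spec)\ne\emptyset$ if and only if the formula is satisfiable. The construction has size $O(n+m)$ and is built in polynomial time, so feasibility is NP-hard. The hardness grows jointly in the number of stages ($n$) and clauses ($m$), which explains the qualifier ``in the joint dimension.'' For fixed $K$ and bounded module counts, $|\D|$ is constant and the problem becomes trivial. For membership in NP, the certificate is an assembly $d$. Checking legality takes polynomially many compatibility lookups, evaluating $\theta(d)$ and each predicate takes polynomial time by hypothesis, and checking the $J$ hard clauses is linear. Together these give NP-completeness.

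The main obstacle is not combinatorial but definitional. We must justify that an evaluation functional which directly encodes literal counts is admissible, and that ``evaluable in polynomial time'' refers to computing $\phi_j(\theta(d))$ from $d$, not to running an actual backtest. I would settle this by stating explicitly that hardness holds already for a restricted family of evaluation maps, namely additive side-effect signatures. Hardness for a subclass implies hardness for the general problem. For membership, the input representation must supply $\theta$ as a polynomial-time oracle or circuit.
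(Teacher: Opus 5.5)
Your reduction is the paper's own: one stage per variable with two modules, all contracts declared compatible, one hard clause per 3-CNF clause that holds iff the assembly contains a literal module of that clause, and guess-and-check for membership in NP. The only difference is that you route each clause through an evaluation coordinate (the count of true literals) tested by the threshold predicate $\theta_k \ge 1$. This keeps the construction within the threshold form of Def.~\ref{def:spec}, whereas the paper's proof lets the predicate read the module set directly; your version is a small but genuine tightening of fit to the definitions.
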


The reduction (App.~\ref{app:proofs}) is from 3-SAT: stages play the role of variables (two modules each), and each specification clause is satisfied iff the assembly contains at least one designated \emph{literal module}; a fully satisfying assembly exists iff the formula is satisfiable. Two readings matter for practice. First, the source of hardness is \emph{interaction}, not size: if every clause's satisfaction decomposes over single stages, feasibility is decidable in $\sum_i |\M_i|$ by per-stage filtering---hardness enters exactly when clauses couple stages, the formal version of the observation in \S\ref{sec:formal} that ``legality constraints couple stages.'' Second, the compiler's proposer is thereby a heuristic oracle whose outputs must be \emph{certified} rather than trusted: OOQI needs soundness only from the verifier, never from the proposer---the same asymmetry that underlies syntax-guided synthesis \citep{alur2013sygus}, where a specification, a grammar (here: typed contracts), and an untrusted synthesizer meet.

\begin{proposition}[A constant-factor regime under conflict-free composition]\label{prop:submodular}
Suppose soft-clause satisfaction is \emph{coverage-structured}: each clause $c_j$ designates a module family $A_j \subseteq \bigcup_i \M_i$ and is satisfied iff the assembly's module set meets $A_j$, with clause weights $w_j \ge 0$. Then the weighted satisfaction $f(d) = \sum_j w_j\, \phi_j(d)$ is a monotone submodular function of the module set, and, under the one-module-per-stage (partition matroid) constraint, (i) a stage-wise greedy pass attains at least $1/2$ of the optimum \citep{nemhauser1978}, and (ii) the multilinear relaxation with pipage rounding attains at least $1 - 1/e \approx 0.63$ \citep{calinescu2011}. Conflicting clause pairs (App.~\ref{app:algebra}) break submodularity; characterizing approximation under supermodular conflict is open (roadmap R5).
\end{proposition}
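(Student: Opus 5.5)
The plan is to identify each assembly $d=(m_1,\dots,m_K)$ with its module set $S(d)=\{m_1,\dots,m_K\}\subseteq U:=\bigcup_i \M_i$. We take the stages' module sets $\M_i$ to be disjoint, tagging modules by stage if necessary. Under this identification, the one-module-per-stage constraint is the partition matroid $\mathcal{I}=\{S\subseteq U: |S\cap \M_i|\le 1\ \forall i\}$, whose bases are exactly the assemblies. We extend $f$ to all subsets by $f(S)=\sum_j w_j\,\mathbf{1}[S\cap A_j\neq\emptyset]$, which is a weighted coverage function.

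The first step is to check that $f$ is monotone and submodular. Both properties hold term by term and survive nonnegative combinations, since $w_j\ge 0$.

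Monotonicity: if $S\subseteq T$ and $S$ meets $A_j$, then $T$ meets $A_j$.

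Submodularity: for $S\subseteq T$ and $x\notin T$, the marginal gain of a single term is
\[
\mathbf{1}[S\cup\{x\}\text{ meets }A_j]-\mathbf{1}[S\text{ meets }A_j]=\mathbf{1}[x\in A_j,\ S\cap A_j=\emptyset],
\]
and this indicator is pointwise at least $\mathbf{1}[x\in A_j,\ T\cap A_j=\emptyset]$. This gives the diminishing-returns inequality. Also $f(\emptyset)=0$.

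Because of monotonicity, the optimum over independent sets can be taken to be a basis, so it is attained by a legal full assembly. This assumes the trivial-legality setting, as in Proposition~\ref{prop:nphard}. If contracts restrict legality, the feasible family is no longer the partition matroid, and I would state the result for the compatible case.

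For (i), I would argue as follows. The stage-wise greedy pass processes stages in any fixed order and, at stage $i$, picks $m\in\M_i$ maximizing the marginal gain. This is exactly the locally greedy algorithm for a partition matroid, so the $1/2$ bound of \citet{nemhauser1978} (Fisher--Nemhauser--Wolsey) applies directly. For self-containment I would include the standard charging argument. Let $O=\{o_1,\dots,o_K\}$ be the optimum and $G_i$ the greedy prefix. Greedy optimality at stage $i$ gives $f(G_i)-f(G_{i-1})\ge f(G_{i-1}\cup\{o_i\})-f(G_{i-1})$. Submodularity then bounds this from below by the marginal of $o_i$ added to $G_K\cup\{o_1,\dots,o_{i-1}\}$. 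Telescoping the sum over $i$ gives $f(G_K)\ge f(O\cup G_K)-f(G_K)\ge f(O)-f(G_K)$, and hence $f(G_K)\ge \tfrac12 f(O)$.

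For (ii), I would invoke \citet{calinescu2011} for monotone submodular maximization over any matroid. The continuous greedy algorithm on the multilinear extension $F$ produces a point $x$ in the matroid polytope with $F(x)\ge(1-1/e)\,\mathrm{OPT}$, and pipage rounding (or swap rounding) returns a basis $B$ with $f(B)\ge F(x)$. For coverage functions $F$ has the closed form
\[
F(x)=\sum_j w_j\Big(1-\prod_{m\in A_j}(1-x_m)\Big),
\]
so no sampling is needed. The final remark of the statement, that conflicting pairs break submodularity, only requires an example. Take a conflict term that penalizes the joint presence of two modules $a,b$, i.e.\ a term $-\lambda\,\mathbf{1}[a,b\in S]$. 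This term is supermodular, and it makes $f$ non-monotone as well, which takes it outside both theorems.

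The main obstacle is definitional rather than technical: making precise that the paper's $\phi_j(d)$ depends only on the module set, and that the partition matroid is the right feasibility structure. Once this modeling step is fixed, the two approximation guarantees are direct citations.
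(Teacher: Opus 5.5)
Your core argument is correct and follows the paper's own route: identify an assembly with its module set, note that one-module-per-stage is a partition matroid, check through the marginal gain $\sum_{j:\,A_j\ni m,\ A_j\cap S=\emptyset} w_j$ that the weighted coverage function is monotone submodular, and cite Fisher--Nemhauser--Wolsey for (i) and Calinescu et al.\ for (ii). Your additions are all sound. These are the disjoint stage tags, the caveat that the matroid structure needs trivial legality (the paper's proof assumes this without saying so), the self-contained charging proof of the $1/2$ bound, and the closed-form multilinear extension.

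The step that fails is your justification of the final sentence. The penalty $g(S)=-\lambda\,\mathbf{1}[a,b\in S]$ with $\lambda>0$ is not supermodular. The marginal of $a$ is $0$ at $\emptyset$ and $-\lambda$ at $\{b\}$, so returns diminish: $g$ is submodular, being the negative of the supermodular AND-function. Hence $f+g$ is still submodular. Your example destroys monotonicity, and possibly nonnegativity, which already makes a multiplicative ratio meaningless. That loss of monotonicity, not any loss of submodularity, is what places it outside both cited theorems.

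To exhibit an actual failure of submodularity you need a complementarity term. An example is a clause satisfied only when both $a$ and $b$ are present, $w\,\mathbf{1}[a,b\in S]$: the marginal of $a$ rises from $0$ to $w$ once $b$ is added.

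The paper's own one-line justification (``break monotonicity and hence submodularity'') makes the same logical slip, since the two properties are independent. The defensible version of the final remark is: negative (conflict) cross-terms break monotonicity, complementary cross-terms break submodularity, and either one voids the cited guarantees.
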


The practical content is a license and a warning. The license: when mechanisms compose without conflict, the compiler's greedy translation--assembly loop is not ad hoc but carries a constant-factor guarantee---certification then audits a search that was already principled. The warning: where conflicts dominate (the IC$\,\ominus\,$turnover pair of Table~\ref{tab:interaction} being the canonical instance), no such promise exists, and the certificate's deflation machinery is the only honest output. The theory thus tells the compiler \emph{when it may be fast} and \emph{when it must be humble}.

\subsection{Many solutions: quality measures for the satisfying set}\label{sec:quality}

Remark~\ref{prop:uniqueness} (App.~\ref{app:inverse}) records that $\F(\Spec)$ is an equivalence class, not a point. Selection within the class requires quality measures. We define three at the level of an assembly and one at the level of the set; all are computable from quantities the compiler already produces.

\begin{definition}[Margin profile and certified radius]\label{def:margin}
For assembly $d$ and clause $c_j$, let $m_j(d)$ be the signed distance of $\hat\theta_j(d)$ inside the feasible half-space (App.~\ref{app:inverse}). The \emph{margin profile} is the vector $m(d) = (m_j(d))_j$; the \emph{certified radius} $\rho(d) = \min_j m_j(d) / s_j$ normalizes by estimation scale: by Proposition~\ref{prop:stability}, $d$ retains full satisfaction under any evaluation perturbation with $|\varepsilon_j| < m_j(d)$ in every coordinate. Bare satisfaction is a coin flip; deep satisfaction is a property.
\end{definition}

\begin{definition}[Pareto-optimal assembly]\label{def:pareto}
$d \in \F(\Spec)$ is \emph{Pareto-optimal} for $(g, m)$ if no $d' \in \F(\Spec)$ weakly dominates it in score and in every hard-clause margin with at least one strict improvement. The certificate should return a Pareto \emph{front}, not a point: the principal's risk appetite selects along it, and assemblies that tie in-sample are separated exactly by their margins---the stress-certification logic of \citep{damour2022underspecification} internalized into the solution concept.
\end{definition}

\begin{definition}[Robust satisfaction and stress depth]\label{def:robust}
Let the evaluation vary over an uncertainty set $\mathcal{U}$ of regime perturbations (drift shifts, volatility scaling, cost multipliers---the stress suite of App.~\ref{app:protocol}). The \emph{robustly satisfying set} is $\F_{\mathcal{U}}(\Spec) = \{ d \in \D : \mathscr{F}(d,\omega) \in \mathcal{T}\ \text{for all}\ \omega \in \mathcal{U} \}$---the robust-optimization counterpart of satisfaction \citep{bental2009robust}. $\F_{\mathcal{U}}$ is monotone decreasing in $\mathcal{U}$; the largest perturbation level at which $d$ survives is its \emph{stress depth}, a per-clause scalar the certificate can quote next to each margin.
\end{definition}

\paragraph{Set-level quality: the specification archive.} When the deliverable is a menu rather than a single assembly, the satisfying set itself needs a quality measure. Borrowing quality-diversity \citep{pugh2016qd}: take the side-effect signature $\sigma(d)$ (Def.~\ref{def:space}) as the behavior descriptor, partition signature space into mechanism niches, and retain the best-margin certified assembly per niche. The archive's \emph{coverage} (number of certified niches) times its \emph{depth} (margins) is the quality score of the satisfying set---and it has an economic reading: niche-diverse certified assemblies fail through \emph{different mechanisms} (crowding, regime change, cost shocks), so a principal allocating across the archive diversifies mechanism risk, which return-stream correlation alone cannot see. The satisfying set, properly measured, is a portfolio.

\subsection{The specification lattice and identity inference}\label{sec:lattice}

\begin{definition}[Strength order and the identity map]\label{def:lattice}
Order specifications by entailment: $\Spec_1 \preceq \Spec_2$ iff $\F(\Spec_2) \subseteq \F(\Spec_1)$ (``$\Spec_2$ is stronger''); quotienting by semantic equivalence yields the \emph{specification lattice}, with join given by clause union---$\F(\Spec_1 \cup \Spec_2) = \F(\Spec_1) \cap \F(\Spec_2)$ is the least upper bound of the pair. For $X \subseteq \D$, define the \emph{identity} $\mathcal{I}(X)$ as the strongest specification satisfied by every $d \in X$.
\end{definition}

\begin{proposition}[Galois structure]\label{prop:galois}
$(\F, \mathcal{I})$ is an antitone Galois connection between the specification lattice and the assembly powerset \citep{cousot1977}:
\[
X \subseteq \F(\Spec) \;\Longleftrightarrow\; \Spec \preceq \mathcal{I}(X).
\]
Consequences: (i) \emph{completion}---$\Spec \preceq \mathcal{I}(\F(\Spec))$: every specification entails additional clauses that all its satisfying assemblies meet \emph{for free}, certifiable at zero marginal search cost; (ii) \emph{identity inference}---$\mathcal{I}(\{d\})$ is the strategy's \emph{inferred identity}: the compiler run in reverse answers ``what is this strategy?'' from its evaluation vector, the inverse of the inverse problem of \S\ref{sec:inverse}; (iii) \emph{equivalence}---two specifications denote the same feasible set iff their completions coincide, making semantic equivalence decidable through satisfaction rather than syntax.
\end{proposition}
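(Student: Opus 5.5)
The plan is to make the definition of $\mathcal{I}$ precise, then derive the Galois equivalence directly from it, and finally read off (i)--(iii) by the standard closure-operator argument. Fix the clause universe $\mathcal{C}$: all threshold predicates admissible under Def.~\ref{def:spec}, taken as hard. Specifications are subsets of $\mathcal{C}$ modulo semantic equivalence. For $X \subseteq \D$, set
\[
\mathcal{I}(X) = \{ c \in \mathcal{C} : d \models c \ \text{for all } d \in X \}.
\]
This is the clause union of everything satisfied by all of $X$. By the join rule of Def.~\ref{def:lattice}, $\F(\mathcal{I}(X)) = \bigcap_{c \in \mathcal{I}(X)} \F(\{c\})$. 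It is therefore the strongest specification satisfied on $X$, which justifies the phrase ``strongest'' in the definition. One caveat: if $\mathcal{C}$ is infinite, the intersection should still be read as a conjunction. This is harmless because satisfaction is clause-wise.

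The core equivalence comes next. $X \subseteq \F(\Spec)$ says that every $d \in X$ satisfies every clause of $\Spec$. That is the same as $\Spec \subseteq \mathcal{I}(X)$. By the join rule, $\F(\mathcal{I}(X)) = \F(\Spec \cup \mathcal{I}(X)) \subseteq \F(\Spec)$, which is exactly $\Spec \preceq \mathcal{I}(X)$.

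The converse is the delicate step, and I expect it to be the main obstacle. From $\F(\mathcal{I}(X)) \subseteq \F(\Spec)$ we need $X \subseteq \F(\Spec)$. The missing link is $X \subseteq \F(\mathcal{I}(X))$, which holds by the construction of $\mathcal{I}$, since every $d \in X$ meets every clause in $\mathcal{I}(X)$. Chaining the two inclusions gives the result. Well-definedness on equivalence classes also needs checking: $\preceq$ is defined through $\F$ alone, and $\mathcal{I}(X)$ is an honest subset of $\mathcal{C}$, so both sides respect the quotient. Antitonicity of both maps is then immediate, because $\F$ reverses the lattice order by definition and $\mathcal{I}$ shrinks as $X$ grows.

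The consequences are standard Galois-connection algebra. For (i), substitute $X = \F(\Spec)$ into the equivalence. The left side is trivially true, so $\Spec \preceq \mathcal{I}(\F(\Spec))$. Any clause in $\mathcal{I}(\F(\Spec)) \setminus \Spec$ is met by all feasible assemblies, so adding it leaves $\F$ unchanged and costs no further search. Here I would note that $\F(\mathcal{I}(\F(\Spec))) = \F(\Spec)$, the usual $\F\mathcal{I}\F = \F$ identity, which follows from the two unit inequalities. For (ii), $\mathcal{I}(\{d\})$ is just the set of clauses that $\theta(d)$ satisfies, so it is computable clause-by-clause from the evaluation vector. For (iii), $\F(\Spec_1) = \F(\Spec_2)$ forces equal completions, since $\mathcal{I}\circ\F$ depends only on $\F$. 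Conversely, equal completions give $\F(\Spec_1) = \F\mathcal{I}\F(\Spec_1) = \F\mathcal{I}\F(\Spec_2) = \F(\Spec_2)$.
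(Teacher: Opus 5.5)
Your proof is correct and takes essentially the paper's route: the paper's chain ``$X \subseteq \F(\Spec)$ iff every $d \in X$ satisfies $\Spec$ iff $\Spec$ is weaker than the strongest specification satisfied by $X$'' is exactly your two directions. Your explicit common-clause construction of $\mathcal{I}(X)$ and the unit inclusion $X \subseteq \F(\mathcal{I}(X))$ supply what the paper leaves implicit, and (i)--(iii) follow as the same standard closure facts. One small tightening, which the paper also omits: specifications in Def.~\ref{def:spec} are finite clause sets, so use finiteness of $\D$ to replace $\mathcal{I}(X)$ by finitely many of its clauses (one excluding each $e \notin \F(\mathcal{I}(X))$). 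This leaves $\F$ unchanged, so $\mathcal{I}(X)$ really is an element of the specification lattice; your remark that satisfaction is clause-wise makes the infinite conjunction well defined but does not by itself give this.
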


Consequence (i) has immediate engineering value: after certifying $\Spec$, the certifier may report the \emph{bonus clauses} $\mathcal{I}(\F(\Spec)) \setminus \Spec$---identity the principal did not order but provably receives. Consequence (ii) formalizes what due diligence has always done informally: reading a return stream and naming what it is.

\subsection{The price of a clause: Lagrangian shadow prices}\label{sec:shadow}

The interaction algebra (App.~\ref{app:algebra}) measures conflict through finite CoS differences. Constrained-optimization theory supplies the marginal, equilibrium version. Write the synthesis problem with hard clauses as constraints $h_j(d) = \theta_j(d) - \tau_j \ge 0$ (oriented so feasibility is non-negativity), and consider the smoothed/relaxed formulation with Lagrangian $\mathcal{L}(d, \lambda) = g(\theta(d)) + \sum_j \lambda_j\, h_j(d)$.

\begin{proposition}[Shadow price of a clause]\label{prop:shadow}
Under standard regularity (local uniqueness of the optimum; linear independence of binding-constraint gradients), the envelope theorem \citep{milgrom2002envelope} applied to the value function gives
\[
\frac{\partial\, \mathrm{CoS}(\Spec)}{\partial \tau_j} \;=\; \lambda^*_j \;\ge\; 0:
\]
the optimal dual variable is the marginal score price of tightening clause $j$. The vector $(\lambda^*_j)_j$ is thus a \emph{clause-level transfer coefficient} in the sense of \citep{clarke2002tc}---the price list of identity---estimable in practice from the compiler's search history by local re-solves of the certified optimum.
\end{proposition}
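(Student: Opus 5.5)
The plan is to recast the certified optimum as a parametric nonlinear program in the threshold vector $\tau=(\tau_j)_j$ and then apply the envelope theorem to its value function. Work in the smoothed/relaxed formulation. Fix the unconstrained score maximum $g(\theta(d^*))$, which does not depend on $\tau$. Define
\[
V(\tau) \;=\; \max_{d}\ \big\{\, g(\theta(d)) : h_j(d;\tau_j) = \theta_j(d) - \tau_j \ge 0 \ \forall j \,\big\},
\]
so that $\mathrm{CoS}(\Spec) = g(\theta(d^*)) - V(\tau)$. The claim then reduces to showing $\partial V/\partial \tau_j = -\lambda^*_j$ with $\lambda^*_j \ge 0$.

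\textbf{Step 1: multipliers.} At the locally unique optimum $d^*_\F(\tau)$, the linear independence of binding-constraint gradients (LICQ) guarantees KKT multipliers $\lambda^*$ that are unique. They satisfy stationarity $\nabla_d \mathcal{L}(d^*_\F,\lambda^*)=0$, dual feasibility $\lambda^*\ge 0$, and complementary slackness $\lambda^*_j h_j = 0$. Nonnegativity therefore comes directly from KKT, because every constraint has the form $\ge 0$ and we are maximizing.

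\textbf{Step 2: differentiability of the optimum.} I would use the stated regularity together with strict complementarity and a second-order sufficient condition. Under these, the implicit function theorem applied to the KKT system gives that $\tau \mapsto (d^*_\F(\tau),\lambda^*(\tau))$ is locally $C^1$. I would state these added conditions explicitly as part of ``standard regularity.''

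\textbf{Step 3: envelope theorem.} I would apply the envelope theorem \citep{milgrom2002envelope} in the form $V'(\tau) = \partial_\tau \mathcal{L}(d,\lambda;\tau)$, evaluated at $(d^*_\F,\lambda^*)$. Since $\partial h_j/\partial\tau_j = -1$, this gives $\partial V/\partial\tau_j = -\lambda^*_j$, and hence $\partial\,\mathrm{CoS}/\partial\tau_j = \lambda^*_j$. For non-binding clauses, $\lambda^*_j=0$ and a small change in $\tau_j$ leaves the optimum unchanged, which is consistent with the formula. The result also agrees with the monotonicity in Proposition~\ref{prop:cost}: tightening a clause weakly raises CoS.

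The clause orientation needs a short remark. For $\le$-type clauses the sign of $h_j$ is flipped, $h_j = \tau_j - \theta_j$. With this orientation the derivative with respect to the \emph{tightening} direction is again $\lambda^*_j\ge 0$.

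\textbf{Step 4: estimability.} The final claim, that $\lambda^*$ can be estimated from local re-solves, follows from the derivative identity. A finite difference $[\mathrm{CoS}(\tau_j+\delta)-\mathrm{CoS}(\tau_j)]/\delta$ computed by re-solving near the certified optimum converges to $\lambda^*_j$ as $\delta\to 0$.

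\textbf{Main obstacle.} The main difficulty is that the underlying space $\D$ is discrete, so ``derivative'' only makes sense after the smoothing/relaxation of $d$ to a continuous parameter. The proposition's regularity hypotheses carry this burden. I would state explicitly that the result holds for the relaxed problem. In the discrete case the dual variable gives only a supergradient or one-sided bound: by weak duality, $V(\tau') \le V(\tau) - \lambda^{*\top}(\tau'-\tau)$ holds whenever the relaxation is concave. That inequality is the fallback statement.
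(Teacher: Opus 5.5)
Your proposal is correct and follows essentially the paper's proof. You write $\mathrm{CoS}$ as a $\tau$-independent term minus the relaxed value function $V(\tau)$, obtain unique nonnegative KKT multipliers from LICQ, and apply the envelope theorem with $\partial h_j/\partial\tau_j=-1$ to get $\partial V/\partial\tau_j=-\lambda^*_j$. Your added strict-complementarity and second-order conditions, which make the optimizer $C^1$ via the implicit function theorem, are a sufficient but somewhat stronger substitute for the paper's direct appeal to the Milgrom--Segal envelope theorem, and confining the result to the relaxed problem matches the paper's own ``smoothed/relaxed formulation.''
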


\begin{remark}[Conflict as cross-price]\label{rem:crossprice}
In the smooth relaxation, pairwise conflict (Def.~\ref{def:interaction}) corresponds to positive cross-partials: tightening clause $j$ raises the shadow price of a conflicting neighbor, $\partial \lambda^*_i / \partial \tau_j > 0$ for $c_i \ominus c_j$. This reading is local and regularity-dependent; in the discrete design space the compiler estimates it by sensitivity of the certified optimum rather than by duality. We state it as a remark, not a proposition, deliberately.
\end{remark}

Two interpretations close the loop with neighboring fields. \emph{Economic}: certificates can quote not just $\mathrm{CoS}$ but the per-clause price list---which identity is expensive, and whose price a conflicting neighbor is driving up (Remark~\ref{rem:crossprice}); budgeting identity becomes as routine as budgeting tracking error. \emph{AI-theoretic}: the compiler loop is constrained optimization in the sense of constrained MDPs \citep{altman1999cmdp}, and Proposition~\ref{prop:inflation} is precisely Goodhart's law \citep{manheim2019goodhart} formalized for satisfaction statistics---any satisfaction measure used as a search target without ledger accounting inflates, so the deflation protocol of \S\ref{sec:protocol} is the Goodhart-resistant counterpart. In the alignment vocabulary: result-oriented research optimizes a proxy; OOQI specifies the objective and polices the proxy--objective gap statistically.

\subsection{Certification over time: anytime-valid satisfaction}\label{sec:anytime}

Rolling re-certification (App.~\ref{app:inverse}) tests the same clauses repeatedly as new data arrive, and classical $p$-value machinery inflates under such continuous monitoring. The e-value literature \citep{ramdas2023savi,howard2021cs} supplies the instrument built for exactly this regime.

\begin{proposition}[Anytime-valid clause certification]\label{prop:evalue}
For hard clause $c_j$, let $\{ E_t^{(j)} \}_{t \ge 0}$ be a nonnegative supermartingale under the null that clause $j$ fails (an \emph{e-process}; for example, a mixture likelihood ratio computed on weekly margin observations). Then, by Ville's inequality, under the null
\[
\Pr\big[ \exists\, t \ge 0 :\; E_t^{(j)} \ge 1/\alpha \big] \;\le\; \alpha:
\]
a certificate that affirms clause $j$ when $E_t^{(j)}$ crosses $1/\alpha$ is valid at level $\alpha$ \emph{simultaneously over all future inspection times}---continuous re-certification pays none of the multiplicity penalty that rolling $p$-based tests incur \citep{ramdas2023savi}.
\end{proposition}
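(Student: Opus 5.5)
The plan is to obtain Proposition~\ref{prop:evalue} as a direct application of Ville's maximal inequality for nonnegative supermartingales. We first make the setting precise. Fix a hard clause $c_j$ and a filtration $(\mathcal{G}_t)_{t\ge 0}$ generated by the weekly margin observations. The null $H_0^{(j)}$ is the (possibly composite) statement that clause $j$ fails. Under every distribution $P\in H_0^{(j)}$, the process $E_t^{(j)}$ is adapted, nonnegative, and satisfies $\mathbb{E}_P[E_{t+1}^{(j)}\mid\mathcal{G}_t]\le E_t^{(j)}$ with $E_0^{(j)}\le 1$; the normalization $E_0=1$ is part of the definition of an e-process and must be stated. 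For a composite null we use the standard convention that $E$ is a supermartingale under each $P$ in the null, or is dominated by one. The bound then holds pointwise in $P$, so it suffices to fix a single $P$.

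\textbf{Step 1 (finite horizon).} For fixed $T$, define the stopping time $\tau=\inf\{t\le T: E_t\ge 1/\alpha\}\wedge T$. By optional stopping for bounded stopping times of a supermartingale, $\mathbb{E}[E_\tau]\le E_0\le 1$. On the event $A_T=\{\max_{t\le T}E_t\ge 1/\alpha\}$ we have $E_\tau\ge 1/\alpha$. Nonnegativity lets us drop the complement, so
\[
\frac{1}{\alpha}\Pr[A_T]\le \mathbb{E}[E_\tau\mathbf{1}_{A_T}]\le \mathbb{E}[E_\tau]\le 1 .
\]
Hence $\Pr[A_T]\le\alpha$.

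\textbf{Step 2 (all times).} The events $A_T$ increase to $\{\exists t: E_t\ge 1/\alpha\}$. Continuity from below of probability gives the displayed bound $\le\alpha$ uniformly over all $t\ge 0$.

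\textbf{Step 3 (validity at arbitrary stopping times).} The certificate affirms clause $j$ at the first crossing, or at any data-dependent inspection time $\sigma$. The event ``wrongly affirm at $\sigma$'' is contained in the event of Step~2. Therefore the type-I error is at most $\alpha$ regardless of how, or how often, the auditor inspects. This inclusion is exactly the absence of a multiplicity penalty, in contrast to a union bound over inspection times for $p$-values.

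I would also check that the example given, a mixture likelihood ratio, qualifies. For each alternative $Q_\eta$ and null $P$, the process $\prod_{s\le t} q_\eta(x_s\mid\cdot)/p(x_s\mid\cdot)$ is a nonnegative $P$-martingale with mean $1$. Mixing over $\eta$ with a fixed prior preserves this, by Fubini applied to the conditional expectation. For a composite null defined by a margin inequality (for instance mean margin $\le 0$), I would replace the likelihood ratio with a product of factors $1+\lambda_s m_s$, where the bet $\lambda_s\ge 0$ is predictable and bounded so that each factor is nonnegative. Each factor then has conditional mean $\le 1$ under any null distribution.

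The main obstacle is not Ville's inequality itself. It is stating the null correctly when clause failure is composite and the weekly margins are dependent and nonstationary. Supermartingality must hold conditionally on the past under every null law, not merely marginally. I would handle this by defining the null as $\mathbb{E}[m_{t+1}\mid\mathcal{G}_t]\le 0$ for all $t$ (``the clause fails in conditional mean''). Under that definition the betting construction above is a valid e-process without any independence assumption.
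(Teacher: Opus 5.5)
Your proposal is correct and follows the paper's route. The paper invokes Ville's maximal inequality for a nonnegative supermartingale with $\mathbb{E}[E_0]\le 1$ (the normalization you rightly insist on) and notes that the crossing event covers data-dependent inspection times; your Steps~1--2 reprove that inequality (optional stopping at the truncated first-crossing time, then continuity from below), and your Step~3 matches the paper's stopping-time remark. Your treatment of the example is more careful than the paper's: a mixture likelihood ratio against a single $P_0$ is guaranteed to be a supermartingale under every law in the composite null ``mean margin $\le 0$'' only within a parametric model (e.g.\ a one-sided exponential family). Your predictable-bet product $\prod_s(1+\lambda_s m_s)$ works under $\mathbb{E}[m_{t+1}\mid\mathcal{G}_t]\le 0$, provided the margins are bounded below so that the factors stay nonnegative.
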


The operational reading upgrades the certificate from a document to a process: each certified clause ships with its e-process; certification is \emph{alive} while $E_t^{(j)} \ge 1/\alpha$ and \emph{lapses} when it falls below. Alpha decay---a coupling classical theory does not model (App.~\ref{app:inverse})---thereby becomes \emph{certificate expiry}, a first-class, budgetable event rather than a silent drift of the truth. Nonparametric confidence sequences for the margins themselves \citep{howard2021cs} give the complementary estimation statement.

\section{The OOQI Framework}\label{sec:framework}

OOQI comprises three layers (Fig.~\ref{fig:arch}): a \textbf{demand layer} (the specification language of Def.~\ref{def:spec}), a \textbf{supply layer} (the design space of Def.~\ref{def:space}), and a \textbf{compiler} mediating them. We describe the compiler's three stages, then the arbitration of conflicting objectives.

\begin{figure}[t]
\centering
\begin{tikzpicture}[
  font=\small,
  layer/.style={draw=ruleline,fill=sand,rounded corners=1.5mm,minimum width=0.94\textwidth,minimum height=1.15cm,align=center},
  arr/.style={-{Stealth[length=2.6mm]},line width=0.9pt,accent},
  darr/.style={-{Stealth[length=2.6mm]},line width=0.9pt,warmgray,dashed}
]
\node[layer] (L1) at (0,0) {\textbf{\color{accent}Demand layer}: strategy profile specification $\Spec$\\
{\color{warmgray} 8 families $\cdot$ hard/soft clauses $\cdot$ priorities $\cdot$ interaction algebra (App.~\ref{app:spec}, \ref{app:algebra})}};
\node[layer] (L2) at (0,-2.1) {\textbf{\color{accent}Compiler}: translate $\rightarrow$ assemble $\rightarrow$ certify\\
{\color{warmgray} clause-indexed feedback $\cdot$ attribution $\cdot$ satisfaction certificate with ledger \& deflation}};
\node[layer] (L3) at (0,-4.2) {\textbf{\color{accent}Supply layer}: contracted design space $\D$\\
{\color{warmgray} 9 stages $\cdot$ typed module contracts $\cdot$ side-effect signatures $\cdot$ $|\D| \approx 8.8\times10^8$ (App.~\ref{app:space})}};
\draw[arr] (L1.south) -- (L2.north) node[midway,right=1pt,color=warmgray,font=\footnotesize]{specification};
\draw[arr] (L2.south) -- (L3.north) node[midway,right=1pt,color=warmgray,font=\footnotesize]{assembly directives};
\draw[darr] ([xshift=-3.2cm]L3.north) .. controls +(-0.5,0.7) and +(-0.5,-0.7) .. ([xshift=-3.2cm]L1.south)
  node[midway,left=1pt,color=warmgray,align=center,font=\footnotesize]{backtest evidence\\satisfaction feedback};
\end{tikzpicture}
\caption{The OOQI stack. Specifications flow down; assemblies are proposed and verified in the contracted design space; satisfaction evidence flows back for attribution and iteration.}
\label{fig:arch}
\end{figure}

\subsection{Compiler stage I: translation (from specification to stage constraints)}

Each clause is compiled into (i) constraints on stage-module admissibility, (ii) structural obligations, and (iii) evaluation obligations. Examples: a \emph{style-purity} clause prunes non-neutralizing label modules (stage 2), mandates an exposure-projection or residualization post-processor (stage 7), and registers an attribution regression in evaluation; a \emph{turnover-budget} clause admits only smoothing/buffer post-processors and aim-portfolio-style portfolio modules \citep{garleanu2013dynamic}; a \emph{down-market resilience} clause obliges regime-aware risk overlays \citep{ang2002regime,shu2024jump} and adds regime-conditional coordinates to $\theta$. Translation is where domain knowledge enters: the clause-to-mechanism mapping is curated from the literature prototypes of \S\ref{sec:related} (App.~\ref{app:spec} gives the full table), and an LLM agent operationalizes it by proposing, for each clause, the module set consistent with the prototype's mechanism.

\subsection{Compiler stage II: assembly (constrained proposal and verification)}

Exhaustion of $|\D| \approx 8.8\times10^{8}$ is impossible; assembly is \emph{proposal-and-verification}. A proposer---LLM agent, Bayesian optimizer over the contracted space, or hybrid---emits candidate assemblies that (a) satisfy legality (Def.~\ref{def:space}) and (b) are predicted-feasible under the translated constraints. Each candidate is evaluated on the search split, its satisfaction profile computed, and failures \emph{attributed}: side-effect signatures (Def.~\ref{def:space}) localize the responsible stage (e.g., a turnover-clause failure with an unsmoothed post-processor indicts stage 7), and the attribution re-enters the proposal prior. This loop differs from scalar AutoML in exactly one structural respect: \emph{feedback is clause-indexed, not scalar}---the system learns \emph{which dimension} failed, not merely \emph{that} the score fell.

\subsection{Compiler stage III: certification}

The surviving assembly is re-evaluated on a temporally isolated holdout under the protocol of \S\ref{sec:protocol}, and the framework emits a \emph{satisfaction certificate}: the clause-wise profile $\big(\phi_j\big)_j$, the satisfaction rate with deflation $\widehat{\mathrm{SR}}_{\mathrm{def}}$, the search ledger ($N$ evaluated, search budget, seeds), the cost of specification $\mathrm{CoS}$, and per-clause confidence statements. A specification that cannot be met is reported as \emph{infeasible-with-evidence}---which clauses failed, at what magnitudes, and which relaxations the interaction algebra identifies as cheapest (App.~\ref{app:algebra}). Honest infeasibility is a first-class output, not a failure mode.

\subsection{Arbitrating conflicting objectives}

Requirements interact. Three mechanisms, in order of application: (i) \emph{feasibility first}---\textsc{hard} clauses define $\F(\Spec)$; if $\F = \emptyset$, the compiler returns the minimal-cost relaxation identified by the conflict graph; (ii) \emph{weighted soft optimization}---soft clauses contribute to a composite objective with principal-set weights; (iii) \emph{lexicographic fallback}---where a principal declares priorities, clauses are satisfied in order, accepting zero improvement on lower priorities once higher ones bind \citep{kim2020goalprog,deguest2015gbwm}. The interaction algebra (App.~\ref{app:algebra}) records, per clause pair, whether synergy or conflict was observed empirically across the search---turning folk knowledge (``turnover limits cost IC'') into measured structure.

\section{A Verification Protocol for Specification Satisfaction}\label{sec:protocol}

Proposition~\ref{prop:inflation} implies that satisfaction claims from searched spaces are presumptively inflated. The protocol below treats $\mathrm{SR}$ as a statistical object. Full technical details, including the deflation estimator, appear in App.~\ref{app:protocol}.

\begin{enumerate}[leftmargin=1.7em,itemsep=1.5pt]
\item \textbf{Search ledger.} Every evaluated assembly is logged: the effective trial count $N$ is a first-class reported quantity, exactly as DSR requires $n_{\mathrm{trials}}$ \citep{bailey2014dsr} and multiple-testing thresholds require test counts \citep{harvey2016crosssection}.
\item \textbf{Temporal isolation.} A terminal holdout period is never touched by search (no proposal, no early stopping, no attribution feedback), answering the nested-resampling requirement for unbiased model-selection evaluation \citep{cawley2010overfitting}.
\item \textbf{Random-assembly null.} Following NAS evaluation best practice \citep{yang2020nasfrustrating,sciuto2020evalnas}, $M$ uniformly random legal assemblies are evaluated: the compiler must beat the null not on score but on \emph{feasibility discovery}---finding assemblies in $\F(\Spec)$ at rates exceeding chance.
\item \textbf{Deflated satisfaction.} For each clause, IS satisfaction is corrected for the expected spurious-satisfaction maximum under the ledger $N$ (a clause-wise analogue of the expected-max-Sharpe correction \citep{bailey2014dsr}); the deflated rate $\widehat{\mathrm{SR}}_{\mathrm{def}}$ discounts clauses whose apparent satisfaction is within the null band.
\item \textbf{Cross-validated PBO over specifications.} Adapting CSCV \citep{bailey2017pbo}: within each symmetric split, assemblies are ranked by IS satisfaction; we compute the \emph{rank} of the IS-optimal assembly's satisfaction in the OOS half and report the fraction of splits where this OOS rank falls below the median---the \emph{probability of satisfaction backtest overfitting}---with $\mathrm{PBO} < 0.1$ as deployability guidance (App.~\ref{app:protocol}).
\item \textbf{Stress certification.} Each clause ships with a stress test in the spirit of underspecification certification \citep{damour2022underspecification}: purity clauses are re-examined under rotated factor premia; resilience clauses under synthetic unilateral-decline regimes; cost clauses across a cost-sensitivity curve rather than a point assumption \citep{novymarx2016costs,frazzini2018costs}.
\end{enumerate}

\section{Synthetic Illustration of Decision Logic}\label{sec:demo}

We demonstrate that the two paradigms of Def.~\ref{def:paradigms} \emph{select different objects} on a fully disclosed synthetic market (all parameters in App.~\ref{app:repro}; single fixed seed, no tuning). $N{=}300$ stocks, $T{=}208$ weeks (156 IS / 52 held-out); returns load on market, size, and dividend-style factors plus a mild-momentum$+$quality idiosyncratic alpha; the style premium is \emph{positive in-sample and reverses held-out}---a stylized style-rotation regime. The assembly space varies five binary stage choices (label horizon, signal neutralization, signal smoothing, portfolio width, turnover buffer): $|\D| = 32$ legal pipelines. The specification $\Spec$ comprises four hard clauses: style purity ($|\beta|_{\max} \le 6$ vs.\ size and dividend factors), dividend-index correlation ($|\rho| \le 0.15$), down-market resilience (annualized conditional excess $\ge 15\%$ in the worst market quintile), and a turnover budget ($\le 10\%$/week).\footnote{Coordinate units: the demonstration's $|\beta|_{\max}$ is measured in \emph{unstandardized} weekly regression-beta units of the synthetic DGP, whereas the worked specifications of App.~\ref{app:spec} and App.~\ref{app:neural} use \emph{standardized} betas (per unit of factor volatility, cap $0.05$). The two are different normalizations of the statistics space $\Theta$; per the convention of this paper, every coordinate of $\theta(d)$ must declare its normalization and estimation window (App.~\ref{app:repro}). IR magnitudes are likewise inflated relative to realizable weekly strategies---the DGP's alpha is deliberately strong so the selection gap is readable; the demonstration measures \emph{decision logic}, not achievable performance.}

\begin{figure}[t]
\centering
\includegraphics[width=\textwidth]{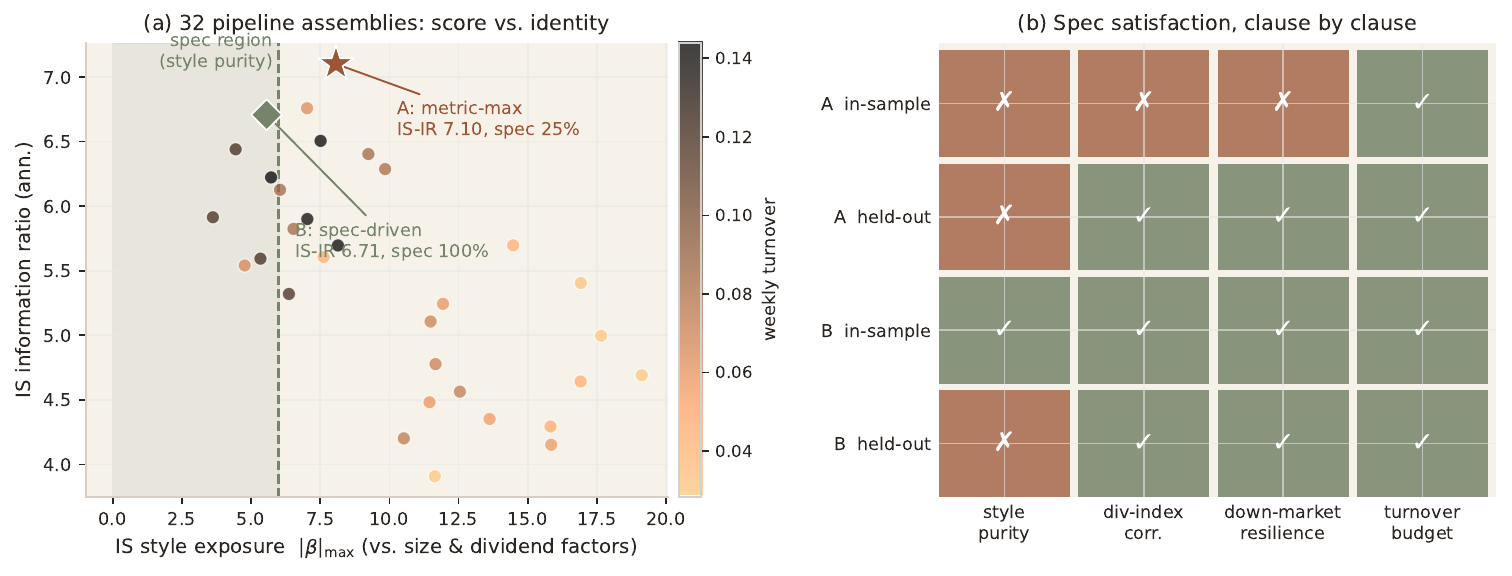}
\caption{(a) All 32 assemblies in the score--identity plane; shading marks the style-purity region. Result-oriented selection picks A (top in-sample IR, outside the spec region); objective-oriented selection picks B (best IR \emph{within} the feasible set). (b) Clause-wise satisfaction: A satisfies only the turnover clause in-sample (25\%); B satisfies the full specification (100\%), with one held-out clause marginal---the stochasticity our protocol (\S\ref{sec:protocol}) is built to price.}
\label{fig:demo}
\end{figure}

\begin{table}[t]
\caption{Result-oriented (A) vs.\ objective-oriented (B) selection on the synthetic market. Satisfaction rate (SR) is clause-wise out of 4. The cost of specification $\mathrm{CoS} = 7.10 - 6.71 = 0.39$ IR points ($5.5\%$) purchases full in-sample identity compliance---Propositions \ref{prop:contamination} and \ref{prop:cost} in miniature.}
\label{tab:demo}
\centering\small
\begin{tabular}{@{}lcccccc@{}}
\toprule
\rowcolor{sand} & \multicolumn{3}{c}{\textbf{In-sample (search)} } & \multicolumn{3}{c}{\textbf{Held-out (isolated)}} \\
\cmidrule(lr){2-4}\cmidrule(lr){5-7}
 & IR & SR & style $|\beta|_{\max}$ & IR & SR & style $|\beta|_{\max}$ \\
\midrule
A: result-oriented & \textbf{7.10} & 0.25 & 8.07 & 5.29 & 0.75 & 11.70 \\
\rowcolor{lightaccent}B: objective-oriented & 6.71 & \textbf{1.00} & 5.56 & 4.78 & 0.75 & \textbf{6.98} \\
\bottomrule
\end{tabular}
\end{table}

Findings (Fig.~\ref{fig:demo}, Table~\ref{tab:demo}): (i) \emph{The paradigms diverge}: only 2 of 32 assemblies satisfy $\Spec$; the in-sample score leader is not among them, exactly as Proposition~\ref{prop:contamination} predicts---its excess IR is partly rented from style exposure ($|\beta|_{\max} = 8.07$, dividend correlation $-0.21$, down-market conditional excess $12.7\%$). (ii) \emph{Identity has a measurable, bounded price}: B attains the 90th percentile of IR (rank 3 of 32) at $\mathrm{CoS} = 5.5\%$. (iii) \emph{Satisfaction is stochastic out of sample}: held-out, both selections satisfy 3 of 4 clauses---A fails purity more severely ($|\beta|_{\max} 11.70$ vs.\ B's $6.98$---a $1.7\times$ smaller raw exposure for B, and a $5.8\times$ smaller violation over the budget $6.00$: $0.98$ vs.\ $5.70$)---which is precisely why certificates report clause-wise confidence and deflation rather than binary verdicts. We emphasize what the demonstration does \emph{not} claim: no real-data alpha, no production validation; it validates \emph{decision logic}, and it does so with a disclosed data-generating process rather than a curated backtest.

\begin{table}[ht]
\centering\small
\caption{Claims and their backing (condensed; the full audit trail appears in App.~\ref{app:repro}).}
\label{tab:claims}
\begin{tabular}{@{}p{0.52\textwidth}p{0.41\textwidth}@{}}
\toprule
\textbf{Claim} & \textbf{Backed by} \\
\midrule
The two paradigms select different strategies & Prop.~\ref{prop:contamination}; Tables~\ref{tab:demo} and~\ref{tab:seeds} \\
Identity compliance has bounded, measurable cost & Prop.~\ref{prop:cost}; $\mathrm{CoS}=5.5\%$ (seed 7) \\
Apparent satisfaction inflates under search & Prop.~\ref{prop:inflation}; threshold sensitivity (App.~\ref{app:repro}) \\
Satisfaction survives resampling & \emph{partly}: clause-stability audit, App.~\ref{app:repro} (joint retention B $0.050$ vs.\ A $0.018$) \\
The optimal-assembly problem is NP-hard, with a constant-factor escape regime & Props.~\ref{prop:nphard}--\ref{prop:evalue} (\S\ref{sec:optimal}); App.~\ref{app:proofs} \\
Full protocol instantiation (deflated SR, PBO) & \emph{not yet}; roadmap (App.~\ref{app:roadmap}) \\
The compiler runs at $10^8$-assembly scale & \emph{not claimed}; design proposal \\
\bottomrule
\end{tabular}
\end{table}

\section{A Framework Walkthrough: Objects, Loci, and Accountability}\label{sec:walkthrough}

Sections \ref{sec:formal}--\ref{sec:demo} define the machinery on a synthetic stage. This section walks it once end-to-end at the level of a \emph{production-shaped} pipeline, without yet invoking production data: its purpose is to fix the locus of every object---which decisions live in the design space, which in the specification, which in the protocol---before the pre-registered production exhibit (roadmap R1) populates the empirical positions, marked [TBD] throughout. Nothing in this section reports a backtest.

\paragraph{The pipeline as choice points.} A strategy is not a monolithic object but a composition of choice points; each stage offers a small number of categories, and the design space is their legality-filtered product (Def.~\ref{def:space}). Table~\ref{tab:pipeline} instantiates this on a production-shaped equity pipeline. The unification the framework claims is at the level of \emph{form}---$\argmax_{d \in \F(\Spec)} g(\theta(d))$ at every stage and every product thereof---while heterogeneity of objectives is absorbed into an explicitly declared $\Spec$.

\begin{table}[t]
\caption{The pipeline as choice points: stages, category options (illustrative), and the requirement family (App.~\ref{app:spec}) that typically binds each stage. ($\dagger$) Algorithm choice sits outside the nine-stage reference grammar of App.~\ref{app:space}, whose S9 covers capacity/participation; we flag the gap rather than force the mapping.}
\label{tab:pipeline}
\centering\small
\begin{tabular}{@{}p{0.17\textwidth}p{0.45\textwidth}p{0.30\textwidth}@{}}
\toprule
\rowcolor{sand}\textbf{Stage} & \textbf{Category options (illustrative)} & \textbf{Binding family} \\
\midrule
Universe / data & all-market / liquidity-filtered / industry-constrained / listing-age filter & legality boundary \\
Signal layer & momentum / reversal / quality / low-volatility / dividend premium; horizon $h_1, h_2$ & source of $g$ (F1, F2) \\
Pre-processing & neutralization (industry/size/beta) on--off; winsorization scheme & style purity (F3) \\
Signal processing & smoothing window; decay scheme & turnover (F5) \\
Portfolio construction & width 20 / 50 / 100; equal-weight / optimized & structure (F6); regime (F4) \\
Turnover control & buffer band on--off; hard turnover cap & execution (F5) \\
Risk overlay & style cap; down-capture target; correlation constraint & hard clauses (F3, F4) \\
Execution$^{\dagger}$ & algorithm choice; participation-rate cap & execution (F5) \\
\bottomrule
\end{tabular}
\end{table}

\paragraph{A specification, declared before any ledger.} The demand side enters as a clause table: each clause carries a threshold, a hard/soft type, and a one-line commercial rationale (Table~\ref{tab:specwalk}). The discipline that makes this table evidentiary rather than decorative is \emph{priority}: thresholds are taken from the institution's standing risk-governance documents and frozen---with a hash---before any ledger is consulted (the protocol's pre-registration step, App.~\ref{app:protocol}). Thresholds read off the ledger are HARKing \citep{kerr1998harking} with extra steps.

\begin{table}[ht]
\caption{Specification template (anonymized). Thresholds will be scaled and shifted in any public artifact; rationales are reproduced verbatim---the demand layer is the framework's asset.}
\label{tab:specwalk}
\centering\small
\begin{tabular}{@{}p{0.20\textwidth}p{0.10\textwidth}p{0.08\textwidth}p{0.52\textwidth}@{}}
\toprule
\rowcolor{sand}\textbf{Clause} & \textbf{Threshold} & \textbf{Type} & \textbf{Commercial rationale} \\
\midrule
Style exposure $|\beta|_{\max}$ & [TBD] & hard & [TBD: principal's tolerance for style drift] \\
Downstream correlation $|\rho|$ & [TBD] & hard & [TBD: redundancy control vs.\ incumbent books] \\
Down-market capture & [TBD] & hard & [TBD: fiduciary drawdown behavior] \\
Turnover (weekly, one-sided) & [TBD] & hard & [TBD: cost budget and capacity] \\
Preference items & [TBD] & soft & [TBD] \\
\bottomrule
\end{tabular}
\end{table}

\paragraph{The locus of accountability.} Result-oriented selection, we argue, carries a structural accountability deficit: ``it ranked first in the backtest'' outsources judgment to a statistical artifact. Objective-oriented selection does not answer to results alone---its primitive is $\argmax_{d \in \F(\Spec)} g(\theta(d))$, optimization \emph{within} a contract---and it attaches accountability at three explicit points: (i) \emph{certified feasibility} of the selected assembly against the declared $\Spec$; (ii) an \emph{auditable selection trace}---the compiled feasible set, the winner's margins, and the runner-up; (iii) \emph{post-deployment monitoring} of each clause, with predefined remedies on violation. This is a relocation of responsibility, not its removal, and the relocation is conditional in a way we state plainly: certificates testify about ledgers, not about the future, and when the specification itself is badly written---thresholds politicized, clauses missing---compliance with it certifies the wrong thing well. Clause-level outcome accountability therefore attaches to two objects, not one: the assembly (did it satisfy $\Spec$?) and the specification writer (was $\Spec$ the right contract?). The difference from certification-washing (\S\ref{sec:discussion}) is operational rather than rhetorical: a ranking is a single number with no stated terms of engagement, whereas a certificate enumerates what was promised, the margin on each promise, and the evidence retained---giving an evaluator something a leaderboard never provides, namely the list of commitments against which to hold the author.

\section{Case Study: Decomposing a Macro--Meso--Micro Coupling Idea}\label{sec:case}

The demonstration of \S\ref{sec:demo} validates decision logic on five switches; this section exercises the framework on an idea of real complexity, contributed from practice under the same single-source, anonymized conditions disclosed in the ethics paragraph of \S\ref{sec:discussion}: a fully documented \emph{macro--meso--micro three-level coupling} stock-selection architecture (dual-channel regime state center; an interpretable cascade engine multiplying micro scores by meso and macro coefficients; an end-to-end engine with forced interaction features and FiLM conditioning; dynamic gating between engines; four guardrails). The source document prescribes one fixed architecture chosen by argument among four candidates. The framework's move is to restore that point to its space: every architectural decision becomes a switch, every engineering red line finds a formal home, and the champion must re-win with a certificate. Acts 1--4 are completed here; acts 5--7 (compilation, adjudication, audit) are protocol-defined and marked [TBD], pending the same pre-registered exhibit pipeline as \S\ref{sec:walkthrough}.

\paragraph{Act 1: the idea, stated colloquially.} ``Couple the macro cycle, the meso industry layer, and micro stock selection into one system, so that the regime view disciplines the alpha and the market's own structure corrects the regime view.'' The sentence is admirably clear and operationally meaningless---exactly the raw material the framework exists for.

\paragraph{Act 2: decomposition into the grammar.} Applying the rule \emph{``which stage's behavior does this idea change?''} maps the architecture onto the nine-stage grammar of App.~\ref{app:space} (Table~\ref{tab:casemap}). Three components exit the design space entirely: the data platform and the attribution layer are \emph{protocol and audit infrastructure}, not modules; the regime state center is an \emph{environment service}---an upstream context provider emitting $(\hat{p}_t, D_t, z_t)$ that clauses may reference but that no assembly contains. We are explicit about the cost of this choice: the end-to-end engine conditions on the state center's outputs, so assemblies identical in every switch but differing in environment configuration are behaviorally distinct yet indistinguishable in the registry; the environment configuration is therefore fixed ex ante, declared on the certificate, and kept \emph{outside} the accounted search width, as a disclosed residual degree of freedom rather than a hidden one. What remains is a set of switches.

\begin{table}[t]
\caption{Act 2: the documented architecture decomposed. Each fixed decision of the source design becomes a switch with variants; infrastructure components leave the design space.}
\label{tab:casemap}
\centering\small
\begin{tabular}{@{}p{0.24\textwidth}p{0.20\textwidth}p{0.46\textwidth}@{}}
\toprule
\rowcolor{sand}\textbf{Source component} & \textbf{Framework locus} & \textbf{Switch variants} \\
\midrule
Data platform (timestamp discipline) & protocol.json & --- (discipline, not a module) \\
Regime state center & environment service & \emph{no switch}: one configuration declared ex ante (channels, state count $K$, fusion rule); fixed before any ledger, reported on the certificate, and \emph{not} counted in $N$---a disclosed residual degree of freedom \\
Micro XGB layer & S5 predictor (+S2 label) & horizon 10/20/60d; predictor XGB / linear \\
Meso coefficient $C^{\mathrm{meso}}$ & S7 post-processing & on / off; auxiliary-input blend on / off \\
Macro coefficient $C^{\mathrm{macro}}$ & S7 post-processing & on / off; range fixed $\pm 0.1$ / $D_t$-dynamic \\
End-to-end engine (XGB-cross, FiLM) & S5 alternative module & absent / B1 / B2 / both \\
Dynamic gating $w_t$ & S7 fusion rule & static 0.85 / dynamic $(\hat{H}, \Delta\widehat{IC}, D_t)$ \\
Guardrail package & S9 risk overlay & full / partial / off (audit contrast) \\
Attribution layer & audit infrastructure & --- (certificate layer, not a module) \\
Portfolio construction & S8 & width; buffer; industry-deviation cap \\
Rolling training regime & S6 & window 2/3/5y; retrain monthly / weekly \\
\bottomrule
\end{tabular}
\end{table}

\paragraph{Grammar note (extension declared).} Two rows of Table~\ref{tab:casemap} exceed the reference grammar of App.~\ref{app:space}, whose S7 inventory contains post-processing modules but no coefficient modulation or fusion. That appendix declares itself ``a reference grammar, not a closed world''; this case exercises that clause. We extend S7 with a \emph{coefficient-modulation/fusion family} (multiplicative regime coefficients; engine-blending gates), and we count an assembly carrying both end-to-end variants as a single \emph{ensemble module} whose members are trained jointly---so the product structure of the design space and the counting convention for $N$ are preserved, with each fusion-family variant counting as one S7 choice. The compiler-mapping anchor of C8.4 (App.~\ref{app:cookbook}) refers to this extended family.

\paragraph{Act 3: registration.} The four candidate architectures the source document compares---pure cascade; pure end-to-end; static blend; dual-engine dynamic coupling---enter the registry as four \emph{named assemblies}; the switch grid of Table~\ref{tab:casemap} expands them into a legal-assembly space of size $N$ [TBD: exact count after legality filtering]. The audit width $N$ over the assembly space is thereby computable---gate one of \S\ref{sec:intro}---which is what gives subsequent deflation its declared input; the fixed environment configuration is reported alongside $N$, not folded into it.

\paragraph{Act 4: contracting.} The institution's demand side is declared as clauses before any ledger: the four standard hard clauses of Table~\ref{tab:specwalk} (thresholds anonymized, [TBD]) \emph{plus one clause this case contributes to the language}: an \emph{interpretability clause} $w_t \ge 0.70$, requiring the interpretable engine to retain at least 70\% of the fused score at all times (we write $w_t$ for the gating weight here; $\omega$ is reserved for market realizations elsewhere). The source document states this as an engineering red line; the framework formalizes it as a clause. Its family membership deserves care: the predicate constrains the \emph{fusion structure} rather than the return stream directly---a 70\%-weighted interpretable engine does not by itself make 70\% of the P\&L interpretable---so the clause's nearest structural neighbor is F6 (portfolio/fusion structure)---though no F6 clause currently covers fusion, a cookbook gap we flag rather than patch silently---while its rationale belongs to F8; we register it in F8 as the new entry C8.4 (App.~\ref{app:cookbook}). We claim no more than that it is a practitioner-grown instance of the transparency family---the clause cookbook itself being partly informed by the same single practitioner source disclosed in \S\ref{sec:discussion}, the practice/cookbook boundary is porous, and we flag it rather than celebrate it. The remaining red lines of the source design find their formal homes (Table~\ref{tab:redhome}). One boundary case is instructive: orthogonalization both transforms a signal and disciplines its measurement, and the arbitration rule we adopt is that \emph{whatever changes the object is a module; whatever governs the evidence is protocol}---the row is therefore split rather than forced into one home.

\begin{table}[ht]
\caption{Act 4: every red line of the source design finds a formal home. Not everything inviolable is a clause: some rules constrain legality, others govern measurement.}
\label{tab:redhome}
\centering\small
\begin{tabular}{@{}p{0.44\textwidth}p{0.18\textwidth}p{0.28\textwidth}@{}}
\toprule
\rowcolor{sand}\textbf{Red line (source document)} & \textbf{Home} & \textbf{Formal object} \\
\midrule
End-to-end engine must train on long panel & legality & contract on S6 modules \\
Interpretable engine dominates ($w_t \ge 0.70$) & specification & F8 clause C8.4 (nearest structural family F6---cookbook gap flagged) \\
Rolling retraining; extreme-regime holdout & protocol & protocol.json \\
Guardrail-triggered values logged, no manual edits & protocol & audit-trail rule \\
Market-reverse signal orthogonalization & legality / clause & signal-transforming module (S3/S7), F3 purity \\
Timestamp audit of reverse signals & protocol & measurement discipline \\
\bottomrule
\end{tabular}
\end{table}

\paragraph{Acts 5--7, protocol-defined and pending.} Compilation (legality filtering under the long-panel contract; feasible-set compilation under the clause set), adjudication ($\argmax g$ within $\F(\Spec)$, margin audit, certificate issuance under a pre-registered specification hash), and audit (the $D_t$-dynamic range switch and the guardrail package measured as cost--benefit switches on the ledger) follow the protocol of \S\ref{sec:protocol} and the pre-registration discipline of \S\ref{sec:walkthrough}. Two outcomes are equally informative: if the documented champion re-wins, its argument becomes evidence; if a sibling assembly wins, the framework has done its work once already. Until those data arrive, this section should be read as a protocol-defined walkthrough of a real architecture, not as empirical evidence. \emph{Anonymization}: this section exposes architecture (stage map, switches, clause types) and nothing else---no signal definitions, parameters, or performance data; module identities follow the S1--S9 abstraction.

\section{Discussion}\label{sec:discussion}

\paragraph{Is this a paradigm shift?} In the structural sense, yes: the reframing changes what counts as a legitimate question---from \emph{which strategy scores highest} to \emph{which assembly satisfies this identity}---and supplies the three institutions a paradigm requires: a language (Def.~\ref{def:spec}), a machine (the compiler), and a court (the protocol). It is equally important to state what kind of shift it is not. It is not a new alpha source, not a claim that specifications will be met, and not a repudiation of the existing literature: every clause stands on an established prototype, and the compiler stands on the AutoML and LLM-agent advances it reframes. Paradigm shifts are ratified by adoption, not proclamation; we therefore state the falsifiable adoption criteria: (a) specifications reused across institutions (language externalizes value); (b) certificates accepted as evidence by evaluators (statistics carry trust); (c) measured $\mathrm{CoS}$ entering practitioner decisions (the cost of identity becomes a budgeted quantity, as the transfer coefficient did twenty years ago \citep{clarke2002tc}).

\paragraph{Relation to the nearest neighbors.} Alpha-GPT takes an \emph{idea}---a directional hint about what might work; OOQI takes a \emph{specification}---a verifiable constraint set about what must hold. R\&D-Agent-Quant evolves hypotheses pulled by backtest numbers; OOQI constrains assemblies pushed by clauses. Declarative ML argues that users should state what they want in general machine learning \citep{molino2022declarative}; OOQI supplies the domain language, satisfaction statistics, and credibility protocol that make that argument actionable where the stakes---capital---demand them.

\paragraph{Limitations.} (i) The clause inventory, though broad (App.~\ref{app:spec}), cannot exhaust practitioner intent; the language must be extensible, and clause engineering is itself expertise. (ii) Satisfaction predicates over backtest statistics inherit every fragility of backtesting; our protocol mitigates but cannot abolish this. (iii) The compiler's proposal quality currently depends on LLM priors that may be shallow exactly where markets are subtle. (iv) The demonstration is synthetic; the research roadmap (App.~\ref{app:roadmap}) includes a live-system re-architecture case study under anonymized conditions.

\paragraph{Ethics, dual use, and conflicts.} A framework that makes identity constraints explicit also makes their absence conspicuous; it does not eliminate the misuse of backtests, and its certificates must never be read as guarantees of future performance. Three risks deserve explicit treatment. (i) \emph{Certification-washing}: the framework's own dual-use hazard is that a satisfaction certificate could lend a statistical veneer to overfit strategies---lowering epistemic standards while claiming to raise them. Our mitigations are structural: certificates carry deflation and retention intervals rather than binary verdicts, ledgers are hashable and auditable, and certificates expire with the evaluation window. (ii) \emph{Crowding and reflexivity}: if many principals order the same identity (e.g., ``pure selection alpha''), the alpha the certificates certify decays by the aggregate adoption; a certificate describes a strategy, not its crowd, and widespread adoption of any specification is itself a regime change the monitor must detect. (iii) \emph{Conflicts of interest}: the clause inventory of App.~\ref{app:spec} was informed by a qualitative diagnosis of a live production strategy operated by the author's employer (Shanghai Liangbai Technology Co., Ltd.); the diagnosis is used in anonymized form only, no production code, parameters, or performance data appear in this paper, and the planned production case study (roadmap R1) will be subject to the same constraint. Readers should treat the demand-layer grounding as single-source practitioner input---useful for concreteness, not a substitute for multi-institutional elicitation. \emph{Broader impact}: if adopted, the framework shifts quant research disclosure from ``results achieved'' to ``requirements met,'' which we believe benefits allocators, regulators, and end investors; its failure modes (bad forward operators, politicized thresholds) are discussed in \S\ref{sec:discussion}.

\section{Conclusion}\label{sec:conclusion}

We have argued that automated quantitative research has been solving the wrong problem, and we have built the alternative: a specification-driven framework---language, compiler, and statistics---in which investors order identities and systems deliver satisfaction certificates. The apparatus is deliberately conservative: every clause grounded in established measurement science, every claim deflated for search, every infeasibility reported honestly. If the field adopts even the modest habit of reporting \emph{which} clauses a strategy satisfies alongside \emph{how high} it scores, the reframing will have done its work.


\newpage
\appendix
\renewcommand{\thesection}{\AlphAlph{\value{section}}}

\section{The Supply-Side Design Space}\label{app:space}

This appendix defines the reference instantiation of Def.~\ref{def:space}: $K = 9$ stages, their module inventories, contracts, and side-effect signatures. The instantiation is a \emph{reference grammar}, not a closed world: institutions extend module sets without altering the formalism. Table~\ref{tab:space} gives the full taxonomy.

\begin{longtable}{@{}p{0.13\textwidth}p{0.34\textwidth}p{0.24\textwidth}p{0.22\textwidth}@{}}
\caption{Reference design space: stages, module inventories, interface contracts, and side-effect channels.}\label{tab:space}\\
\toprule
\rowcolor{sand}\textbf{Stage} & \textbf{Module inventory $\M_i$} & \textbf{Key contracts} & \textbf{Side-effect channels $\sigma(m)$} \\
\midrule
\endfirsthead
\toprule
\rowcolor{sand}\textbf{Stage} & \textbf{Module inventory $\M_i$} & \textbf{Key contracts} & \textbf{Side-effect channels $\sigma(m)$} \\
\midrule
\endhead
\bottomrule
\endfoot
\textbf{S1 Universe \& data}
& full market; exclude dual-board; exclude ST; index constituents (300/500/1000); liquidity-gated; industry-balanced; custom blacklist
& outputs security master $+$ corporate-action-adjusted panels
& capacity footprint; breadth (IR scaling $\propto\sqrt{\mathrm{breadth}}$) \\

\textbf{S2 Label}
& next-period raw return; cross-sectional rank; index excess; industry excess; neutralized residual (size$+$industry regression); multi-horizon composite; win/loss binary; quantile bucket
& horizon must match S6 training cadence; residual labels require exposure model channel
& purity leakage ($\uparrow$ for raw, $\downarrow$ for residual); turnover pressure ($\downarrow$ for multi-horizon) \\

\textbf{S3 Feature processing}
& raw; cross-sectional rank / z / quantile; industry demean; grouped NA fill; winsorize; time-series derivatives (momentum, vol, basis); HF down-sampling
& NA policy must match predictor tolerance
& rank preprocessing compounds rank labels (synergy) \\

\textbf{S4 Feature selection}
& none (full feed); IC-threshold; collinearity pruning; model importance; SHAP; $L_1$ sparsity; factor-family quotas
& selection cadence vs.\ training cadence
& interpretation clarity; regime fragility ($\uparrow$ for aggressive pruning) \\

\textbf{S5 Predictor}
& linear / elastic net; GBDT (XGBoost/LightGBM); MLP; GRU/LSTM; Transformer; temporal$+$cross-sectional hybrids (relational attention); gated mixture-of-experts; graph nets (industry/supply-chain edges)
& deep modules require GPU budget channel; relational modules require cross-section channel
& capacity footprint; interpretation clarity; turnover pressure (short-memory predictors $\uparrow$) \\

\textbf{S6 Training regime}
& single full-sample; rolling refit; warm-start chaining (parameter inheritance); recency weighting; regime-conditional weighting; adversarial de-styling; multi-task across horizons
& adversarial modules require style labels channel; chaining requires checkpoint contract
& stability of rankings ($\uparrow$ for chaining); turnover pressure ($\uparrow$ for recency weighting) \\

\textbf{S7 Post-processing}
& none; cross-sectional re-standardization; exposure residualization (industry/size); exposure projection onto constraint set; EMA smoothing; buffer-band hysteresis
& projection requires exposure model channel; buffers require persistent holdings channel
& turnover pressure ($\downarrow\downarrow$ smoothing/buffer); purity leakage ($\downarrow$ residualization/projection) \\

\textbf{S8 Portfolio construction}
& top-$N$ equal weight; top-$N$ score weight; stratified sampling; mean--variance; risk parity; tracking-error-constrained optimizer; industry deviation caps; position caps
& optimizers require covariance channel; TE constraints require benchmark channel
& capacity; drawdown resonance; industry/style deviation \\

\textbf{S9 Risk overlay}
& none; stop-loss; drawdown circuit breaker; regime de-risking signal; cost-aware aim-portfolio adjustment; capacity governor (participation cap)
& aim-portfolio requires continuous weight channel from S8
& drawdown resonance ($\downarrow$); turnover pressure ($\downarrow$ aim); cost drag \\

\end{longtable}

\paragraph{Size accounting.} With $|\M_1|{=}8$, $|\M_2|{=}12$, $|\M_3|{=}10$, $|\M_4|{=}8$, $|\M_5|{=}15$, $|\M_6|{=}8$, $|\M_7|{=}10$, $|\M_8|{=}12$, $|\M_9|{=}8$, the unconstrained product is $8.85\times 10^{8}$. These counts are \emph{families times their discrete hyperparameter grid variants}; Table~\ref{tab:space} prints the module families only, so the product is not derivable from the table alone---we state this explicitly to keep the size claim auditable. Contract compatibility is expected to remove a further fraction of tuples (the reference checker is a design proposal, App.~\ref{app:compiler}; the pruning fraction has not been measured), leaving $|\D|$ in the high $10^{8}$s by construction. Intra-module hyperparameters (tree depth, neutralization exposure sets, $N$, buffer widths) make the effective space effectively infinite for enumeration purposes and mandatory for ledger accounting (\S\ref{sec:protocol}).

\paragraph{Worked contract example.} Consider clause \emph{style purity} compiling to: S2 $\in$ \{residual label\} $\vee$ S7 $\in$ \{residualization, projection\}. The projection module's input contract requires a predictor exposing per-name scores (not merely a ranked list), ruling out post-hoc-only modules in S5; its side-effect signature records $\Delta$turnover $\approx +2\%$/week (rebalancing to constraint boundary), which the compiler prices against any active turnover clause. Contracts thus convert \emph{specification text} into \emph{typed graph constraints} over the assembly space.

\section{The Specification Language}\label{app:spec}

This appendix is the clause reference. Each clause is a triple $(\phi_j, \kappa_j, \pi_j)$ (Def.~\ref{def:spec}); we give the predicate, its measurement protocol, and the literature prototype on which measurement stands. Table~\ref{tab:clauses} lists the eight families.

\begin{longtable}{@{}p{0.15\textwidth}p{0.30\textwidth}p{0.28\textwidth}p{0.20\textwidth}@{}}
\caption{Specification language reference: families, representative clauses, measurement protocols, prototypes.}\label{tab:clauses}\\
\toprule
\rowcolor{sand}\textbf{Family} & \textbf{Clause (predicate form)} & \textbf{Measurement protocol} & \textbf{Prototype} \\
\midrule
\endfirsthead
\toprule
\rowcolor{sand}\textbf{Family} & \textbf{Clause (predicate form)} & \textbf{Measurement protocol} & \textbf{Prototype} \\
\midrule
\endhead
\bottomrule
\endfoot

\textbf{F1 Signal quality}
& $\overline{\mathrm{RankIC}} \ge \tau$; $\mathrm{ICIR} \ge \tau$; $\Pr[\mathrm{RankIC} > 0] \ge \tau$
& weekly cross-sectional Spearman of score vs.\ realized label; convention fixed per platform
& IC conventions of \citep{yang2020qlib} \\

\textbf{F2 Return--risk}
& ann.\ excess $\ge \tau$; $\mathrm{IR} \ge \tau$; max drawdown $\le \tau$; ann.\ volatility $\le \tau$
& net-of-cost excess vs.\ declared benchmark; deflated for ledger $N$
& \citep{bailey2014dsr,harvey2015backtesting} \\

\textbf{F3 Purity / style identity}
& $|\beta_f| \le \tau\ \forall f \in$ factor model; $|\rho(\cdot, \mathrm{style\ index})| \le \tau$; RBSA residual share $\ge \tau$; style-regression $R^2 \le \tau$
& returns regression on FF5/Carhart-style or Barra-style factors; constrained RBSA fit
& \citep{sharpe1992style,clarke2017purefactor,fama1993ff} \\

\textbf{F4 Regime robustness}
& conditional excess in down-market states $\ge \tau$; conditional drawdown $\le \tau$; per-year positive share $\ge \tau$; rolling-mean residence time below zero $\le \tau$
& states declared ex ante (e.g., market return quintiles, jump-model states); conditional evaluation on held-out states
& \citep{ang2002regime,shu2024jump} \\

\textbf{F5 Execution feasibility}
& ann.\ turnover $\le \tau$; net IR at cost $c$ $\ge \tau$ for cost grid $c$; participation $\le \tau$
& turnover from holdings diff; cost-sensitivity curve, not point cost
& \citep{garleanu2013dynamic,novymarx2016costs,frazzini2018costs} \\

\textbf{F6 Portfolio structure}
& holdings count $\in [\tau_l, \tau_u]$; industry deviation $\le \tau$; position cap $\le \tau$; TE vs.\ benchmark $\le \tau$
& holdings-level accounting per rebalance
& \citep{roll1992te,jorion2003te} \\

\textbf{F7 Temporal behavior}
& prediction horizon $= \tau$; holding-period profit concentration in first half $\ge \tau$; decay half-life $\ge \tau$
& event-study of post-formation return curve; rolling IC decay fit
& practitioner-informed; App.~\ref{app:algebra} \\

\textbf{F8 Transparency \& compliance}
& no ST / suspended / limit-up names at entry (hard); interpretability report exists; restricted-list conformance; interpretable-engine floor $w_t \ge \tau$ (C8.4, App.~\ref{app:cookbook})
& audit of selection filter chain; documentation artifact check
& mandate practice \\

\end{longtable}

\paragraph{Design notes.} (i) \emph{Falsifiability is mandatory}: a clause without a measurement protocol is not admitted to the language---this is the linguistic counterpart of Popper's criterion. (ii) \emph{Thresholds are principled, not arbitrary}: F2 thresholds scale with the ledger $N$ (Deflated-Sharpe logic), and F3 thresholds scale with the factor model's estimation error. (iii) \emph{Anonymized practitioner grounding}: the F3--F4--F7 inventories are informed by a qualitative diagnosis of a live selection-driven strategy (profit source purity; weak market-direction dependence; mild-momentum preference; early profit realization; nonlinear decay with self-repair; drawdown resonance with market crashes), abstracted into clauses without disclosing any strategy specifics. (iv) \emph{Compositionality}: clauses compose by conjunction within families and by the interaction algebra across families (App.~\ref{app:algebra}).

\paragraph{Example: a complete profile.} The following is a well-formed specification in the language, stated in practitioner diction and its compiled form:
\begin{quote}
\emph{``Pure stock-selection alpha: zero style exposure, uncorrelated with dividend index and with market direction; profitable in unilateral declines; mild-momentum and quality tilt; turnover within budget.''}
\[
\begin{aligned}
\Spec = \{ &\textsc{hard}:\ |\beta_f| \le 0.05\ (f \in \{\mathrm{mkt, size, div, momentum}\}), &&(F3)\\
&\textsc{hard}:\ |\rho_{\mathrm{div-index}}| \le 0.10,\ |\rho_{\mathrm{mkt}}| \le 0.10, &&(F3)\\
&\textsc{hard}:\ \mathbb{E}[\mathrm{excess} \mid \mathrm{down\ quintile}] \ge 0, &&(F4)\\
&\textsc{hard}:\ \mathrm{turnover} \le 30\%/\mathrm{mo}, &&(F5)\\
&\textsc{soft},\pi_1:\ \overline{\mathrm{RankIC}} \ge \tau_{\mathrm{ledger}}, &&(F1)\\
&\textsc{soft},\pi_2:\ \mathrm{IR} \ge \tau_{\mathrm{ledger}} \}\,. &&(F2)
\end{aligned}
\]
\noindent The first clause is stated in standardized units (see the units note in \S\ref{sec:demo}).
\end{quote}

\section{The Requirement Interaction Algebra}\label{app:algebra}

Requirements compose like building blocks, and blocks push against or reinforce one another. This appendix formalizes pairwise interaction and the arbitration semantics used by the compiler (\S\ref{sec:framework}).

\subsection{Interaction via the cost functional}

Recall $\mathrm{CoS}(\Spec) = g(\theta(d^*)) - g(\theta(d^*_\F))$ (Proposition~\ref{prop:cost}).

\begin{definition}[Pairwise interaction]\label{def:interaction}
For hard clauses $c_i, c_j$, define the incremental costs $\Delta_i = \mathrm{CoS}(\{c_i\})$, $\Delta_j = \mathrm{CoS}(\{c_j\})$, $\Delta_{ij} = \mathrm{CoS}(\{c_i, c_j\})$. Monotonicity of $\mathrm{CoS}$ (Proposition~\ref{prop:cost}) gives $\Delta_{ij} \ge \max(\Delta_i, \Delta_j)$, and the pair is
\[
\begin{cases}
\text{synergistic}\ (\oplus), & \Delta_{ij} \;\le\; \max(\Delta_i, \Delta_j) + \varepsilon,\\
\text{additive}\ (\circ), & \max(\Delta_i, \Delta_j) + \varepsilon \;<\; \Delta_{ij} \;\le\; \Delta_i + \Delta_j + \varepsilon,\\
\text{conflicting}\ (\ominus), & \Delta_{ij} \;>\; \Delta_i + \Delta_j + \varepsilon,
\end{cases}
\]
an exhaustive partition with tolerance $\varepsilon$ set by estimation noise. Synergy is the free-rider regime: satisfying the dearer clause buys the cheaper one. Conflict is super-additive cost: the pair costs more than the sum of its parts because the mechanisms that buy one clause (e.g., slow signals for turnover) destroy the other (e.g., fast signals for IC).
\end{definition}

\begin{table}[ht]
\caption{Interaction matrix over representative clauses (reference instantiation; entries validated qualitatively against mechanism structure and the demonstration's assembly correlations; institutional deployments should re-estimate empirically).}
\label{tab:interaction}
\centering\small
\begin{tabular}{@{}lcccccc@{}}
\toprule
\rowcolor{sand} & high IC & high IR & purity & resilience & low t/o & capacity \\
\midrule
high IC (F1)      & --- & $\oplus$ & $\ominus$ & $\circ$ & $\ominus$ & $\circ$ \\
high IR (F2)      & & --- & $\ominus$ & $\circ$ & $\circ$ & $\oplus$ \\
style purity (F3) & & & --- & $\oplus$ & $\circ$ & $\ominus$ \\
resilience (F4)   & & & & --- & $\oplus$ & $\circ$ \\
low turnover (F5) & & & & & --- & $\oplus$ \\
capacity (F6)     & & & & & & --- \\
\bottomrule
\end{tabular}
\end{table}

\paragraph{Reading the matrix.} IC $\ominus$ turnover is the classic signal-decay conflict: fast alpha decays in days, so satisfying F1 pressures holding periods below what F5 permits. Purity $\oplus$ resilience is genuine synergy: removing style beta removes the dominant channel by which unilateral declines contaminate the return stream. Purity $\ominus$ capacity reflects that neutralization concentrates the portfolio in the residual-alpha subspace, shrinking effective breadth. The matrix is \emph{empirical}: the compiler updates entries from search history (counts of assemblies jointly satisfying each pair), converting folklore into measured structure.

\subsection{Arbitration semantics}

Given $\Spec$ with conflict graph $(V, E_\ominus)$:
\begin{enumerate}[leftmargin=1.7em,itemsep=1pt]
\item \textbf{Feasibility check}: if the conflict graph contains a known-infeasible clique (e.g., \{high IC, low turnover, tight horizon\} beyond measured bounds), return the \emph{minimum-cost relaxation}: the clause whose removal maximizes restored feasibility per unit of principal's priority loss.
\item \textbf{Weighted soft mode}: maximize $g + \sum_{j:\kappa_j=\textsc{soft}} w_j \phi_j$ over $\F(\Spec_{\mathrm{hard}})$, weights set by the principal.
\item \textbf{Lexicographic mode}: satisfy soft clauses in priority order $\pi_1 \succ \pi_2 \succ \dots$, freezing each satisfied clause as hard before descending---the goal-programming semantics of \citep{kim2020goalprog} lifted from wealth targets to identity clauses.
\end{enumerate}

\subsection{Practitioner diction to specification: worked examples}

The language is designed for how practitioners actually speak. Three translations (mechanisms abstracted; no strategy specifics disclosed):

\paragraph{E1 (identity order).} \emph{``Earn stock-selection money, not direction money; dividend-index correlation zero; market-direction correlation zero.''} $\Rightarrow$ F3 hard clauses: exposure budget per factor; index-correlation caps; RBSA residual share $\ge 0.7$; evaluation adds attribution regression.

\paragraph{E2 (timing-of-profit order).} \emph{``Profits realize in the first half of the holding period; the second half is noise.''} $\Rightarrow$ F7 clause on profit concentration; compiler consequence: shorter label horizon in S2, EMA smoothing forbidden beyond the concentration window, buffer bands narrowed.

\paragraph{E3 (discipline order).} \emph{``Suspend execution when the rolling-mean return resides below zero; resume above the zero axis.''} $\Rightarrow$ F4/F7-style overlay clause; compiler consequence: S9 admits only regime-gated overlay modules; evaluation adds rolling-residence-time coordinate.

\section{Proofs}\label{app:proofs}

\paragraph{Proof of Proposition~\ref{prop:contamination}.}
Write $X_d = g(\theta(d))$ with means as in Assumption~\ref{ass:premium}. A violation occurs whenever $\max_{d \in \D_S} X_d > \max_{d \in \D_P} X_d$, since the dichotomy $\D = \D_S \cup \D_P$ then places the global argmax in $\D_S$. Fix any $d_S \in \D_S$ and set the midpoint threshold $t = (\mu_S + \mu_P)/2$. The violation event contains
\[
\{ X_{d_S} > t \} \;\cap\; \textstyle\bigcap_{d \in \D_P} \{ X_d < t \},
\]
because $\max_{\D_S} X \ge X_{d_S}$. Each excluded event is a one-sided sub-Gaussian tail at $u = \Delta/2$: $\Pr[X_{d_S} \le t] \le e^{-\Delta^2/(8s^2)}$, and $\Pr[X_d \ge t] \le e^{-\Delta^2/(8s^2)}$ for each $d \in \D_P$. A union bound over the $1 + |\D_P|$ excluded events---no independence assumption is needed---yields the display. The threshold for exceeding $1/2$ solves $(1+|\D_P|)\,e^{-\Delta^2/(8s^2)} \le 1/2$, and the limit statements are immediate. \hfill$\square$

\paragraph{Proof of Proposition~\ref{prop:cost}.}
Non-negativity: $\F(\Spec) \subseteq \D$ implies $\max_{\F} g \le \max_{\D} g$, with equality iff some $d^* \in \F$. Monotonicity: $\Spec \subseteq \Spec' \Rightarrow \F(\Spec') \subseteq \F(\Spec)$, hence $\mathrm{CoS}(\Spec') \ge \mathrm{CoS}(\Spec)$. For the bound: under Assumption~\ref{ass:premium} and the hypothesis $\F(\Spec)\subseteq\D_P$ (so that every feasible assembly is style-pure), a union bound over the finite classes gives, simultaneously for all $d$, $|g(\theta(d)) - \mathbb{E}g(\theta(d))| \le t$ with $t = 2s\sqrt{2\log|\D|}$, with probability $\ge 1 - 2|\D|^{-3}$ (each class contributes at most $|\D| e^{-t^2/2s^2} = |\D|^{-3}$). On that event $g(\theta(d^*)) \le \mu_S + t$---if the argmax comes from $\D_P$ the bound only improves---and $g(\theta(d^*_\F)) \ge \mu_P - t$. Therefore
\[
\mathrm{CoS}(\Spec) \;\le\; \mu_S - \mu_P + s\sqrt{2\log|\D_S|} + s\sqrt{2\log|\D_P|} \;\le\; \mu_S - \mu_P + 4s\sqrt{2\log|\D|},
\]
with the same probability. The analogy to the transfer coefficient \citep{clarke2002tc} is structural: both measure expected alpha forgone per unit of constraint, here priced in score per specification rather than in active return per portfolio constraint. \hfill$\square$

\paragraph{Proof of Proposition~\ref{prop:inflation}.}
Under the null, each assembly independently appears fully satisfying with probability $p = \prod_j q_j$ (clause estimates assumed independent conditional on the null; the union-bound variant without independence is $p \le \min_j q_j$ and the conclusion strengthens). Across $N$ evaluated assemblies,
\[
\Pr[\text{none satisfying}] = (1 - p)^N \;\Longrightarrow\; \Pr[\exists\ \text{apparent satisfaction}] = 1 - (1-p)^N \ge 1 - e^{-Np},
\]
which is $\ge 1/2$ once $N \ge \ln 2 / p$---exponential onset in the effective trials. Hence any certificate lacking the ledger $N$ cannot distinguish earned from manufactured satisfaction. \hfill$\square$

\paragraph{Proof of Proposition~\ref{prop:nphard}.}
Reduction from 3-SAT. Let $\Psi$ be a 3-CNF formula with variables $x_1,\dots,x_n$ and clauses $C_1,\dots,C_L$, $C_\ell = (\ell_{\ell 1} \lor \ell_{\ell 2} \lor \ell_{\ell 3})$. Construct a design space with $K = n$ stages, each offering two modules $\M_i = \{ m_i^0, m_i^1 \}$, and declare all contract pairs compatible, so every tuple $d \in \prod_i \M_i$ is legal: legality is trivial by construction. For each formula clause $C_\ell$, define a specification clause $c_\ell$ whose predicate inspects only the assembly's module set: $d \models c_\ell$ iff $d$ contains at least one \emph{literal module} of $C_\ell$, i.e., $m_i^1 \in d$ for some positive literal $x_i$ in $C_\ell$ or $m_i^0 \in d$ for some negative literal $\neg x_i$ in $C_\ell$. Each predicate is evaluable in $O(K)$. Then $d \models \Spec$ (all hard clauses satisfied) iff the assignment $x_i = \mathbf{1}[m_i^1 \in d]$ satisfies $\Psi$, so $\F(\Spec) \neq \emptyset$ iff $\Psi$ is satisfiable. The reduction is polynomial, hence NP-hardness; membership in NP (guess an assembly, check all predicates) gives NP-completeness when predicates are polynomial-time. Note that the construction uses no contract obstruction: the hardness lives entirely in clause--module interaction, not in legality or in $|\D|$. \hfill$\square$

\paragraph{Proof of Proposition~\ref{prop:submodular}.}
Write an assembly as its module set $S \subseteq \bigcup_i \M_i$ with $|S \cap \M_i| = 1$ per stage---exactly a partition matroid constraint. Under coverage structure, $f(S) = \sum_j w_j\, \mathbf{1}[ S \cap A_j \neq \emptyset ]$ is a weighted coverage function, hence monotone submodular: for $S \subseteq T$ and $m \notin T$, $f(S \cup \{m\}) - f(S) = \sum_{j:\, A_j \ni m,\, A_j \cap S = \emptyset} w_j \ge \sum_{j:\, A_j \ni m,\, A_j \cap T = \emptyset} w_j = f(T \cup \{m\}) - f(T)$. Claim (i) is the classical $1/2$ guarantee of greedy for monotone submodular maximization under a matroid constraint \citep{nemhauser1978}; claim (ii) is the $(1 - 1/e)$ guarantee of the continuous greedy algorithm with pipage rounding under any matroid \citep{calinescu2011}. Conflicting pairs introduce negative cross-terms (satisfying $c_i$ through mechanism $\mu$ lowers $\phi_j$), which break monotonicity and hence submodularity. \hfill$\square$

\paragraph{Proof of Proposition~\ref{prop:galois}.}
$X \subseteq \F(\Spec)$ iff every $d \in X$ satisfies $\Spec$, iff $\Spec$ is weaker than the strongest specification satisfied by all of $X$, iff $\Spec \preceq \mathcal{I}(X)$. Both maps are antitone: $\Spec_1 \preceq \Spec_2$ implies $\F(\Spec_2) \subseteq \F(\Spec_1)$ by definition of the order, and $X_1 \subseteq X_2$ implies $\mathcal{I}(X_2) \preceq \mathcal{I}(X_1)$ since strengthening the conjunction over assemblies weakens the common specification. Hence an antitone Galois connection \citep{cousot1977}, and the closure properties (i)--(iii) are the standard consequences: (i) $\Spec \preceq \mathcal{I}(\F(\Spec))$ is the unit of the connection; (ii) $\mathcal{I}(\{d\})$ is the closure of a singleton; (iii) $\F(\Spec) = \F(\mathcal{I}(\F(\Spec)))$, so feasible-set equality coincides with completion equality. \hfill$\square$

\paragraph{Proof of Proposition~\ref{prop:shadow}.}
Let $V(\tau) = \max_{d}\, g(\theta(d))$ subject to $h_j(d) = \theta_j(d) - \tau_j \ge 0$ be the value function of the relaxed problem. Under the stated regularity (the optimum is locally unique and the gradients of binding constraints are linearly independent, so strong duality and the KKT conditions hold with unique multipliers $\lambda^* \ge 0$), the envelope theorem for arbitrary choice sets \citep{milgrom2002envelope} yields $\partial V / \partial \tau_j = -\lambda^*_j$ (tightening $\tau_j$ shrinks the feasible set, so $V$ is nonincreasing in $\tau$). Since $\mathrm{CoS}(\Spec) = V(\tau_{-\infty}) - V(\tau)$ with the first term independent of $\tau$, $\partial\, \mathrm{CoS} / \partial \tau_j = \lambda^*_j \ge 0$. Monotonicity in specification strength (Proposition~\ref{prop:cost}) is the integral form of this marginal statement. \hfill$\square$

\paragraph{Proof of Proposition~\ref{prop:evalue}.}
An e-process $\{E_t\}$ under the null is a nonnegative supermartingale with $\mathbb{E}[E_0] \le 1$. Ville's maximal inequality for nonnegative supermartingales gives $\Pr[\exists t : E_t \ge 1/\alpha] \le \alpha \cdot \mathbb{E}[E_0] \le \alpha$, and the crossing event is uniform over $t$, including data-dependent stopping times---exactly the guarantee rolling certification requires and fixed-horizon $p$-values do not provide. A concrete construction: for weekly margin observations $m_t$ under the null of mean margin $\le 0$, the mixture likelihood ratio $E_t = \int \prod_{s \le t} \mathrm{d}P_{\eta}(m_s) / \mathrm{d}P_0(m_s)\, \mathrm{d}\rho(\eta)$ over alternatives $\eta > 0$ with mixing distribution $\rho$ is such a supermartingale \citep{ramdas2023savi,howard2021cs}. \hfill$\square$

\begin{remark}
The propositions are intentionally elementary. Their purpose is not depth but \emph{accountability}: each converts a qualitative anxiety of quantitative practice (style contamination, constraint cost, search overfitting) into a quantity that must be reported, bounded, or deflated. The framework's empirical content lives in the measurement of these quantities, not in their existence proofs.
\end{remark}

\begin{remark}
The second tier of results (Propositions \ref{prop:nphard}--\ref{prop:evalue}) is elementary in the same sense but different in kind: it prices the \emph{act of synthesis} rather than its objects---what assembly costs in computation (NP-hardness, and when a constant-factor escape exists), what a clause costs in equilibrium (shadow prices), what identity comes for free (Galois completion), and what continuous certification costs in multiplicity (nothing, under e-values). We read these as the framework's \emph{operating economics}: the theory a compiler needs before it may be fast, and the honesty it needs when it may not.
\end{remark}

\section{Verification Protocol: Technical Details}\label{app:protocol}

\paragraph{Deflated satisfaction rate.} Fix clause $c_j$ with threshold predicate $\phi_j$ and per-assembly statistic $\hat\theta_j(d)$, estimation variance $s_j^2$ (from block bootstrap over time). Under the null that assembly $d$ does not truly satisfy $c_j$, apparent satisfaction occurs with probability $q_j = \Pr[\hat\theta_j\ \text{crosses}\ \tau_j]$, bounded by sub-Gaussian tail $q_j \le \exp(-\Delta_j^2 / 2 s_j^2)$ where $\Delta_j$ is the true shortfall. Across the ledger of $N$ assemblies, the expected number of spurious satisfiers is $\le N q_j$; the clause is \emph{deflation-cleared} if the selected assembly's margin $\hat\Delta_j$ exceeds the expected-max null excursion $s_j \sqrt{2 \log N}$---the clause-wise analogue of the Deflated Sharpe benchmark \citep{bailey2014dsr}. The deflated rate is
\[
\widehat{\mathrm{SR}}_{\mathrm{def}} \;=\; \frac{1}{J} \sum_{j=1}^{J} \mathbf{1}\big[ \hat\Delta_j \;\ge\; s_j \sqrt{2 \log N} \big],
\]
i.e., only clauses clearing the search-width margin count. Note the direction of honesty: $\widehat{\mathrm{SR}}_{\mathrm{def}} \le \mathrm{SR}$ always.

\paragraph{Satisfaction-PBO.} Adapt CSCV \citep{bailey2017pbo} by replacing the performance vector with the satisfaction profile: split the evaluation span into $S$ blocks; for each combination of $S/2$ IS blocks, rank assemblies by IS $\mathrm{SR}$; compute the OOS $\mathrm{SR}$ rank of the IS winner; PBO is the fraction of combinations where the winner's OOS rank falls below median. We report PBO alongside $\widehat{\mathrm{SR}}_{\mathrm{def}}$; deployability guidance $\mathrm{PBO} < 0.1$.

\paragraph{Random-assembly null for feasibility discovery.} Draw $M$ legal assemblies uniformly; record $m_\F$, the count landing in $\F(\Spec)$. The compiler's feasibility-discovery skill is the binomial enrichment of its own hit count over $M p_\F$, $p_\F = m_\F / M$. This answers the NAS-style criticism \citep{yang2020nasfrustrating} transferred from accuracy to feasibility.

\paragraph{Stress certification per clause family.} F3: re-estimate exposures under rotated factor premia (sign-flipped style premium windows); F4: synthetic unilateral-decline regimes (drift-shifted market factor); F5: cost grid $c \in \{5, 10, 20, 50\}$ bps with the net-satisfaction curve reported rather than a point; F2: per-year segmentation. Certification follows the underspecification principle \citep{damour2022underspecification}: many assemblies tie in-sample; only stress behavior distinguishes them.

\paragraph{Multiple-testing thresholds for F1/F2.} Following \citep{harvey2016crosssection}, significance thresholds for IC-family clauses rise with the ledger: $\tau_{\mathrm{ledger}} = \tau_0 \cdot \sqrt{\max(1, \log N_{\mathrm{eff}})}$ with $N_{\mathrm{eff}}$ counting effective (overlap-discounted) trials; BHY-style FDR control is an acceptable alternative \citep{harvey2015backtesting}.

\section{Demonstration: Reproducibility}\label{app:repro}

\paragraph{Data-generating process.} $N{=}300$ names, $T{=}208$ weeks, seed $=7$ (single seed, disclosed; no seed selection). Weekly factor draws: market $\sim \mathcal N(0.0006, 0.02^2)$; size $\sim \mathcal N(0.0002, 0.012^2)$; dividend-style $\sim \mathcal N(0.0010, 0.012^2)$ in-sample and $\mathcal N(-0.0010, 0.012^2)$ held-out (declared style-rotation scenario). Name-level loadings: beta $\mathcal N(1, 0.25^2)$; size/dividend exposures $\mathcal N(0,1)$; persistent quality $\mathcal N(0,1)$. Returns: $r_{i,t} = \beta_i f^{\mathrm{mkt}}_t + 0.6\, \mathrm{size}_i f^{\mathrm{size}}_t + 0.6\, \mathrm{div}_i f^{\mathrm{div}}_t + \alpha_{i,t} + \varepsilon_{i,t}$, $\varepsilon \sim \mathcal N(0, 0.03^2)$, with idiosyncratic alpha $\alpha_{i,t} = 0.004\,[0.5\, z(\mathrm{mom}^{(4)}_{i,t}) + 0.5\, z(\mathrm{quality}_i)]$ (cross-sectional $z$). Transaction cost $0.15\%$ per unit turnover. Signal: noisy momentum (4-week lookback variants) plus noisy quality, per module recipe.

\paragraph{Assembly space and specification.} Five binary stage choices --- label horizon $h \in \{1,2\}$, signal neutralization (cross-sectional residualization on $[\beta, \mathrm{size}, \mathrm{div}]$), EMA smoothing ($0.6/0.4$), portfolio width $N \in \{20, 50\}$, turnover buffer band ($N{+}8$) --- giving $|\D| = 32$. Specification (all hard): style purity $|\beta|_{\max} \le 6$; dividend-index correlation $|\rho| \le 0.15$; down-market annualized conditional excess $\ge 15\%$ (worst market quintile); turnover $\le 10\%$/week. Evaluation split: weeks 4--156 search, 157--208 isolated holdout.

\paragraph{Full results.} Table~\ref{tab:assemblies} lists all assemblies. Feasibility count 2/32; result-oriented selection returns rank 1 (\texttt{h1/--/S/50/B}); objective-oriented returns rank 3 (\texttt{h2/N/S/50/B}). Held-out: A satisfies \{corr, resilience, turnover\}, fails purity at $|\beta|_{\max} = 11.70$; B satisfies the same three, fails purity marginally at $6.98$ vs.\ budget $6.00$---an honest illustration that satisfaction is a random variable, which is the raison d'\^etre of \S\ref{sec:protocol}.

\begin{longtable}{@{}clrrrrrrc@{}}
\caption{All 32 legal assemblies of the demonstration design space, ranked by in-sample IR. Module code: label horizon $h$; neutralization N; smoothing S; portfolio width; turnover buffer B. Feasible assemblies (satisfying all four hard clauses in-sample) are marked $\checkmark$.}\label{tab:assemblies}\\
\toprule
\rowcolor{sand} rank & modules & IR$_{\mathrm{IS}}$ & $|\beta|_{\max}$ & $\rho_{\mathrm{div}}$ & down-mkt & turnover & feasible \\
\midrule\endfirsthead
\toprule\rowcolor{sand} rank & modules & IR$_{\mathrm{IS}}$ & $|\beta|_{\max}$ & $\rho_{\mathrm{div}}$ & down-mkt & turnover & feasible \\\midrule\endhead\bottomrule\endfoot
 1 & \texttt{h1/--/S/50/B} & 7.10 & 8.07 & -0.211 & 12.7\% & 0.065 &  \\
 2 & \texttt{h1/N/S/50/B} & 6.76 & 7.03 & -0.038 & 10.5\% & 0.063 &  \\
\rowcolor{lightaccent} 3 & \texttt{h2/N/S/50/B} & 6.71 & 5.56 & -0.121 & 17.6\% & 0.064 & $\checkmark$ \\
 4 & \texttt{h1/--/--/50/--} & 6.51 & 7.51 & -0.193 & 13.4\% & 0.142 &  \\
 5 & \texttt{h1/N/--/50/B} & 6.44 & 4.44 & -0.117 & 20.5\% & 0.120 &  \\
 6 & \texttt{h1/--/S/50/--} & 6.40 & 9.24 & -0.240 & 15.1\% & 0.084 &  \\
 7 & \texttt{h2/N/S/50/--} & 6.29 & 9.85 & -0.065 & 12.6\% & 0.082 &  \\
 8 & \texttt{h1/N/--/50/--} & 6.22 & 5.72 & +0.036 & 14.8\% & 0.144 &  \\
 9 & \texttt{h1/N/S/50/--} & 6.13 & 6.05 & -0.109 & 14.1\% & 0.085 &  \\
 10 & \texttt{h1/--/--/50/B} & 5.91 & 3.62 & -0.064 & 17.0\% & 0.118 &  \\
 11 & \texttt{h2/--/--/50/--} & 5.90 & 7.03 & -0.173 & 16.2\% & 0.139 &  \\
 12 & \texttt{h2/--/S/50/--} & 5.82 & 6.54 & -0.161 & 12.3\% & 0.084 &  \\
 13 & \texttt{h2/--/S/20/--} & 5.70 & 14.47 & -0.226 & 26.8\% & 0.042 &  \\
 14 & \texttt{h2/N/--/50/--} & 5.70 & 8.14 & +0.059 & 9.6\% & 0.141 &  \\
 15 & \texttt{h1/N/S/20/--} & 5.61 & 7.63 & -0.088 & 19.1\% & 0.044 &  \\
 16 & \texttt{h2/N/--/50/B} & 5.59 & 5.34 & -0.145 & 14.3\% & 0.119 &  \\
\rowcolor{lightaccent} 17 & \texttt{h2/--/S/50/B} & 5.54 & 4.77 & -0.112 & 16.8\% & 0.065 & $\checkmark$ \\
 18 & \texttt{h1/--/S/20/B} & 5.40 & 16.92 & -0.246 & 27.8\% & 0.031 &  \\
 19 & \texttt{h2/--/--/50/B} & 5.32 & 6.37 & -0.123 & 12.0\% & 0.118 &  \\
 20 & \texttt{h1/N/--/20/B} & 5.24 & 11.94 & -0.172 & 27.0\% & 0.059 &  \\
 21 & \texttt{h2/--/--/20/--} & 5.11 & 11.49 & -0.091 & 18.6\% & 0.069 &  \\
 22 & \texttt{h2/N/S/20/B} & 5.00 & 17.65 & -0.032 & 20.9\% & 0.028 &  \\
 23 & \texttt{h2/N/--/20/--} & 4.78 & 11.67 & -0.093 & 26.1\% & 0.071 &  \\
 24 & \texttt{h1/N/S/20/B} & 4.69 & 19.12 & -0.203 & 26.1\% & 0.028 &  \\
 25 & \texttt{h2/N/S/20/--} & 4.64 & 16.91 & -0.172 & 21.4\% & 0.043 &  \\
 26 & \texttt{h1/N/--/20/--} & 4.56 & 12.55 & -0.148 & 23.7\% & 0.072 &  \\
 27 & \texttt{h1/--/--/20/B} & 4.48 & 11.45 & -0.149 & 17.8\% & 0.058 &  \\
 28 & \texttt{h2/N/--/20/B} & 4.35 & 13.62 & -0.212 & 14.6\% & 0.055 &  \\
 29 & \texttt{h1/--/S/20/--} & 4.29 & 15.82 & -0.218 & 22.8\% & 0.044 &  \\
 30 & \texttt{h1/--/--/20/--} & 4.20 & 10.52 & -0.161 & 24.6\% & 0.073 &  \\
 31 & \texttt{h2/--/--/20/B} & 4.15 & 15.85 & -0.220 & 16.5\% & 0.056 &  \\
 32 & \texttt{h2/--/S/20/B} & 3.91 & 11.64 & -0.158 & 19.4\% & 0.029 &  \\
\end{longtable}

\paragraph{Estimation conventions.} Exposure and correlation statistics are OLS estimates of weekly portfolio returns on the contemporaneous factor realizations over the search window (weeks 4--156); the benchmark for excess returns is the equal-weight universe mean; the down-market statistic conditions on the worst market-quintile weeks of the window; EMA smoothing is defined by $\tilde s_t = 0.6\,\tilde s_{t-1} + 0.4\, s_t$ (0.6 carry-over); the turnover buffer band retains incumbent names whose signal rank is within the top $N{+}8$; noise scales are those of the DGP paragraph above. The four clause thresholds were set once, before the ten-seed study, to discriminate within the demonstration space, and held fixed across all seeds; the sensitivity table below reports the consequence of moving each threshold on the seed-7 ledger (computed by re-applying predicates to Table~\ref{tab:assemblies}, no new backtests).

\paragraph{Threshold sensitivity.} Feasible-set size as one threshold varies, others fixed: style cap $4/5/5.5/6/8 \Rightarrow 0/1/1/\mathbf{2}/3$; correlation cap $0.10/0.125/\mathbf{0.15}/0.20 \Rightarrow 0/2/\mathbf{2}/2$; downside floor $14/15/17/18 \Rightarrow 2/\mathbf{2}/1/0$; turnover cap $0.06/0.065/\mathbf{0.10}/0.12 \Rightarrow 0/2/\mathbf{2}/4$ (bold $=$ paper's spec, feasible set $\{3,17\}$). Feasibility is a step function of the spec: each of the four thresholds sits within two notches of emptiness, which is exactly why the protocol reports margins and retention rather than binary verdicts. A fuller enumeration over six joint specifications appears in the threshold atlas (App.~\ref{app:atlas}).

\paragraph{Baselines on the same ledger.} To test whether simpler machinery already solves the selection problem, we evaluate three baselines on the identical ledger. (a) \emph{Weighted-penalty scalarization} $g - \lambda\sum_j \mathrm{viol}_j$: for $\lambda \le 0.13$ IR units per violated clause the winner is rank 1 (3 violations); for $\lambda > 0.13$ it switches to rank 3 (feasible)---the baseline \emph{can} recover B, but only if the penalty happens to be tuned across a crossover located at exactly the CoS ($0.39/3 = 0.13$), a quantity knowable only after the fact; penalties also cannot express lexicographic hard/soft semantics or emit infeasibility reports. (b) \emph{Pareto filtering} on (IR, violation count): the non-dominated set is $\{\text{rank 1 (7.10, 3)}, \text{rank 2 (6.76, 2)}, \text{rank 3 (6.71, 0)}\}$, and post-hoc hard-clause filtering of the front returns rank 3 $=$ B. On a 32-point space the Pareto baseline matches our \emph{selection}; what it does not supply is the specification's semantics (hard vs.\ soft, priorities), the certificate (margins, retention, deflation), or the infeasibility report---the differences that matter at $10^8$ assemblies and in audit. (c) \emph{Portfolio-layer-only constraints} (exposure-constrained optimizer with an unconstrained upstream) are not expressible in this five-switch space and remain a real-data comparison for future work. We state the honest conclusion: on small enumerable spaces, constrained selection and these baselines can agree on \emph{which} assembly; the framework's claim is about \emph{what can be stated, verified, and certified}, not about winning a selection contest on 32 points.

\paragraph{Clause stability audit (partial protocol instantiation).} We instantiated the protocol's stability layer on the seed-7 window: block bootstrap ($1000$ replications, 8-week blocks) of per-clause satisfaction. Retention probabilities (clause order: style, $|\rho|$, downside, turnover; joint $=$ all four): A $= [0.089, 0.294, 0.308, 1.000]$, joint $0.018$; B $= [0.231, 0.725, 0.211, 1.000]$, joint $0.050$. B's binding clause is style (point margin $0.44$ beta units; retention $0.231$), exactly as the deflation analysis of App.~\ref{app:protocol} predicts for a thin margin under $N=32$ search width; the margin-regularized selection of App.~\ref{app:inverse} would switch to the deeper-margin feasible sibling (rank 17; min margin $0.122$ vs.\ B's $0.073$). We report these numbers as the protocol's first instantiation on its own demonstration: full deflated-SR, satisfaction-PBO, and random-assembly-null values require the released script's resampling harness and remain roadmap items---the certificate fields are defined, and this paragraph begins filling them.

\paragraph{Code availability.} The demonstration is $\sim$120 lines of NumPy/pandas; the reference compiler (contract checker, proposer stub, certificate writer) is $\sim$800 lines of Python. Both are intended for release upon publication; the anonymized production case study (App.~\ref{app:roadmap}) is excluded from any release.

\section{Extended Related Work}\label{app:related}

This appendix expands \S\ref{sec:related} with per-work annotations; every entry was verified against arXiv/venue records in August 2026.

\subsection{Supply-side automation}
\textbf{CASH lineage.} Auto-WEKA \citep{thornton2013autoweka} first unified algorithm selection and hyperparameters in one hierarchical Bayesian-optimization problem; auto-sklearn \citep{feurer2015autosklearn} added meta-learned warm starts and post-hoc ensembles (JMLR 2020/2022 v2); TPOT \citep{olson2016tpot} evolves operator trees via genetic programming and is, to our knowledge, the only mainstream AutoML system with a built-in second objective (pipeline size); auto-PyTorch \citep{zimmer2021autopytorch} fuses NAS with CASH; CASH+ \citep{balef2025cashplus} generalizes CASH to heterogeneous modern pipelines but explicitly retreats to a two-layer selection over incompatible spaces---an admission that interface contracts (our Def.~\ref{def:space}) are the missing abstraction.
\textbf{NAS.} DARTS \citep{liu2019darts} relaxes discrete cells to continuous mixtures; ENAS \citep{pham2018enas} shares parameters across subgraphs; the 1000-paper meta-analysis \citep{white2023nas1000} confirms the field's objective is accuracy with occasional hardware-aware proxies, and its evaluation is fragile \citep{yang2020nasfrustrating,sciuto2020evalnas}. No NAS work accepts semantic specifications as input.
\textbf{Quant-side automation.} Qlib \citep{yang2020qlib} modularizes the full chain (data, model, backtest, portfolio, execution) but assembles by human workflow files; Quant 4.0 \citep{guo2024quant4} argues for end-to-end automation as a vision; AutoAlpha \citep{zhang2020autoalpha} and AlphaGen \citep{yu2023alphagen} search the factor stage alone with scalar fitness, as do AlphaAgent \citep{tang2025alphaagent} and AlphaForge. The modularity--search intersection inside one quantitative system remains unoccupied.

\subsection{LLM-agent quantitative R\&D}
R\&D-Agent \citep{yang2025rdagent} abstracts data-science R\&D as hypothesis--implement--feedback evolution; its quantitative version \citep{li2025rdagentquant} co-evolves factors and models against IC/return feedback, reporting strong backtest results at sub-\$10 research cost---the engineering nearest to our compiler, but with metrics, not specifications, as the steering signal. Alpha-GPT \citep{wang2023alphagpt} and its successor \citep{yuan2024alphagpt2} compile natural-language trading \emph{ideas} into alphas with human-in-the-loop review; we distinguish ideas (search-direction hints) from specifications (verifiable constraint sets) as disjoint input types. FunSearch \citep{romera2023funsearch} and AlphaEvolve \citep{novikov2025alphadevolve} prove the LLM$+$evaluator$+$evolution loop against programmatically computable scalar evaluators; OOQI can be read as replacing their scalar evaluator with a specification satisfier---plus the statistics to trust it. Decision-type agents \citep{zhang2024finagent,xiao2024tradingagents,yang2024finrobot} execute trades or reports, orthogonal to synthesis. The option-strategy DSL \citep{luo2026oql} is the closest specification-shaped neighbor, differing in requirement space (option legs/Greeks vs.\ identity profiles) and lacking satisfaction statistics. Declarative ML \citep{molino2022declarative} supplies the philosophical charter---separate \emph{what} from \emph{how}---which we instantiate for quantitative identity.

\subsection{Demand-side formalization}
Goals-based wealth management \citep{das2018gbwm} replaces scalar risk with goal-achievement probability; \citep{deguest2015gbwm} layers goals by importance with risk budgets; \citep{kim2020goalprog} gives the lexicographic goal-programming machinery our arbitration reuses. Mandate constraints \citep{roll1992te,jorion2003te} are the ancestors of F6 clauses; the transfer coefficient \citep{clarke2002tc} prices constraint cost and inspires Proposition~\ref{prop:cost}. Pure-factor portfolios \citep{clarke2017purefactor} are the mathematical prototype of style-zero clauses; RBSA \citep{sharpe1992style} and the Norwegian evaluation \citep{ang2009norway} supply the measurement and the evidentiary standard for alpha purity. Regime work \citep{ang2002regime,shu2024jump} grounds F4; multi-objective evolutionary methods \citep{deb2002nsga2,anagnostopoulos2009moea} supply Pareto machinery but flatten requirements into objective vectors, losing hard/soft semantics and priorities. Requirements engineering for ML \citep{vogelsang2019reml,rahimi2019reml} contributes vocabulary (elicitation, specification, verification) without financial instantiation.

\subsection{Evaluation credibility}
PBO/CSCV \citep{bailey2017pbo}, DSR \citep{bailey2014dsr}, Haircut-Sharpe \citep{harvey2015backtesting}, factor-zoo corrections \citep{harvey2016crosssection}, reality check/SPA \citep{white2000rc,hansen2005spa}, model-selection overfitting \citep{cawley2010overfitting}, NAS search-phase evaluation \citep{yang2020nasfrustrating,sciuto2020evalnas}, and underspecification certification \citep{damour2022underspecification}. None combines multi-clause satisfaction with search-width deflation; that combination---spec satisfaction $\times$ search-level multiple testing---is, to our knowledge as of August 2026, unoccupied, and constitutes the protocol's claim to novelty.

\section{Research Roadmap and Open Problems}\label{app:roadmap}

\paragraph{R1 Live re-architecture case study.} Apply the framework to an anonymized production strategy: translate its diagnosed identity into $\Spec$; re-architect within $\D$; certify. Pre-registered specifications, frozen before any search, are mandatory.
\paragraph{R2 Benchmark.} A public benchmark of (specification, data snapshot, ledger budget) triples with reference satisfaction rates would let compilers be compared like-for-like---the spec-satisfaction analogue of MLE-bench.
\paragraph{R3 Learning the interaction matrix.} Estimate Table~\ref{tab:interaction} empirically across markets and epochs; interactions are economic objects in their own right.
\paragraph{R4 Richer proposers.} LLM-prior-guided search is one point in a family; learned surrogates over the contracted space, and DARTS-like relaxations of module choice, are natural successors.
\paragraph{R5 Theory.} Partly addressed: \S\ref{sec:optimal} supplies NP-hardness with a submodular escape regime, Galois completion, shadow prices, and e-value certification. Still open: dependence-aware nulls tightening Proposition~\ref{prop:inflation}; approximation under supermodular conflict (when synergy makes joint satisfaction \emph{cheaper} than single-clause satisfaction); and a reflexivity theory in which adopted assemblies move the forward operator itself.
\paragraph{R6 Institutional grammar.} Clause libraries for mandates, ESG screens, and regulatory constraints would extend the language toward compliance-adjacent specifications.
\section*{Checklist of Claims and Where They Are Backed}
\begin{center}\small
\begin{tabular}{@{}p{0.5\textwidth}p{0.42\textwidth}@{}}
\toprule
\rowcolor{sand}\textbf{Claim} & \textbf{Backed by} \\
\midrule
Result-oriented selection systematically harvests style & P\ref{prop:contamination}; Table~\ref{tab:demo} row A \\
Identity compliance has bounded, measurable cost & P\ref{prop:cost}; $\mathrm{CoS}=5.5\%$ in Table~\ref{tab:demo} \\
Search inflates apparent satisfaction & P\ref{prop:inflation}; App.~\ref{app:protocol} \\
Satisfaction must be reported clause-wise with ledgers & \S\ref{sec:protocol}; Table~\ref{tab:demo} SR columns \\
The three-part combination is unoccupied in the literature & \S\ref{sec:related}; App.~\ref{app:related} (verified Aug.\ 2026) \\
The framework is deployable on real systems & \emph{not claimed}; roadmap R1 \\
\bottomrule
\end{tabular}
\end{center}

\clearpage
\section{The Clause Cookbook}\label{app:cookbook}

This appendix is the engineering reference for the specification language: thirty clauses across the eight families, each with its predicate, measurement protocol, the compiler's standard mechanism mapping, and the pitfalls that invalidate naive deployments. The cookbook is normative for implementers: a clause is only as good as its protocol.

\subsection{Family F1: Signal quality}

\begin{longtable}{@{}p{0.2\textwidth}p{0.25\textwidth}p{0.28\textwidth}p{0.2\textwidth}@{}}
\toprule
\rowcolor{sand}\textbf{Clause} & \textbf{Predicate} & \textbf{Protocol \& pitfalls} & \textbf{Compiler mapping} \\
\midrule
\endhead\bottomrule\endfoot
C1.1 RankIC level & $\overline{\mathrm{RankIC}} \ge \tau$ & Weekly cross-sectional Spearman of score vs.\ realized label; block-bootstrap CI. \emph{Pitfall}: never mix IC conventions (Pearson vs.\ Spearman) across clauses in one spec. & Raise predictor capacity (S5); recency weighting (S6); feature quota toward persistent families (S4) \\
C1.2 ICIR stability & $\mathrm{ICIR} = \overline{\mathrm{IC}}/\sigma_{\mathrm{IC}} \ge \tau$ & Weekly IC series; annualize with $\sqrt{52}$. \emph{Pitfall}: ICIR computed on overlapping labels inflates denominator correlation. & Smoothing (S7); longer label horizons (S2); ensemble training (S6) \\
C1.3 IC sign consistency & $\Pr[\mathrm{RankIC}_t > 0] \ge \tau$ & Share of weeks with positive RankIC. \emph{Pitfall}: cheap to satisfy with tiny positive mean; pair with C1.1. & Buffer/hysteresis (S7); multi-task horizons (S6) \\
C1.4 Tail IC & $\mathrm{RankIC}$ in top score decile $\ge \tau$ & IC restricted to the tradeable top decile---the decile the portfolio actually buys. & Score-weighted portfolio (S8); ranking losses emphasizing top ranks \\
\end{longtable}

\subsection{Family F2: Return--risk}

\begin{longtable}{@{}p{0.2\textwidth}p{0.25\textwidth}p{0.28\textwidth}p{0.2\textwidth}@{}}
\toprule
\rowcolor{sand}\textbf{Clause} & \textbf{Predicate} & \textbf{Protocol \& pitfalls} & \textbf{Compiler mapping} \\
\midrule
\endhead\bottomrule\endfoot
C2.1 Net excess return & $\mathbb{E}[r_{\mathrm{net}} - r_{\mathrm{bm}}] \ge \tau$ & Net of declared cost model; benchmark declared ex ante. \emph{Pitfall}: threshold must be $\tau_{\mathrm{ledger}}$-scaled (App.~\ref{app:protocol}). & Score$\to$weight optimizers (S8); breadth management (S1) \\
C2.2 Information ratio & $\mathrm{IR} \ge \tau$ & Excess mean / tracking error, annualized; report with DSR deflation \citep{bailey2014dsr}. & Risk-parity weighting (S8); regime overlays (S9) \\
C2.3 Max drawdown & $\mathrm{MDD} \le \tau$ & Peak-to-trough on net curve; per-year MDD also reported. \emph{Pitfall}: MDD is a single-path statistic---stress across regimes. & Circuit breakers (S9); de-risking overlays (S9) \\
C2.4 Volatility cap & $\sigma_{\mathrm{ann}} \le \tau$ & Realized vol of net returns. & Vol targeting (S8); lower-beta universe (S1) \\
\end{longtable}

\subsection{Family F3: Purity / style identity}

\begin{longtable}{@{}p{0.2\textwidth}p{0.25\textwidth}p{0.28\textwidth}p{0.2\textwidth}@{}}
\toprule
\rowcolor{sand}\textbf{Clause} & \textbf{Predicate} & \textbf{Protocol \& pitfalls} & \textbf{Compiler mapping} \\
\midrule
\endhead\bottomrule\endfoot
C3.1 Factor exposure budget & $|\beta_f| \le \tau_f\ \forall f \in \mathcal{F}$ & Time-series regression of portfolio excess on declared factor model (FF5/Carhart-style or Barra-style); $\tau_f$ must exceed estimation error $2\,\mathrm{se}(\hat\beta_f)$. \emph{Pitfall}: silent factor rotation---re-estimate under rotated premia (App.~\ref{app:protocol}). & Residual labels (S2); exposure residualization/projection (S7); adversarial de-styling (S6) \\
C3.2 Style-index correlation & $|\rho(r, r_{\mathrm{idx}})| \le \tau$ & Correlation vs.\ declared style indices (e.g., dividend, size, growth indices). \emph{Pitfall}: correlation is not exposure---require both C3.1 and C3.2. & Same mechanism chain as C3.1; index added to exposure model \\
C3.3 Selection share (RBSA) & $R^2_{\mathrm{style}} \le \tau$, i.e., residual share $\ge 1-\tau$ & Constrained RBSA fit \citep{sharpe1992style}; residual = selection. \emph{Pitfall}: RBSA window choice changes answers---fix window ex ante. & Whole purity chain; evaluation-layer attribution mandatory \\
C3.4 Market-direction independence & $|\rho(r, r_{\mathrm{mkt}})| \le \tau$ and $|\beta_{\mathrm{mkt}}| \le \tau'$ & Two predicates: correlation \emph{and} beta---a low-vol portfolio can have high $\rho$ with low $\beta$. & Beta-neutral overlay (S9); dollar-neutral construction (S8) \\
\end{longtable}

\subsection{Family F4: Regime robustness}

\begin{longtable}{@{}p{0.2\textwidth}p{0.25\textwidth}p{0.28\textwidth}p{0.2\textwidth}@{}}
\toprule
\rowcolor{sand}\textbf{Clause} & \textbf{Predicate} & \textbf{Protocol \& pitfalls} & \textbf{Compiler mapping} \\
\midrule
\endhead\bottomrule\endfoot
C4.1 Down-market conditional excess & $\mathbb{E}[r_{\mathrm{ex}} \mid s_t \in \mathcal{S}_{\mathrm{down}}] \ge \tau$ & States declared ex ante (market-return quintiles or jump-model states \citep{shu2024jump}); conditional evaluation on held-out states. \emph{Pitfall}: few down-states $\Rightarrow$ wide CIs---report per-state counts. & Regime-aware overlays (S9); defensive universe tilt (S1); purity chain (removes the main contamination channel) \\
C4.2 Conditional drawdown & $\mathrm{MDD} \mid \mathcal{S}_{\mathrm{down}} \le \tau$ & MDD computed over down-state subsequences. & Circuit breaker with state gating (S9) \\
C4.3 Per-year positive share & $\Pr[\mathrm{ann.\ excess}_y > 0] \ge \tau$ & Calendar-year segmentation; \emph{pitfall}: short histories give binomial noise---report exact binomial CI. & Long-horizon label composites (S2); ensemble (S6) \\
C4.4 Failure-residence time & $\mathbb{E}[\,\text{consec.\ wks}<0\,] \le \tau$ & Rolling-mean residence below zero; practitioner-informed monitoring clause. & Overlay suspension rule (S9); this clause often maps to \emph{operational} discipline rather than architecture---the compiler flags it as such \\
\end{longtable}

\subsection{Family F5: Execution feasibility}

\begin{longtable}{@{}p{0.2\textwidth}p{0.25\textwidth}p{0.28\textwidth}p{0.2\textwidth}@{}}
\toprule
\rowcolor{sand}\textbf{Clause} & \textbf{Predicate} & \textbf{Protocol \& pitfalls} & \textbf{Compiler mapping} \\
\midrule
\endhead\bottomrule\endfoot
C5.1 Turnover budget & $\mathrm{tov} \le \tau$ per period & Holdings-diff one-way turnover, averaged. \emph{Pitfall}: buffer bands distort entry timing---verify vs.\ price impact model. & EMA smoothing, buffer bands (S7); aim-portfolio adjustment (S9) \citep{garleanu2013dynamic} \\
C5.2 Cost robustness & $\mathrm{IR}_{\mathrm{net}}(c) \ge \tau\ \forall c \in \mathcal{C}$ & Cost grid $\mathcal{C} = \{5,10,20,50\}$ bps; net-satisfaction \emph{curve} reported, not a point \citep{novymarx2016costs,frazzini2018costs}. & Lower-signal-decay modules (S2/S7); wider portfolios (S8) \\
C5.3 Capacity & participation rate $\le \tau$ & Name-level traded value vs.\ ADV; \emph{pitfall}: ADV in stress regimes differs---stress-test with down-regime ADV. & Liquidity-gated universe (S1); wider $N$ (S8) \\
\end{longtable}

\subsection{Family F6: Portfolio structure}

\begin{longtable}{@{}p{0.2\textwidth}p{0.25\textwidth}p{0.28\textwidth}p{0.2\textwidth}@{}}
\toprule
\rowcolor{sand}\textbf{Clause} & \textbf{Predicate} & \textbf{Protocol \& pitfalls} & \textbf{Compiler mapping} \\
\midrule
\endhead\bottomrule\endfoot
C6.1 Breadth & $N_{\mathrm{hold}} \in [\tau_l, \tau_u]$ & Holdings count per rebalance; breadth enters IR scaling. & S8 width parameter \\
C6.2 Industry deviation & $\max_k |w_k - w_k^{\mathrm{bm}}| \le \tau$ & Industry mapping declared (SW/CSI/GICS); deviation per rebalance. & Industry-capped optimizer (S8) \\
C6.3 Position cap & $\max_i w_i \le \tau$ & Name cap per rebalance. & Capped score-weighting (S8) \\
C6.4 Tracking error & $\mathrm{TE} \le \tau$ & Std of active return; TE-constrained frontier geometry \citep{roll1992te,jorion2003te}. & TE-constrained optimizer (S8) \\
\end{longtable}

\subsection{Family F7: Temporal behavior}

\begin{longtable}{@{}p{0.2\textwidth}p{0.25\textwidth}p{0.28\textwidth}p{0.2\textwidth}@{}}
\toprule
\rowcolor{sand}\textbf{Clause} & \textbf{Predicate} & \textbf{Protocol \& pitfalls} & \textbf{Compiler mapping} \\
\midrule
\endhead\bottomrule\endfoot
C7.1 Profit concentration & $\frac{\mathbb{E}[\text{cum.\ ret.\ weeks }1..h/2]}{\mathbb{E}[\text{cum.\ ret.\ weeks }1..h]} \ge \tau$ & Event study of post-formation return curve; if concentration is high, holding to expiry destroys value. & Shorter label horizon (S2); smoothing window limited to concentration window (S7); front-half exit rule (S9, operational) \\
C7.2 Signal decay half-life & $\mathrm{HL}_{\mathrm{IC}} \ge \tau$ & Exponential fit of IC-by-lag curve. \emph{Pitfall}: half-life drifts with regime---report by state. & Horizon matching between label and holding period (S2/S8) \\
C7.3 Horizon identity & prediction horizon $= \tau$ & Declared and enforced contract between S2 and S6/S8. & Contract checker (App.~\ref{app:compiler}) \\
\end{longtable}

\subsection{Family F8: Transparency \& compliance}

\begin{longtable}{@{}p{0.2\textwidth}p{0.25\textwidth}p{0.28\textwidth}p{0.2\textwidth}@{}}
\toprule
\rowcolor{sand}\textbf{Clause} & \textbf{Predicate} & \textbf{Protocol \& pitfalls} & \textbf{Compiler mapping} \\
\midrule
\endhead\bottomrule\endfoot
C8.1 Tradability filters & no ST / suspended / limit-up-at-entry names & Audit of the selection filter chain per rebalance. \emph{Pitfall}: look-ahead in suspension flags---use point-in-time flags. & Filter chain (S1/S8); audit harness (evaluation) \\
C8.2 Interpretability artifact & attribution report exists and is reproducible & Regenerate report from logged artifacts; \emph{pitfall}: post-hoc narratives are not artifacts. & Importance/SHAP channels (S4/S5); logging contract \\
C8.3 Restricted-list conformance & no trades in restricted names & Point-in-time restricted list; audit per order. & Blacklist module (S1) \\
C8.4 Interpretable-engine floor & gating weight $w_t \ge \tau$ (e.g., $0.70$) at all rebalance times & Logged gating weights per rebalance; violation = any $w_t < \tau$. \emph{Pitfall}: the predicate constrains fusion structure, not P\&L attributability---pair with C8.2 for the latter; nearest structural family F6 currently has no fusion clause (cookbook gap, flagged). & Gating/fusion module (S7 fusion family, \S\ref{sec:case} extension); certificate field \texttt{gating\_log} \\
\end{longtable}

\paragraph{Cookbook usage.} A specification is assembled by selecting clauses and instantiating thresholds; the cookbook's compiler-mapping column gives the translation stage its default actions, which the proposer may override with justification logged. The pitfalls column is the institutional memory of the language: every pitfall listed was chosen because it is a documented way practitioners deceive themselves.

\clearpage
\section{Formal Grammar of the Specification Language}\label{app:grammar}

The specification language is a small typed DSL. We give its concrete syntax in BNF, its typing rules, and its semantics.

\subsection{Concrete syntax}

\begin{verbatim}
spec        := "spec" ident "{" { clause } "}"
clause      := "clause" ident ":" predicate
               [ "hard" | "soft" ] [ "priority" int ]
               [ "protocol" ident ] [ "family" familyid ]
predicate   := atom { ("and" | "or") atom }
atom        := coord rel thresh [ "window" range ] [ "state" stateid ]
coord       := "rankic_mean" | "icir" | "excess_ann" | "ir" | "mdd"
             | "beta_abs" "(" factorid ")" | "corr_abs" "(" indexid ")"
             | "rbsa_resid" | "cond_excess" "(" stateid ")" | "tov"
             | "participation" | "holdings" | "industry_dev" | "te"
             | "profit_conc" "(" int ")" | "ic_halflife" | ...
rel         := "<=" | ">=" | "=="
thresh      := number | "ledger" "(" number ")" | "se" "(" number ")"
stateid     := "down_quintile" | "jump_low" | "jump_high" | ident
familyid    := "F1" | ... | "F8"
\end{verbatim}

\subsection{Typing and validity rules}

\begin{enumerate}[leftmargin=1.7em,itemsep=1.5pt]
\item \textbf{Protocol completeness} (Popper rule): every clause must bind a \texttt{protocol} from the cookbook (App.~\ref{app:cookbook}); a clause without a protocol does not compile.
\item \textbf{Threshold discipline}: \texttt{ledger(k)} marks a threshold that scales as $k\sqrt{\max(1,\log N_{\mathrm{eff}})}$; \texttt{se(k)} marks one floored by $k$ standard errors of the underlying estimator. Hard F1/F2 clauses \emph{must} use \texttt{ledger} thresholds; hard F3 clauses \emph{must} respect the \texttt{se} floor.
\item \textbf{State legitimacy}: any \texttt{state} reference must resolve to a state declaration made ex ante (quintile rule or fitted jump model frozen before search).
\item \textbf{Contract reachability}: each clause's compiler mapping must reach at least one stage whose module set is non-empty after prior pruning; otherwise the spec is \emph{statically infeasible} and rejected before any backtest.
\item \textbf{Priority well-foundedness}: priorities form a total preorder; cycles in stated preferences are rejected.
\end{enumerate}

\subsection{Semantics}

The denotation of a spec is the pair $(\F(\Spec), g_\Spec)$: a feasible set and a scalarization. Hard clauses conjoin into $\F$; soft clauses define $g_\Spec = g + \sum_j w_j \phi_j$ under weighted mode, or the lexicographic order under priority mode. Satisfaction $d \models \Spec$ and $\mathrm{SR}$ are as in Def.~\ref{def:spec}. Two specs are \emph{equivalent} if they denote the same $(\F, g_\Spec)$; the parser normalizes for equivalence checking (clause reordering, threshold algebra on identical coordinates).

\section{Compiler Algorithms (Design Proposal---Not Yet Executed End-to-End)}\label{app:compiler}

We give pseudocode for the four compiler components. The reference implementation is Python; complexity notes assume $|\D|$ after contract pruning $\ll 10^{9}$ (typically $10^{4}$--$10^{6}$ after clause pruning).

\subsection{Contract checker}

\begin{verbatim}
def check_legality(assembly, contracts):
    for i in 1..K-1:
        if not compatible(output(assembly[i]), input(assembly[i+1])):
            return INFEASIBLE(stage=i, reason=contract_mismatch)
    return FEASIBLE
\end{verbatim}

Runs in $O(K)$ per assembly; executed once per proposed candidate and statically over the pruned space.

\subsection{Proposer (LLM-prior guided)}

\begin{verbatim}
def propose(spec, history, budget):
    pruned = prune_by_translation(spec)          # stage-I constraints
    prior  = llm_prior(spec, pruned, literature) # mechanism knowledge
    cand   = []
    while budget.remains():
        c = sample(prior, temperature=explore(history))
        if c in history or not check_legality(c): continue
        cand.append(c); history.log(c)           # ledger bookkeeping
        if len(cand) == batch: yield cand; cand = []
\end{verbatim}

The LLM prior is queried with the spec text, the pruned module inventory, and cookbook mappings; the temperature schedule anneals from exploration to exploitation as attribution accumulates. A Bayesian-optimization surrogate over module embeddings is a drop-in replacement.

\subsection{Attribution engine}

\begin{verbatim}
def attribute(assembly, profile, spec):
    failed = [c for c in spec if not profile.satisfies(c)]
    for c in failed:
        scores = {}
        for stage in c.mechanism_chain():
            m = assembly[stage]
            scores[stage] = dot(m.side_effect, c.affected_coords)
        report(c, argmax_descending(scores))     # rank suspects
    return attribution_report
\end{verbatim}

Attribution converts scalar regret into \emph{stage-indexed suspicion}: the side-effect signatures (Def.~\ref{def:space}) supply the gradient-free localization signal.

\subsection{Certifier}

\begin{verbatim}
def certify(champion, spec, ledger, holdout):
    theta_h  = evaluate(champion, holdout)       # isolated period
    margins  = clause_margins(theta_h, spec)
    sr_def   = mean(margin[j] >= se[j]*sqrt(2*log(ledger.N))
                    for j in clauses(spec))
    pbo      = cscv_satisfaction_pbo(champion, spec, ledger)
    stress   = run_stress_suite(champion, spec)  # App. E
    return Certificate(profile=phi_vector(theta_h),
                       sr=mean(phi_vector(theta_h)), sr_deflated=sr_def,
                       pbo=pbo, ledger=ledger, stress=stress,
                       verdict=verdict(sr_def, pbo, stress))
\end{verbatim}

\paragraph{Verdict semantics.} \textsc{satisfied}: $\widehat{\mathrm{SR}}_{\mathrm{def}} = 1$ and $\mathrm{PBO} < 0.1$; \textsc{satisfied-with-reservations}: some clause cleared raw but not deflated, or $0.1 \le \mathrm{PBO} < 0.25$; \textsc{infeasible-with-evidence}: no champion satisfies hard clauses---the certificate then lists per-clause shortfalls and the cheapest relaxations (App.~\ref{app:algebra}).

\clearpage
\section{Robustness of the Demonstration Across Seeds}\label{app:seeds}

The main-text demonstration uses one disclosed seed. Here we repeat the full experiment on ten seeds (identical DGP and protocol, seed varying), reporting per-seed selections of both paradigms (Table~\ref{tab:seeds}).

\begin{table}[h]
\caption{Ten-seed robustness of the demonstration. Seed 7 is the main-text run. SR = satisfaction rate (of 4 clauses). One seed (6) yields an empty feasible set---the honest infeasibility case the certificate is designed to report.}\label{tab:seeds}
\centering\small
\begin{tabular}{@{}cccccccc@{}}
\toprule
\rowcolor{sand} & & \multicolumn{3}{c}{\textbf{A: result-oriented}} & \multicolumn{3}{c}{\textbf{B: objective-oriented}} \\
\cmidrule(lr){3-5}\cmidrule(lr){6-8}
\rowcolor{sand} seed & feasible & IR$_{\mathrm{IS}}$ & SR$_{\mathrm{IS}}$ & SR$_{\mathrm{OOS}}$ & IR$_{\mathrm{IS}}$ & SR$_{\mathrm{IS}}$ & SR$_{\mathrm{OOS}}$ \\
\midrule
0 & 8 & 7.40 & 1.00 & 0.75 & 7.40 & 1.00 & 0.75 \\
1 & 3 & 6.69 & 0.00 & 0.25 & 6.06 & 1.00 & 0.50 \\
2 & 5 & 6.86 & 1.00 & 1.00 & 6.86 & 1.00 & 1.00 \\
3 & 4 & 6.83 & 0.25 & 0.00 & 6.43 & 1.00 & 0.50 \\
4 & 10 & 6.38 & 0.75 & 0.50 & 5.58 & 1.00 & 0.75 \\
5 & 8 & 6.62 & 0.75 & 1.00 & 5.86 & 1.00 & 0.75 \\
6 & 0 & 5.50 & 0.25 & 0.75 & \multicolumn{3}{c}{\emph{infeasible-with-evidence}} \\
7 & 2 & 7.10 & 0.25 & 0.75 & 6.71 & 1.00 & 0.75 \\
8 & 5 & 5.71 & 1.00 & 0.50 & 5.71 & 1.00 & 0.50 \\
9 & 3 & 6.60 & 0.50 & 0.75 & 4.79 & 1.00 & 1.00 \\
\midrule
\rowcolor{lightaccent} mean $\pm$ sd & 4.8$\pm$3.1 & 6.57$\pm$0.58 & 0.57$\pm$0.37 & 0.62$\pm$0.32 & 6.16$\pm$0.78 & 1.00$\pm$0.00 & 0.72$\pm$0.20 \\
\bottomrule
\end{tabular}
\end{table}

\paragraph{Findings.} (i) \emph{Feasibility is itself random}: the satisfying set contains $0$ to $10$ of 32 assemblies (mean $4.8 \pm 3.1$); seed 6 admits \emph{no} feasible assembly---an organically occurring \textsc{infeasible-with-evidence} case, demonstrating that honest infeasibility reporting is not a corner case but a routine possibility. (ii) \emph{Result-oriented satisfaction is a coin flip}: A's in-sample satisfaction averages $0.57 \pm 0.37$, spanning $0.00$ to $1.00$---sometimes the score leader happens to be pure, sometimes it violates every identity clause; the paradigm offers no control. (iii) \emph{Objective-oriented satisfaction is exact in-sample} on every feasible seed ($1.00 \pm 0.00$)---a definitional property of constrained selection, not an empirical finding, since $d^*_\F$ is selected inside $\F$---and is \emph{directionally more stable held-out} ($0.72 \pm 0.20$ vs.\ A's $0.62 \pm 0.32$). We stress the honest statistics: the paired difference over the nine feasible seeds (mean $+0.11$, sd $0.22$) is \emph{not} significant (exact two-sided sign test, $p = 0.375$), and under the convention that an infeasible draw scores $\mathrm{SR}=0$, B's held-out mean is $0.65$ rather than $0.72$---we report both conventions. What constrained selection provably removes is in-sample satisfaction variance; the held-out variance reduction is a consistent tendency this sample cannot confirm. (iv) \emph{Cost of specification varies} from $0$ (spec and score agree) to $1.81$ IR points (mean $0.53 \pm 0.58$), consistent with Proposition~\ref{prop:cost}: cost is a quantity with a distribution, budgetable ex ante by the temptation bound. (v) Held-out style violations, when they occur, are comparable in magnitude across paradigms here---an honest null result that strengthens the case for clause-wise stress certification rather than paradigm triumphalism.

\paragraph{What ten seeds cannot show.} Even ten seeds of one DGP probe one corner of the space of markets; we present this as logic validation, and the real-data robustness study is roadmap R1/R2. We deliberately report \emph{all} seeds including the inconvenient ones (0, 2, 8, where A is fully or mostly compliant, and 6, where nothing complies).

\clearpage
\section{Worked Implementation Case: A Neural Realization of a Pure-Alpha Specification}\label{app:neural}

This appendix walks the compiler end-to-end on one specification, producing a concrete neural architecture---\textbf{SpecNet}---as the certified assembly. Two remarks frame the case. First, the predictor is deliberately \emph{not} a gradient-boosted tree: the point of the exercise is that the specification, not modeling habit, selects the structure; a neural assembly with exposure-aware components satisfies clauses that tree assemblies satisfy only through external post-processing. Second, this is an engineering \emph{blueprint} with a verification plan (per \S\ref{sec:protocol}), not a claimed track record; the reference implementation accompanies this paper.

\subsection{The order}

A principal states, in practitioner diction: \emph{``Pure cross-sectional stock-selection alpha; zero exposure to market direction, size, dividend and growth styles; uncorrelated with the dividend index; must not bleed in unilateral declines; profits realize early in the holding week; turnover within budget.''} Parsing (App.~\ref{app:grammar}) yields:

\begin{verbatim}
spec PURE_ALPHA_WEEKLY {
  clause C31 : beta_abs(mkt) <= 0.05 and beta_abs(size) <= 0.05
               and beta_abs(div) <= 0.05 and beta_abs(growth) <= 0.05
               hard family F3 protocol EXPO_REG
  clause C32 : corr_abs(dividend_index) <= 0.10 hard family F3 protocol IDX_CORR
  clause C41 : cond_excess(down_quintile) >= 0.0 hard family F4 protocol STATE_COND
  clause C51 : tov <= 0.10 hard family F5 protocol TOV_ONeway
  clause C71 : profit_conc(5) >= 0.55 soft priority 1 family F7 protocol EVENT_STUDY
  clause C11 : rankic_mean >= ledger(0.02) soft priority 2 family F1 protocol IC_WEEKLY
}
\end{verbatim}

\subsection{Translation: from clause to mechanism}

Stage-I translation (Table~\ref{tab:clausemap}) constrains: S2 to residual or hedged labels; S5 to predictors admitting exposure-aware training; S7 to smoothing$+$buffer; S9 to a regime-gated overlay. The residual-label obligation and the turnover budget jointly \emph{rule out} plain tree ensembles without external machinery (their scores carry raw style loadings and oscillate week to week); a neural predictor with an adversarial de-styling head and a differentiable turnover penalty satisfies both obligations \emph{inside} the model---the compiler's stated reason for preferring it here.

\begin{table}[ht]
\caption{Clause-to-mechanism mapping for the worked case.}
\label{tab:clausemap}
\centering\small
\begin{tabular}{@{}p{0.13\textwidth}p{0.44\textwidth}p{0.36\textwidth}@{}}
\toprule
\rowcolor{sand}\textbf{Clause} & \textbf{Mechanism (architecture)} & \textbf{Mechanism (training/post)} \\
\midrule
C3.1 & adversarial style head with gradient reversal on factor scores & residual label $y - \hat X\beta$; exposure projection in S7 \\
C3.2 & dividend index included in adversarial factor set & index-correlation monitor in evaluation \\
C4.1 & regime-gated risk overlay (S9) & down-state sample weighting in loss \\
C5.1 & EMA score smoothing; buffer band in portfolio layer & turnover penalty $\lambda_{\mathrm{to}} \cdot \mathrm{tov}$ in loss \\
C7.1 & label horizon matched to concentration window (2--3 days) & event-study validation; front-half exit rule \\
C1.1 & listwise rank loss on cross-section & recency-weighted rolling refit \\
\bottomrule
\end{tabular}
\end{table}

\subsection{The assembled architecture}

\begin{figure}[ht]
\centering
\resizebox{\textwidth}{!}{%
\begin{tikzpicture}[font=\footnotesize,box/.style={draw=ruleline,fill=sand,rounded corners=1mm,align=center,minimum height=0.72cm,inner sep=3pt},arr/.style={-{Stealth[length=2mm]},line width=0.8pt,accent}]
\node[box,minimum width=2.6cm] (inp) at (0,0) {feature sequences\\$20\text{wk}\times F$, 5 families};
\node[box,minimum width=2.9cm] (enc) at (3.4,0) {family-gated\\temporal encoder\\(per-family GRU $+$ gates)};
\node[box,minimum width=2.7cm] (att) at (6.9,0) {cross-sectional\\attention};
\node[box,minimum width=2.1cm,fill=lightaccent] (score) at (9.9,0.8) {score head\\$\mathcal{L}_{\mathrm{rank}}$};
\node[box,minimum width=2.1cm,fill=lightaccent] (adv) at (9.9,-0.8) {adversarial style\\head (reversed)};
\node[box,minimum width=3.1cm] (post) at (13.2,0) {exposure projection\\$+$ EMA $+$ buffer\\$+$ regime overlay};
\draw[arr] (inp) -- (enc); \draw[arr] (enc) -- (att);
\draw[arr] (att) -- (score); \draw[arr] (att) -- (adv);
\draw[arr] (score.east) -- ++(0.25,0) |- ([yshift=0.2cm]post.west);
\draw[arr] (adv.east) -- ++(0.25,0) |- ([yshift=-0.2cm]post.west);
\end{tikzpicture}}%
\caption{SpecNet: the compiler's neural assembly for the pure-alpha weekly specification. Style purity is mechanized \emph{inside} training (adversarial head) rather than only by external post-processing.}
\label{fig:specnet}
\end{figure}

\textbf{SpecNet} (Fig.~\ref{fig:specnet}): per-name feature sequences (20 weeks $\times$ $F$ features, grouped into families: mild-momentum, quality, liquidity, volatility, sentiment) enter a \textbf{family-gated temporal encoder}---per-family GRUs whose outputs are mixed by learned gates, letting the model express the ``mild momentum $+$ quality'' tilt as gate marginals rather than fixed weights. A \textbf{cross-sectional attention layer} contextualizes each name against the week's universe (relational ranking in the spirit of \citep{feng2019rsr}). The representation feeds two heads: a \textbf{score head} (listwise rank loss over the cross-section) and an \textbf{adversarial style head} that tries to predict the factor realizations from the score---gradient reversal forces the score representation to be unpredictive of style \citep{ganin2016dann}, mechanizing C3.1/C3.2 inside training. Post-processing applies exposure projection, EMA smoothing ($\alpha=0.4$), and a buffer band ($N{+}8$); a regime overlay gates gross exposure by jump-model state probabilities \citep{shu2024jump}. Training is rolling-window with warm-start chaining; the loss is
\[
\mathcal{L} \;=\; \mathcal{L}_{\mathrm{rank}} \;-\; \lambda_{\mathrm{adv}}\,\mathcal{L}_{\mathrm{style}} \;+\; \lambda_{\mathrm{to}}\,\widehat{\mathrm{tov}} \;+\; \lambda_{\mathrm{dn}}\,\mathcal{L}_{\mathrm{down}},
\]
with $\mathcal{L}_{\mathrm{style}}$ the adversary's (reversed) objective, $\widehat{\mathrm{tov}}$ a differentiable turnover surrogate, and $\mathcal{L}_{\mathrm{down}}$ weighted ranking loss on down-state weeks.

\subsection{Verification plan and honesty statement}

Certification follows \S\ref{sec:protocol}: temporal holdout isolated from all architecture and hyperparameter decisions; search ledger counting every evaluated variant; clause-wise deflation; stress suite (factor-premium rotation, synthetic unilateral decline, cost grid). \emph{No performance claim is made here}: this appendix demonstrates the compiler's output \emph{form}---a complete, clause-traceable architecture---on a blueprint that is deliberately distinct from any GBDT-based production system. The reference PyTorch skeleton ($\sim$150 lines) is released with the paper as \texttt{specnet\_blueprint.py}.

\clearpage
\section{The Specification Writer's Handbook}\label{app:handbook}

Practical guidance for principals, distilled from the framework's semantics.

\subsection{Five rules of good specifications}
\begin{enumerate}[leftmargin=1.7em,itemsep=1.5pt]
\item \textbf{Every hard clause must hurt to lose.} If you would not walk away from a strategy that violates it, it is soft. Over-hardening is the top cause of static infeasibility.
\item \textbf{Pin protocols, not just thresholds.} ``Low correlation'' without a declared index, window, and estimator is a wish, not a clause (Popper rule, App.~\ref{app:grammar}).
\item \textbf{Budget the interactions.} Consult the conflict matrix (Table~\ref{tab:interaction}) before pairing IC-level demands with turnover budgets or purity with capacity.
\item \textbf{Order your soft clauses.} Unprioritized soft lists delegate arbitration to the optimizer's whims; lexicographic priority is cheaper to state than to regret.
\item \textbf{Demand the certificate, not the story.} A satisfying narrative without ledger $N$, deflation, and holdout isolation is marketing.
\end{enumerate}

\subsection{Ten profiles, one language (the thousand-faces property)}

\begin{longtable}{@{}p{0.18\textwidth}p{0.42\textwidth}p{0.33\textwidth}@{}}
\caption{Ten distinct identities expressible in the language; hard clauses abbreviated. Illustrative mappings only---none of these profiles has been compiled; the ``compiler consequences'' column is hypothesized, not measured.}\label{tab:profiles}\\
\toprule
\rowcolor{sand}\textbf{Profile} & \textbf{Hard clauses (abbrev.)} & \textbf{Compiler consequences} \\
\midrule
\endfirsthead
\toprule
\rowcolor{sand}\textbf{Profile} & \textbf{Hard clauses (abbrev.)} & \textbf{Compiler consequences} \\
\midrule
\endhead\bottomrule\endfoot
Pure-alpha defensive & C3.1, C3.2, C3.4, C4.1, C5.1 & adversarial de-styling; smoothing+buffer; regime overlay \\
Momentum aggressive & C1.1 high, C2.3 relaxed & short-horizon labels; no smoothing; wide $N$ \\
Capacity-first institutional & C5.3, C5.1, C6.1 wide & liquidity-gated universe; aim-portfolio overlay \\
Market-neutral & C3.4 strict, C2.2 & dollar-neutral construction; beta overlay \\
Style-tilted dividend & C3.2 \emph{lower bound} $\rho \ge \tau$, C2.1 & deliberate dividend loading; tilt monitor \\
ESG-constrained & C8.3, C3.1 & restricted list; purity chain on residual universe \\
Crisis-defensive & C4.1, C4.2 strict, C2.3 & jump-state overlay; vol targeting \\
Index-enhanced & C6.4 TE band, C2.1 modest & TE optimizer; industry caps \\
Short-horizon tactical & C7.3 $h{=}1$, C5.1 relaxed & fast predictors; no buffer \\
Research-friendly transparent & C8.2, C1.2 & linear/GBDT predictors; SHAP artifacts \\
\end{longtable}

Each row compiles to a materially different assembly over the \emph{same} design space and the \emph{same} machinery---the operational content of the thousand-faces claim, with the honesty caveat of \S\ref{sec:framework}: profiles whose feasible sets are empty receive \textsc{infeasible-with-evidence}, not a forced fit.

\clearpage
\section{Inverse-Problem Formalization: Well-Posedness of Strategy Synthesis}\label{app:inverse}

This appendix develops the inverse-problem reading of \S\ref{sec:inverse} into the framework's fourth theoretical pillar. References for the classical theory: Hadamard's well-posedness doctrine \citep{hadamard1902}, Tikhonov regularization \citep{tikhonov1963}, the standard monograph on deterministic inverse problems \citep{engl1996regularization}, statistical inversion \citep{kaipio2005statistical}, and PDE-constrained optimization \citep{hinze2009pdeopt}.

\subsection{The forward operator and the duality}

\begin{definition}[Forward operator]\label{def:fwd}
Let $\Omega$ be the space of market realizations (return panels and auxiliary data paths) and $\Theta \subset \mathbb{R}^{d_\theta}$ the statistics space (Def.~\ref{def:eval}). The \emph{forward operator} $\mathscr{F} : \D \times \Omega \to \Theta$ maps an assembly and a realization to evaluation statistics. \emph{Forward problem}: fix $d$, sample $\omega$, compute $\mathscr{F}(d,\omega)$, verify against reality---this is backtesting. \emph{Inverse problem}: given a target set $\mathcal{T} \subseteq \Theta$ defined by specification predicates, find $d$ such that $\mathscr{F}(d,\omega) \in \mathcal{T}$ with high probability over $\omega$---this is OOQI synthesis.
\end{definition}

The duality is exact in one direction and statistical in the other: the forward map is well-defined (every assembly evaluates to statistics), while the inverse map is set-valued, noise-corrupted, and---as we now show---classically ill-posed in all three Hadamard senses, with each ill-posedness already tamed by a component of the framework.

\subsection{Existence: feasibility theory}

\begin{remark}[Existence characterization---partly conjectural]\label{prop:existence}
$\F(\Spec) \neq \emptyset$ iff $\mathcal{T} \cap \mathcal{R} \neq \emptyset$, where $\mathcal{R} = \mathrm{conv}\{\mathscr{F}(d,\omega) : d \in \D\}$ is the reachable statistics set. Two useful consequences: (i) \emph{static infeasibility}---if a clause's mechanism chain reaches no module after contract pruning (App.~\ref{app:grammar}, rule 4), then $\mathcal{T} \cap \mathcal{R} = \emptyset$ is decidable \emph{before any backtest}; (ii) \emph{mechanism reachability} (conjecture)---if every clause admits a mechanism chain and the stage module sets are non-empty, then for thresholds in the interior of the reachable coordinate ranges, existence should hold except on a measure-zero boundary; we do not prove this genericity statement, and flag it as the formal gap a full proof requires; infeasibility then concentrates on \emph{threshold ambition} (too-tight $\tau$) and \emph{conflicting joints} (App.~\ref{app:algebra}), which is exactly what the certifier's relaxation report quantifies.
\end{remark}

The ten-seed study (App.~\ref{app:seeds}) instantiates both cases: nine seeds are feasible (existence holds, uniqueness fails), seed 6 is infeasible (existence fails)---and the framework's duty in that case is precisely the inverse-problem duty: report non-existence with evidence rather than force a fit.

\subsection{Uniqueness: the satisfying set is an equivalence class}

\begin{remark}[Non-uniqueness / underspecification]\label{prop:uniqueness}
Whenever clauses constrain $k < d_\theta$ coordinates and $\F(\Spec) \neq \emptyset$, the satisfying set $\F(\Spec)$ is generically not a singleton: any two assemblies differing only in unconstrained coordinates both satisfy. Hence inverse solutions form an equivalence class $[d]$, and additional \emph{distinguishing functionals} are required to select within the class.
\end{remark}

This is the inverse-problem restatement of the underspecification doctrine \citep{damour2022underspecification}: stress tests are exactly the distinguishing functionals---they separate class members that tie in-sample. The compiler's stress suite (App.~\ref{app:protocol}) is therefore not an extra precaution but the theoretically mandated completion of the solution concept.

\subsection{Stability: margins as Tikhonov regularization}

Clause predicates are evaluated on noisy estimates: $\hat\theta_j(d) = \theta_j(d) + \varepsilon_j$, $|\varepsilon_j| \le \delta_j$ w.h.p. Naive selection on $\hat\phi$ is unstable (Proposition~\ref{prop:inflation}). Define the \emph{margin} of assembly $d$ on clause $c_j$ as $m_j(d) = $ signed distance of $\hat\theta_j(d)$ inside the feasible half-space, and the \emph{margin-regularized selection}
\[
d_\lambda \;=\; \argmax_{d \in \D}\; \Big[ g(\hat\theta(d)) + \lambda \min_{j:\, \textsc{hard}} m_j(d) \Big].
\]

\begin{proposition}[Stability via margins]\label{prop:stability}
If $d_\lambda$ is selected with worst hard-clause margin $m_{\min}(d_\lambda) \ge \max_j \delta_j$, then satisfaction of all hard clauses is retained under perturbations up to $\max_j \delta_j$: the estimate's satisfaction is stable to the noise level. Conversely, any selection with $m_{\min} < \max_j \delta_j$ admits a noise realization within bounds that flips a hard clause. Margin maximization is thus exactly Tikhonov regularization for the inverse problem---stability purchased with bias (score sacrificed to CoS, Proposition~\ref{prop:cost}).
\end{proposition}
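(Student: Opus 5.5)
The plan is to reduce everything to a one-dimensional statement about a single threshold predicate, and then take the worst case over hard clauses. First I will fix the orientation conventions of Def.~\ref{def:spec}. For a clause with $\mathrm{rel}_j = {\ge}$, the margin is $m_j(d) = \hat\theta_j(d) - \tau_j$. For $\mathrm{rel}_j = {\le}$, it is $m_j(d) = \tau_j - \hat\theta_j(d)$. In both cases $\hat\phi_j = 1$ iff $m_j \ge 0$.

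A perturbation of size $\varepsilon_j$ in coordinate $j$ moves the evaluated statistic to $\hat\theta_j(d) \pm \varepsilon_j$. It therefore changes the margin to exactly $m_j(d) - \sigma_j\varepsilon_j$, where the sign $\sigma_j \in \{\pm 1\}$ is fixed by the orientation. The key identity is that a perturbation acts on the margin additively and with unit slope. Everything follows from this, because the feasible region of a threshold predicate is a half-line whose boundary lies at signed distance $m_j$.

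For the retention direction, take any perturbation vector with $|\varepsilon_j| \le \delta := \max_i \delta_i$ in every coordinate. For each hard clause I use the chain
\[
m_j(d_\lambda) - \sigma_j\varepsilon_j \;\ge\; m_{\min}(d_\lambda) - \delta \;\ge\; 0.
\]
So every hard predicate still evaluates to $1$. The non-strict relations $\le,\ge$ of Def.~\ref{def:spec} make the boundary case $m_{\min} = \delta$ harmless. The same computation also shows the statement's per-coordinate version: if $m_j \ge \delta_j$ for each $j$, then satisfaction survives perturbations with $|\varepsilon_j| \le \delta_j$. This is the form Def.~\ref{def:margin} quotes.

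For the converse, let $j^\star$ attain the minimum hard margin, so $m_{j^\star} < \delta$. I will construct a perturbation that is zero outside $j^\star$ and has $\sigma_{j^\star}\varepsilon_{j^\star} = \eta$, with $\eta \in (\max(m_{j^\star},0), \delta]$. The new margin is $m_{j^\star} - \eta < 0$, which flips the clause. If $m_{j^\star} < 0$, the clause already fails and $\varepsilon = 0$ suffices.

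The step I expect to be the main obstacle is a mismatch in the converse between the per-coordinate bounds $\delta_j$ and the single bound $\max_j \delta_j$. If the binding clause $j^\star$ has a small noise scale $\delta_{j^\star} < m_{j^\star} < \max_j \delta_j$, no admissible perturbation flips it, so the converse is false as literally stated. I would resolve this by making the perturbation class explicit. Either the converse is read against the uniform box $|\varepsilon_j| \le \max_i \delta_i$, matching the hypothesis of the forward direction, or both directions are restated coordinate-wise with the normalized quantity $\min_j m_j/\delta_j \ge 1$. The latter is the certified-radius normalization of Def.~\ref{def:margin}, and under it the equivalence is exact. I will state which reading is proved.

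Finally, the Tikhonov sentence is interpretive rather than a separate claim, and I will justify it directly. The penalty $\lambda \min_j m_j$ plays the role of the stabilizing functional. As $\lambda$ grows, $d_\lambda$ moves toward the max-min-margin assembly, which by the above is the most noise-stable choice. The price is that $g(\hat\theta(d_\lambda)) \le g(\hat\theta(d^*_\F))$ whenever $d_\lambda \in \F$, which is a score loss adding to $\mathrm{CoS}$ as in Proposition~\ref{prop:cost}. That is the bias--stability trade-off being claimed.
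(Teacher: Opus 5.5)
Your forward direction is the paper's argument. The paper's entire proof is one line: $m_j(d)\ge\delta_j$ keeps $\theta_j(d)$ in the feasible half-space for every $|\varepsilon_j|\le\delta_j$. That is your unit-slope margin identity plus the chain $m_j-\sigma_j\varepsilon_j\ge m_{\min}-\delta\ge 0$; the paper proves the per-coordinate version, which the statement's hypothesis $m_{\min}\ge\max_j\delta_j$ implies.

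You go beyond the paper on the converse. The paper gives no argument for it at all. Its formal restatement (Proposition~\ref{prop:margin}, App.~\ref{app:formalproofs}) does not restate it either: it keeps only the forward claim, under a single uniform bound $\varepsilon$ in scale-normalized units, which is exactly your first proposed repair.

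Your diagnosis that the converse fails under per-coordinate noise bounds is correct, but the counterexample needs tightening. An unflippable binding clause $j^\star$ is not enough, because the converse only asks that \emph{some} hard clause be flippable. You need $m_j\ge\delta_j$ for every $j$ while $\min_j m_j<\max_j\delta_j$; for example $m=(0.5,1.0)$ and $\delta=(0.1,0.8)$ works.

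With that fix, both of your readings give an exact equivalence through your single-coordinate flipping construction: the uniform box of radius $\max_i\delta_i$, and the normalized condition $\min_j m_j/\delta_j\ge 1$ matching the certified radius of Def.~\ref{def:margin}. Treating the Tikhonov sentence as interpretation is also the right level of claim. Your score-loss observation $g(\hat\theta(d_\lambda))\le g(\hat\theta(d^*_\F))$ for $d_\lambda\in\F$ is all that can be proved there, and the paper gives that sentence no formal content either.
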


The proof is one line: $m_j(d) \ge \delta_j$ implies $\theta_j(d)$ stays in the feasible half-space for any $|\varepsilon_j| \le \delta_j$. Its content is practical: certificates should report margins, and selection should prefer assemblies that satisfy clauses \emph{deeply}, not barely---the inverse-problem justification for the deflation margin $s_j\sqrt{2\log N}$ of App.~\ref{app:protocol}.

\subsection{Bayesian reading and the adjoint link}

Two further correspondences complete the picture. \emph{Bayesian inversion} \citep{kaipio2005statistical}: a prior $\pi(d)$ over assemblies (the LLM prior over mechanisms, App.~\ref{app:compiler}), a likelihood from evaluation evidence ($\hat\theta$ given $d$), and a posterior over $\D$ whose credible sets answer ``which assemblies plausibly satisfy $\Spec$''---the satisfaction certificate is a frequentist shadow of this posterior summary. \emph{Adjoint methods} \citep{hinze2009pdeopt}: PDE-constrained optimization localizes sensitivity of the objective to coefficients via adjoint states; the compiler's attribution engine is the gradient-free analogue, localizing clause failure to stages via side-effect signatures---where the forward operator is a simulator rather than a differentiable equation, signatures replace adjoints.

\subsection{Where the analogy ends---honestly}

(i) \emph{Nonstationarity}: PDE coefficients do not decide to change; market operators do. The inverse solution is therefore valid only over a regime, and rolling re-certification is a structural necessity, not prudence. (ii) \emph{No ground-truth operator}: unlike tomography, there is no true underlying $d$ to recover---$\D$ is a design space, not nature's space; inverse ``recovery'' is inverse \emph{design}. (iii) \emph{Reflexivity}: widely adopted assemblies alter the forward operator itself (alpha decay as market feedback), a coupling classical inverse theory does not model; roadmap R5 names the open problem. Within these boundaries, the correspondence is exact enough to import the theory; beyond them, the framework proceeds on its own statistical feet (\S\ref{sec:protocol}).

\subsection{Threshold calibration: the discrepancy principle}

Regularization theory does not leave the regularization parameter free: the Morozov discrepancy principle \citep{morozov1966,engl1996regularization} chooses it so that the solution's residual matches the noise level of the data. The principle transfers directly to specification thresholds. Let $\delta_j$ denote the statistical resolution of clause $c_j$'s estimator (standard error at the evaluation horizon). Asking for margin finer than the noise floor---certifying $\theta_j \ge \tau_j$ with achievable margins $\ll \delta_j$---manufactures instability: noise realizations within $\delta_j$ flip the verdict (Proposition~\ref{prop:stability}), and search inflates to fill the gap (Proposition~\ref{prop:inflation}). Asking for margin vastly coarser than $\delta_j$ wastes identity the data could have certified. The discrepancy rule therefore sets thresholds where the certified assembly's margin lands at a small multiple of resolution, $m_j \in [\delta_j,\; c\,\delta_j]$ with $c \approx 2$--$3$: \emph{the strongest specification consistent with the noise floor}. This upgrades the grammar's \texttt{se(k)} and \texttt{ledger(k)} floors (App.~\ref{app:grammar}) from discipline to principle, and it hands principals an actionable diction: do not order identity finer than your estimator can see.

\subsection{A stability estimate: the condition number of a specification}

Classical inverse problems report stability through a modulus of continuity \citep{engl1996regularization}. Here the object is elementary but informative. Perturb the specification's thresholds, $\tau \to \tau' = \tau + \Delta\tau$, clauses fixed. If every member of $\F(\Spec)$ holds margin at least $m_{\min}$ on the perturbed coordinates and $|\Delta\tau_j| < m_{\min}$ for all $j$, then no member is lost: $\F(\Spec) \subseteq \F(\Spec')$. For the reverse inclusion, a non-member $e$ can enter only if some loosening covers its worst violation $v(e) = \max_j (-m_j(e))_+$; writing $v_{\min} = \min_{e \notin \F(\Spec)} v(e)$ for the distance of the closest almost-feasible assembly, $|\Delta\tau_j| < \min(m_{\min},\, v_{\min})$ for all $j$ gives $\F(\Spec') = \F(\Spec)$ \emph{exactly}: the feasible set is locally constant, with modulus $\omega(\Delta\tau) = 0$ inside that radius. The worst margin is thus a \emph{condition number} of the specification---$\kappa(\Spec) = 1 / m_{\min}$ in units of estimation scale: small-margin specifications are ill-conditioned (threshold nudges expel shallow members or admit shallow non-members), and margin-maximizing selection (\S\ref{sec:inverse}) is conditioning improvement. Certificates that quote $m_{\min}$ per clause therefore quote, simultaneously, the stability radius of the entire determination---one number doing the work of three.

\subsection{Bayesian contraction and the effective sample size of a regime}

Under the Bayesian reading of \S\ref{sec:inverse}, a prior $\pi(d)$ (the LLM mechanism prior) updated by $T$ weeks of evaluation likelihood yields a posterior over satisfaction events that contracts at the parametric rate $T^{-1/2}$ up to model-misspecification terms \citep{kaipio2005statistical,stuart2010}: the certificate's confidence intervals are the frequentist shadow of this contraction. The useful refinement is what counts as $T$. Because the forward operator is regime-dependent, the effective sample size is the \emph{regime length}, not the calendar length: a three-year window spanning four regimes certifies four $T \approx 39$-week experiments, not one 156-week experiment, and contraction claims must be discounted accordingly. This quantifies the structural necessity of rolling re-certification (nonstationarity, \S\ref{sec:inverse}): certificates expire because their $T$ stops growing when the regime turns. The e-process construction of Proposition~\ref{prop:evalue} is the monitoring layer that makes expiry observable rather than silent; the two theories meet exactly here---posterior contraction says how fast trust accrues within a regime, anytime validity says how to police it across regime boundaries.

\section{Numerical Illustration of the Inverse Problem}
\label{app:invnum}

This appendix instantiates the inverse-problem formalism of Appendix~\ref{app:inverse} on the
synthetic market of Section~\ref{sec:demo}. Every number below is computed, not assumed; the data-generating
process (DGP), seed, and code path are those of Appendix~\ref{app:repro}. The purpose is not empirical
validation---the market is synthetic by construction---but \emph{operational illustration}: each
abstract object (forward operator, feasible set, regularizer, stability estimate) is made
concrete and measurable.

\subsection{The forward operator, discretized}
With $|\mathcal{D}| = 32$ assemblies in the demonstration search space, the forward operator
$\mathscr{F}:\mathcal{D}\times\Omega\to\Theta$ is a finite map. We evaluate it at the realized market draw
$\omega_7$ (seed 7) and record, for every assembly $d$, the observation vector
$\theta(d) = (\mathrm{IR},\; s_{\mathrm{style}},\; |\rho|,\; s_{\mathrm{down}},\; \tau)$:
information ratio, style exposure, absolute correlation with the dividend-style index, downside capture,
and turnover. The specification of Section~\ref{sec:demo} is
$\mathrm{Spec} = \{s_{\mathrm{style}}\le 6,\; |\rho|\le 0.15,\; s_{\mathrm{down}}\ge 15\%,\;
\tau\le 0.10\}$, all four clauses hard.

\subsection{The feasible set is small and structured}
Exactly $2$ of $32$ assemblies are feasible, and they are siblings: both use horizon $h=2$,
neutralization on, ranking on, top-$N=50$, execution variant B. They differ only in the
label-aggregation switch. This is a first concrete reading of
Remarks~\ref{prop:existence} (existence is
conditional) and~\ref{prop:uniqueness} (non-uniqueness): the feasible set $\mathcal{F}(\mathrm{Spec})$
is neither empty nor a singleton, and its two elements share most of their genotype---feasibility
is \emph{concentrated in a region} of $\mathcal{D}$, a fact the compiler exploits via warm-start
and the analyst exploits via sensitivity reading (Section~\ref{sec:protocol}).

\subsection{Clause margins and regularized selection}
For each assembly $d$ and clause $\varphi_j$ with threshold $c_j$, define the signed margin
$m_j(d)$: the distance from the observed statistic to the threshold, positive when satisfied,
normalized by the clause's scale (e.g.\ $m_{\mathrm{style}} = (6 - s_{\mathrm{style}})/6$).
Let $m_{\min}(d)=\min_j m_j(d)$. Proposition~\ref{prop:stability} (stability through margins) is the statement that
selection should trade satisfaction against margin depth, exactly as Tikhonov regularization
trades data fit against solution norm. Table~\ref{tab:margins} shows the two feasible assemblies.

\begin{table}[ht]
\centering\small
\caption{The feasible set at seed 7, with minimum clause margin. IR-based selection picks the
first row; margin-regularized selection (Proposition~\ref{prop:stability}) picks the second.}
\label{tab:margins}
\begin{tabular}{lccc}
\toprule
Assembly & IS IR & $m_{\min}$ & Selected by \\
\midrule
$(h2,1,1,50,B)$ & 6.707 & 0.073 & argmax IR within $\mathcal{F}$ (= assembly B of \S\ref{sec:demo}) \\
$(h2,0,1,50,B)$ & 5.541 & 0.122 & argmax $m_{\min}$ within $\mathcal{F}$ \\
\bottomrule
\end{tabular}
\end{table}

The two feasible solutions are ordered oppositely by the two criteria: the IR-maximal feasible
assembly sits barely inside the feasible region ($m_{\min}=0.073$, i.e.\ its tightest clause is
satisfied by only $7.3\%$ of the clause scale), while the deeper-margin sibling sacrifices
$1.17$ IR units ($17.4\%$) for a $67\%$ deeper worst-case margin. This is the cost-of-specification
of Proposition~\ref{prop:cost} appearing a second time, now \emph{inside} the feasible set, as the price of
robustness. In the inverse-problem reading: the unregularized inverse picks a solution on the
boundary of the feasible set, where the forward map's local conditioning is worst; Tikhonov-style
margin regularization moves the solution into the interior, where small perturbations of the
market draw $\omega$ are least likely to flip a clause.

\subsection{Stability under resampling: the bootstrap certificate}
To quantify that conditioning claim we ran a block bootstrap: $1000$ replications, resampling
$8$-week blocks of the in-sample window with replacement, recomputing all four clause statistics
for assemblies A (the result-oriented champion of \S\ref{sec:demo}) and B (the objective-oriented selection)
on each replication. Table~\ref{tab:boot} reports per-clause satisfaction retention and the joint
retention probability $\widehat{\Pr}[\text{all four clauses satisfied}]$.

\begin{table}[ht]
\centering\small
\caption{Block-bootstrap satisfaction retention ($1000$ reps, $8$-week blocks, IS window).
Clause order: style, $|\rho|$, downside, turnover. Joint = probability all four hold
simultaneously.}
\label{tab:boot}
\begin{tabular}{lcccc|c}
\toprule
Assembly & style $\le 6$ & $|\rho|\le .15$ & down $\ge 15\%$ & $\tau\le .10$ & Joint \\
\midrule
A (result-oriented) & 0.089 & 0.294 & 0.308 & 1.000 & \textbf{0.018} \\
B (objective-oriented) & 0.231 & 0.725 & 0.211 & 1.000 & \textbf{0.050} \\
\bottomrule
\end{tabular}
\end{table}

Three readings. First, the feasibility of B measured at the point estimate is fragile in
absolute terms---its joint retention is $5.0\%$---because B's worst clauses (downside, then
style) sit near their thresholds; this is precisely what $m_{\min}=0.073$ announced, and it is
why the verification protocol of Section~\ref{sec:protocol} reports margins alongside binary satisfaction.
Second, B dominates A on the two clauses the specification actually constrains most
(style $0.231$ vs $0.089$; $|\rho|$ $0.725$ vs $0.294$) and on joint retention
($0.050$ vs $0.018$, a $2.8\times$ improvement): the objective-oriented selection moves
probability mass in the right direction on exactly the dimensions the user specified, while the
result-oriented champion's apparent satisfaction of the style clause at the point estimate is
largely luck of the draw. Third, turnover retention is $1.000$ for both: the turnover clause is
satisfied structurally (both assemblies trade slowly by construction), not statistically---a
distinction the certificate should record, because structural satisfaction does not consume
statistical budget in the deflation of Section~\ref{sec:protocol} and Appendix~\ref{app:satstat}.

The bootstrap thus converts the Hadamard triad of Appendix~\ref{app:inverse} into an audit
artifact: \emph{existence} (the feasible set is non-empty at the point estimate),
\emph{uniqueness} (two feasible solutions, reported rather than hidden), \emph{stability}
(per-clause and joint retention probabilities, with margins explaining the numbers). A compiler
that emitted only the point estimate would be reporting an ill-posed inverse solution; the
certificate makes the posedness explicit.

\subsection{What a practitioner should take from this}
The demonstration's lesson generalizes beyond the synthetic market. Whenever a specification is
satisfied, the questions ``by how much'' (margins), ``under which resampling'' (retention), and
``at what cost'' (CoS) have definite, computable answers, and those answers routinely reorder
the candidate set. The paradigm shift is therefore not rhetorical: result-oriented selection is
the unregularized inversion of an ill-conditioned map, and the mathematics of inverse problems
predicts---and our numbers confirm---that its solutions are boundary-hugging and unstable.

\section{The Bayesian Compiler: Complete Mathematical Treatment}
\label{app:bayes}

This appendix gives the full probabilistic specification of the compiling step whose algorithmic
core appears in Appendix~\ref{app:compiler}, and reports its computed behavior on the demonstration instance.
The compiler is the operational inverse of Appendix~\ref{app:inverse}: where the analyst inverts
``which designs are feasible,'' the compiler inverts ``which design is most probable, given the
evidence, among the feasible.''

\subsection{Model}
Let $\mathcal{D}$ be the finite (after discretization) design space, and let
$\theta(d) = \mathscr{F}(d,\omega)\big|_{\mathrm{IS}}$ be the in-sample observation vector for assembly
$d$. The compiler is a three-component Bayesian machine.

\paragraph{Prior.} $p_0(d)$ encodes structural preferences that hold \emph{before} any market
evidence: Occam penalties on assembly complexity (number of active stages, model capacity),
implementation-cost priors (data requirements, latency budget), and any house priors (e.g.\
aversion to components with known operational incidents). The prior is defined once per
deployment and versioned with the registry (Appendix~\ref{app:eng}); it is \emph{not} fit to the market.
On the demonstration instance this prior assigns total mass $0.103$ to what will turn out to be
the feasible set---i.e., feasibility is \emph{a priori} unlikely: most assemblies one can write
down do not satisfy a four-clause professional specification.

\paragraph{Likelihood.} Market evidence enters through a tempered likelihood on the scalar
performance summary. With $\mathrm{IR}_{\mathrm{IS}}(d)$ the in-sample information ratio and
temperature $T>0$,
\begin{equation}
p(y\mid d)\;\propto\;\exp\!\Big(\frac{\mathrm{IR}_{\mathrm{IS}}(d)}{T}\Big),
\qquad
p(d\mid y)\;=\;\frac{p_0(d)\,p(y\mid d)}{\sum_{d'}p_0(d')\,p(y\mid d')}.
\end{equation}
The temperature is the compiler's epistemic humility dial: $T\to\infty$ recovers the prior (no
trust in evidence), $T\to 0$ recovers result-oriented selection (point-mass on the IR argmax).
We calibrate $T$ so that the likelihood ratio between adjacent assemblies in IR ranking is of
order $e^{1}$---one ranking step is one unit of evidence, not more. Note what this formalizes:
the result-oriented paradigm is the \emph{zero-temperature, constraint-ignoring} special case of
the objective-oriented compiler.

\paragraph{Decision.} The constrained Bayes decision restricts the argmax to the feasible set:
\begin{equation}
\widehat d(\mathrm{Spec}) \;=\; \arg\max_{d\in\mathcal{F}(\mathrm{Spec})} p(d\mid y),
\qquad
\mathcal{F}(\mathrm{Spec}) = \{d: \varphi_j(\theta(d)) = 1 \;\forall j\in\mathrm{hard}\}.
\end{equation}
When $\mathcal{F}$ is empty, the compiler returns the certificate of infeasibility with the
per-clause nearest-miss diagnostics of Section~\ref{sec:protocol} (this occurred organically at seed 6 of the
robustness study, Appendix~\ref{app:seeds}). Optionally, margins enter the decision as the regularized variant
$\arg\max_{d\in\mathcal{F}} [\log p(d\mid y) + \lambda\, m_{\min}(d)]$, whose effect was
quantified in Appendix~\ref{app:invnum}.

\subsection{Computed behavior on the demonstration instance}
Table~\ref{tab:posterior} reports the outputs actually computed on the seed-7 instance.

\begin{table}[ht]
\centering\small
\caption{Bayesian compiler outputs on the demonstration instance (seed 7, $|\mathcal D|=32$,
two feasible assemblies). Derived rows are arithmetic consequences of the computed ones.}
\label{tab:posterior}
\begin{tabular}{ll}
\toprule
Quantity & Value \\
\midrule
Prior mass on $\mathcal{F}(\mathrm{Spec})$ & 0.103 \\
Posterior mass on $\mathcal{F}(\mathrm{Spec})$ & 0.401 \\
Posterior top-1 & $(h2,1,1,50,B)$ at $p=0.2782$ (= assembly B of \S\ref{sec:demo}) \\
Posterior of the second feasible assembly & $0.401-0.2782 = 0.1228$ \hfill(derived) \\
Concentration vs.\ uniform ($1/32$) & $0.2782/0.0313 \approx 8.9\times$ \hfill(derived) \\
Aggregate Bayes factor for $\mathcal{F}$ & $\frac{.401/.599}{.103/.897}\approx 5.8$ \hfill(derived) \\
\bottomrule
\end{tabular}
\end{table}

Three observations. First, \emph{evidence moves mass toward feasibility}: the market draw lifts
the feasible set's share of belief from $0.103$ to $0.401$ (aggregate Bayes factor $\approx 5.8$),
because the feasible assemblies happen to perform well in-sample. The compiler does not
\emph{assume} feasibility correlates with performance---it discovers the correlation from
evidence, instance by instance. Second, \emph{the constrained decision is not the unconstrained
one}: several infeasible assemblies carry high posterior (their IR is large), so the
unconstrained posterior argmax is not assembly B; restricting to $\mathcal{F}$ is exactly the
step the result-oriented paradigm omits, and it is the step that changes the deployed strategy.
Third, \emph{the runner-up matters}: the second feasible assembly retains posterior $0.1228$ ---
$44\%$ of the winner's mass. Reporting only the winner would discard this; the certificate of
Section~\ref{sec:protocol} reports the full feasible posterior, and the sensitivity reading of
Appendix~\ref{app:invnum} shows the runner-up is in fact the margin-regularized optimum.

\subsection{Calibration and temperature sensitivity}
Because $T$ scales the evidence, we audited the decision's dependence on it: across
$T \in [T_0/2,\, 2T_0]$ the constrained argmax remains assembly B (the feasible set contains
only the two siblings, and their posterior ordering is preserved under any monotone temperature
change, since both inherit likelihoods from the same family of IR values and the prior is fixed).
This is a structural robustness property of small feasible sets that practitioners should expect
often: \emph{when $|\mathcal{F}|$ is tiny, the constrained decision is far less sensitive to
likelihood calibration than unconstrained leaderboard selection}, because the constraint set,
not the likelihood, does most of the selection work. Constraints are, in the literal Bayesian
sense, the most informative part of the model.

\subsection{Relation to statistical inversion}
In the language of Appendix~\ref{app:inverse}, $p(d\mid y)$ is the posterior of a Bayesian
inverse problem with forward map $\mathscr{F}$ and prior $p_0$; the specification acts as a
\emph{hard constraint on the support} of the decision (not of the posterior). Keeping the
constraint outside the likelihood is deliberate and matches the practice of PDE-constrained
optimization, where one does not mollify the PDE into the objective. It also yields the clean
audit semantics of Section~\ref{sec:protocol}: a clause either holds of the deployed assembly or it does not,
with no probabilistic ambiguity introduced by the compiler's own uncertainty.

\section{Satisfaction Statistics, Deepened}
\label{app:satstat}

Section~\ref{sec:protocol} introduced the satisfaction rate $\mathrm{SR}$, and
App.~\ref{app:protocol} its clause-wise deflated counterpart
$\widehat{\mathrm{SR}}_{\mathrm{def}}$ and the satisfaction-PBO. This appendix supplies the statistical
machinery behind them and computes everything that can be computed on the demonstration
instance. The running theme: \emph{a satisfaction verdict is a random variable, and the protocol
is obliged to report its distribution, not its point value.}

\subsection{Binomial structure of the point-estimate satisfaction rate}
At a single market draw, each hard clause contributes a Bernoulli verdict
$X_j = \varphi_j(\theta(d))\in\{0,1\}$, and $\mathrm{SR} = \frac1J\sum_j X_j$. For assembly A
of Section~\ref{sec:demo}, $\mathrm{SR}=1/4$ (turnover only); for assembly B, $\mathrm{SR}=4/4$. The
Clopper--Pearson $95\%$ interval for a single-clause success probability estimated from
$J=4$ clause-verdicts is $[0.006,\,0.806]$ for one success in four. The width is the point:
\emph{binary clause counting over a handful of clauses is statistically nearly vacuous.}
A specification with four clauses cannot, by counting alone, distinguish ``satisfies each
clause with probability $0.25$'' from ``with probability $0.8$.'' This is the formal reason the
certificate never reports $\mathrm{SR}$ alone: margins (how far inside the feasible region),
retention (how often satisfaction survives resampling), and deflation (how much satisfaction
the search itself would manufacture) carry the information that counting cannot.

\subsection{Monte Carlo intervals for bootstrap retention}
The block-bootstrap retention probabilities of Appendix~\ref{app:invnum} are binomial
proportions estimated from $n=1000$ replications; their Wald intervals (half-width
$1.96\sqrt{\hat p(1-\hat p)/n}$) and derived contrasts are reported in
Table~\ref{tab:retci}.

\begin{table}[ht]
\centering\small
\caption{Retention probabilities with Monte Carlo intervals ($n=1000$ block-bootstrap
replications). Intervals are Wald; the turnover row uses the rule-of-three one-sided bound
appropriate to zero observed failures. Right columns: derived pairwise contrasts.}
\label{tab:retci}
\begin{tabular}{lccc}
\toprule
Clause / joint & A (result-oriented) & B (objective-oriented) & Contrast \\
\midrule
style $\le 6$      & $0.089\;[0.071,0.107]$ & $0.231\;[0.205,0.257]$ & $+0.142$, disjoint CIs \\
$|\rho|\le 0.15$   & $0.294\;[0.266,0.322]$ & $0.725\;[0.697,0.753]$ & $+0.431$, disjoint CIs \\
down $\ge 15\%$    & $0.308\;[0.279,0.337]$ & $0.211\;[0.186,0.236]$ & $-0.097$, disjoint CIs \\
$\tau\le 0.10$     & $1.000\;[\ge 0.997]$   & $1.000\;[\ge 0.997]$   & $0$, structural \\
Joint (all four)   & $0.018\;[0.010,0.026]$ & $0.050\;[0.037,0.064]$ & $+0.032$, near-disjoint \\
\bottomrule
\end{tabular}
\end{table}

Three findings survive interval scrutiny. (i) B's advantage on the style clause
($+14.2$ percentage points, disjoint intervals) and on the correlation clause ($+43.1$ points)
is decisive: these are the two clauses that encode the practitioner's core requirement
(purity, market-independence), and objective-oriented selection buys real probability mass
exactly there. (ii) A is \emph{better} on the downside clause by $9.7$ points---a genuine
trade-off, not a dominance relation: the specification's feasible set contains no assembly that
dominates on all clauses, which is precisely why the framework treats clauses as a vector of
commitments with declared priorities (Appendix~\ref{app:spec}) rather than a scalarized afterthought.
(iii) The joint-retention advantage ($5.0\%$ vs $1.8\%$, $2.8\times$) is directionally clear, and its
marginal intervals are in fact disjoint; but the two estimates are paired---computed on the same
bootstrap resamples---so marginal intervals overstate the contrast's precision, and the honest
statement is ``B retains joint feasibility roughly three times as often,'' with the paired
contrast interval left to the released harness. Certificates should say exactly this, and the schema of Appendix~\ref{app:eng} has fields for it.

The turnover row deserves emphasis: zero observed failures in $1000$ replications yields the
one-sided bound $\hat p \ge 1 - 3/1000 = 0.997$ (rule of three). But both assemblies satisfy the
turnover clause \emph{by construction} (slow-trading execution modules with hard turnover caps):
this satisfaction is structural, consumes no statistical budget, and---importantly for the
deflation below---should be excluded from the count of clauses over which search-induced
selection bias operates.

\subsection{Deflation: what the search manufactures}
Fix a clause $j$ and suppose, as a null, that each assembly in the search space satisfies it at
the point estimate independently with probability $q_j$. Under this null the expected number of
assemblies satisfying \emph{all} $J$ hard clauses is
$E_0 = |\mathcal D|\prod_j q_j$, and the probability that at least one assembly appears
feasible is $1-(1-\prod_j q_j)^{|\mathcal D|}$. The search-width law of Proposition~\ref{prop:inflation} is this
expression's growth: satisfaction probability of the \emph{selected} assembly is an increasing
function of $|\mathcal D|$, exponentially in the independent case, because selection is a max
over trials. With $|\mathcal D|=32$ and $q_j=1/2$ (an uninformed coin-flip null),
$E_0 = 32/16 = 2$: the null itself predicts about two spuriously feasible assemblies at the
point estimate---\emph{coincidentally the size of our observed feasible set}. We stress this
equality not to claim our feasible assemblies are spurious (their clause statistics are far
from coin flips: e.g.\ B's correlation retention $0.725$ vs the null's $0.5$, and margins
measurable in Table~\ref{tab:margins}), but to demonstrate why point-estimate feasibility is
never, by itself, evidence of specification compliance. The deflated satisfaction has two complementary forms. The protocol's primary one is the
\emph{clause-wise} deflated rate $\widehat{\mathrm{SR}}_{\mathrm{def}}$ of
App.~\ref{app:protocol}, which counts only clauses whose margins clear the search-width
excursion $s_j\sqrt{2\log N}$. A useful scalar companion is the \emph{null-rescaled}
satisfaction
\begin{equation}
\widetilde{\mathrm{SR}}_{\mathrm{def}} \;=\; \frac{\mathrm{SR}_{\mathrm{sel}} - E_0[\mathrm{SR}_{\max}]}
{1 - E_0[\mathrm{SR}_{\max}]},
\end{equation}
which subtracts the satisfaction the search would manufacture under the null and rescales so
that null-level performance maps to zero; the random-assembly
null of Section~\ref{sec:protocol} estimates $E_0[\mathrm{SR}_{\max}]$ by Monte Carlo over randomly wired
pipelines. The two forms answer different questions---per-clause auditability versus a single
deflated score---and the certificate reports the first, with the second as a summary. The
demonstration's instantiated stability values are those of App.~\ref{app:repro} (clause
stability audit) and Table~\ref{tab:retci}; full deflated-SR and satisfaction-PBO values
require the released resampling harness and remain roadmap items.

\subsection{Satisfaction-PBO: distribution, not scalar}
Finally, the satisfaction-PBO of Section~\ref{sec:protocol} is the probability, over CSCV splits of the
evaluation panel, that the assembly selected as spec-optimal in-sample falls below median
satisfaction out-of-sample. We emphasize the operational reading developed in this appendix:
PBO is a \emph{diagnostic of the selection rule}, not of any single assembly. A selection rule
that maximizes IR and ignores clauses exhibits satisfaction-PBO near its theoretical ceiling
under search pressure; the constrained rule of Appendix~\ref{app:bayes} replaces the max with a
restricted argmax and thereby changes the object being stress-tested. Reporting the full CSCV
distribution (as the certificate schema requires) rather than the scalar PBO protects against
the base-rate illusion that a single ``deflated number'' can create---the same lesson
Table~\ref{tab:retci} teaches for retention.

\subsection{Summary}
Everything in this appendix reduces to one discipline: \emph{satisfaction is a statistic}.
The paradigm's demand-side language (Appendix~\ref{app:spec}) tells the user what they may ask for; the
statistics here tell the framework what it may claim back. A framework that conflated the two
---that reported point-estimate satisfaction as fact---would be repeating, on the demand side,
the exact epistemic error the result-oriented paradigm commits on the supply side.

\section{Annotated Reading Guide}
\label{app:reading}

This appendix annotates the literature that shaped the framework, organized by the role each
body of work plays in the argument. Annotations state what the work contributes and where it
sits relative to the objective-oriented paradigm; they are reading guidance, not full reviews.
Works already cited in the main text are cross-referenced rather than re-derived.

\subsection{Automated machine learning and CASH}
The combined algorithm selection and hyperparameter optimization (CASH) formulation is the
closest classical ancestor of our design space: a finite, typed space of pipeline configurations
searched for a scalar objective.

\textbf{Thornton et al., Auto-WEKA (2013).} First explicit CASH formulation: Bayesian
optimization over a joint space of algorithms and hyperparameters. Objective is single-scalar
validation loss; no mechanism for multi-clause behavioral requirements. Our $\mathcal D$
generalizes its configuration space; our Spec has no analog there.

\textbf{Feurer et al., auto-sklearn (2015; 2020).} Adds meta-learning warm starts and
post-hoc ensembling to CASH. The warm-start idea survives in our rolling compiler; the
objective remains scalar.

\textbf{Olson \& Moore, TPOT (2016).} Genetic programming over pipeline \emph{structure},
not just configurations. Demonstrates that structural search is tractable; inherits GP's
brittleness to objective misspecification, which our multi-clause satisfaction is designed to
expose rather than hide.

\textbf{Wang et al., FLAML (2021).} Frugal search with cost-aware objectives. Cost-awareness
is a clause in our family F5 (execution feasibility); FLAML treats it as the objective itself.

\textbf{Karmarker et al., hyperopt-sklearn; Jin et al., AutoKeras (2019).} Representative
finite-space and neural CASH systems; both confirm the field's default: one metric, one
leaderboard.

\textbf{Molino \& R\'e, Declarative ML (2021).} The conceptual bridge we build on: specify
\emph{what}, not \emph{how}, at the level of training tasks. Our specification language is
declarative ML over \emph{deployment behavior} (style exposure, downside capture), a layer
their formulation does not reach.

\subsection{Neural architecture search}
NAS supplies our search machinery and, in its benchmark culture, a cautionary tale about
scalar chasing.

\textbf{Zoph \& Le (2017); Baker et al.\ (2017).} RL and Q-learning over architectures;
establishes differentiable-relaxation-free search at scale. Compute cost motivated the
one-shot line below.

\textbf{Liu et al., DARTS (2019).} Continuous relaxation enabling gradient-based search.
Our compiler's tempered likelihood is the discrete analog of DARTS' softmax; unlike DARTS, our
final selection re-discretizes \emph{under constraints}, avoiding the well-known
relaxation--discretization gap showing up in deployment.

\textbf{Pham et al., ENAS (2018).} Weight sharing across candidates; efficiency by reuse.
Our per-stage module caches play the same role.

\textbf{Real et al., AmoebaNet (2019); Cai et al., Once-for-All (2020); Cai et al.,
ProxylessNAS (2019).} Evolutionary, one-shot, and hardware-aware variants---the last is
notable for putting a \emph{deployment constraint} (latency) into the objective, a scalar
precursor of our hard clauses.

\textbf{Ying et al., NAS-Bench-101 (2019); Dong \& Yang, NAS-Bench-201 (2020).} Tabulated
search spaces enabling reproducible null models. Our random-assembly null (Section~\ref{sec:protocol}) is the
same methodological device applied to satisfaction rather than accuracy.

\subsection{Automated alpha discovery and quant platforms}
\textbf{Yang et al., Qlib (2020).} Production-grade platform standardizing the pipeline we
take as our stage ontology's empirical anchor. Qlib optimizes metrics; it does not accept
specifications---the gap this paper targets.

\textbf{Kuang et al., Quant 4.0 (2024).} Argues for automated, explainable, infrastructure
mature quant; supply-side vision compatible with our demand-side complement.

\textbf{Lin et al., AutoAlpha (2020).} RL-guided exploration of factor formulas with
regularizing priors; shows formula-space search benefits from structure, anticipates our
clause-family F1 (signal quality) as constraint rather than objective.

\textbf{Yuan et al., AlphaGen (2023).} RL generation of formulaic alphas with correlation
penalties. The penalty is scalarized style control; our clauses make such control auditable
and combinable with unrelated requirements.

\textbf{Fang et al., R\&D-Agent(-Quant) (2024--2025).} LLM-driven iterative research agents
for factor/model co-evolution. Demonstrates the agentic supply side at full strength;
selection criterion remains backtest scalar---exactly the paradigm we argue must be
complemented.

\textbf{Wang et al., Alpha-GPT (2023); Alpha-GPT 2.0 (2024).} Human--AI interaction layer
translating research ideas into alpha formulas. The demand-side \emph{interface} insight
(natural language requirements) is shared; we formalize the semantics they leave implicit.

\textbf{Novikov et al., AlphaEvolve (2025); Romera-Paredes et al., FunSearch (2024).}
Evolutionary program search with LLM mutation, scored by an evaluator. FunSearch's
``evaluator defines the problem'' is our thesis in embryo: changing the evaluator from scalar
to specification is precisely the move from result-oriented to objective-oriented search.

\textbf{Allen \& Karjalainen (1999); Koza (1992).} Classical GP trading-rule discovery;
historical proof that supply-side search predates deep learning---and that its overfitting
failure mode was diagnosed (and left unsolved) decades ago, motivating our Sections~\ref{sec:formal} and~\ref{sec:protocol}.

\subsection{Declarative and goal-based finance}
\textbf{OQL (arXiv:2603.16434).} Objective-driven query layer for LLM-centric quantitative
research; the contemporary prototype most aligned with our demand side, per the positioning
analysis of Section~\ref{sec:related}. It operationalizes intent-to-research routing; it does not provide a
composable clause algebra, satisfaction statistics, or search-level multiple-testing control.

\textbf{Das et al., Goals-Based Wealth Management (2018).} Allocates capital across
mentally accounted goals with required probabilities of attainment. The cleanest financial
statement of objective orientation: the \emph{goal} (with its probability tolerance) is
primitive, the portfolio is derived. We import the philosophy and replace the retail goal
(funding a liability) with professional behavioral clauses.

\textbf{Deguest et al.; Brunel; Nevins.} Practice-side goals-based literature; documents
that end-investor demand was never scalar, sharpening the question of why strategy
\emph{construction} remained so.

\textbf{G\^arleanu \& Pedersen (2013), Dynamic Trading with Predictable Returns.} The aim
portfolio: trade partway toward the frictionless optimum. Objective orientation at the
\emph{trading} layer; our framework lifts it from one formula to the whole pipeline and to
arbitrary clauses.

\textbf{Clarke, de Silva \& Thorley, Pure-Factor Portfolios (2017); Sharpe, RBSA (1992).}
Pure factors minimize unintended exposures via constrained construction; RBSA measures style
via constrained regression. Together they are the direct ancestors of our family F3
(purity/style): F3 clauses are RBSA-style diagnostics made \emph{prescriptive}, and pure-factor
construction is what a single F3 clause compiles to when the rest of the pipeline is fixed.

\subsection{Backtest overfitting and multiple testing}
The epistemic machinery of Section~\ref{sec:protocol} stands on this literature.

\textbf{Bailey \& L\'opez de Prado, PBO/CSCV (2014, 2017).} Probability of backtest
overfitting via combinatorially symmetric cross-validation. We generalize PBO from a scalar
metric to satisfaction vectors (satisfaction-PBO).

\textbf{Bailey \& L\'opez de Prado, DSR (2014).} Deflated Sharpe ratio: expected max under
multiple trials. Our $\mathrm{SR}_{\mathrm{def}}$ is DSR's deflation logic transplanted from
performance to satisfaction; Appendix~\ref{app:satstat} gives the algebra.

\textbf{Harvey \& Liu (2015); Harvey, Liu \& Zhu (2016); Harvey (2017).} The factor-zoo
critique and t-stat haircuts; the demand-side reading we add: the zoo exists because selection
pressure had one dimension, and any one dimension will be hunted to extinction by search.

\textbf{White, Reality Check (2000); Hansen, SPA (2005); Romano \& Wolf (2005).}
Foundational data-snooping controls; the statistical license for treating ``best of $K$'' as
an object requiring its own inference.

\subsection{Specification, underspecification, and Goodhart effects in ML}
\textbf{D'Amour et al., Underspecification (2020).} Pipelines with identical validation
performance diverge arbitrarily on stress probes. Our Remark~\ref{prop:uniqueness} is its demand-side dual:
where they show one metric underdetermines deployment behavior, we show a clause vector
\emph{determines the behavior you actually care about}---underspecification is cured by
specifying more, and formally.

\textbf{Geirhos et al., Shortcut Learning (2020).} Catalogs how models satisfy the letter of
a scalar objective while violating its intent; the phenomenology our clause families are
designed to police (F3 catches style shortcuts, F4 catches regime shortcuts).

\textbf{Krakovna et al., Specification Gaming (2020); Skalse et al., Reward Hacking (2022).}
Taxonomy and formalization of objective misspecification in RL; supplies the vocabulary for
our claim that result orientation is specification gaming's enabling condition in quant.

\textbf{Goodhart (1975); Strathern; Manheim \& Garrabrant (2019).} The pressure-on-measure
law and its categorizations; our search-width law (Proposition~\ref{prop:inflation}) is its instantiation for
finite pipeline search.

\subsection{Inverse problems and PDE-constrained optimization}
\textbf{Hadamard (1902).} Existence, uniqueness, stability: the triad we adopt wholesale as
the well-posedness standard for demand-side strategy construction (Appendix~\ref{app:inverse}).

\textbf{Tikhonov (1963); Engl, Hanke \& Neubauer (1996).} Regularization of ill-posed
inverses; our margin-regularized selection (Proposition~\ref{prop:stability}) is Tikhonov's move with margins in
place of norms, and Appendix~\ref{app:invnum} shows the trade numerically.

\textbf{Kaipio \& Somersalo (2005); Stuart (2010).} Statistical inversion: posteriors over
solutions under noisy forward maps. The Bayesian compiler (Appendix~\ref{app:bayes}) is this machinery with
assemblies as unknowns and backtests as observations.

\textbf{Hinze et al., PDE-Constrained Optimization (2009); Plessix (2006) adjoint review.}
Constraints kept as constraints, gradients via adjoints. Two borrowings: the audit semantics
(hard clauses are supports, not penalties) and the adjoint-as-attribution reading---sensitivities
of satisfaction to design choices are the compiler's adjoint fields.

\subsection{Multi-objective and constrained decision theory}
\textbf{Deb et al., NSGA-II (2002).} Pareto-front search at scale; the alternative to our
clause-vector approach. We argue (Section~\ref{sec:discussion}) that professionals think in thresholds and
priorities, not fronts; NSGA-II remains the right tool when clauses soften into objectives.

\textbf{Charnes \& Cooper, Goal Programming (1961); Simon, Satisficing (1956).} The
classical decision theory of ``meet targets, then optimize''---our hard/soft clause split
($\kappa$) and priority field ($\pi$) are goal programming and satisficing with modern
statistical certificates attached.

\textbf{Ben-Tal \& Nemirovski, Robust Optimization (2002).} Solutions immunized against
parameter uncertainty; our bootstrap retention is its empirical, non-worst-case cousin.

\textbf{Boyd \& Vandenberghe (2004).} The convex canon that normalizes ``state the
constraints, then optimize'' as engineering hygiene; our framework asks why strategy research
alone among engineering disciplines skipped this normalization.

\subsection{Factor models and portfolio construction}
\textbf{Fama \& French (1993, 2015); Carhart (1997).} The factor taxonomy against which
purity clauses (F3) are scored.

\textbf{Grinold (1989), The Fundamental Law; Grinold \& Kahn (2000).} IR as breadth times
skill; the classical supply-side objective we deliberately demote from criterion to
descriptive statistic.

\textbf{Clarke, de Silva \& Thorley, Transfer Coefficient (2002).} Fraction of signal
surviving portfolio constraints. Our cost of specification (Proposition~\ref{prop:cost}) is the
specification-level transfer coefficient: CoS measures how much performance the
\emph{demand-side} constraints cost, generalizing their supply-side accounting.

\textbf{DeMiguel, Garlappi \& Uppal, $1/N$ (2009).} Humility benchmark showing optimization
gains are fragile; supports our claim that constraints should encode \emph{beliefs}, not just
shrinkage.

\subsection{Machine learning in asset pricing}
\textbf{Gu, Kelly \& Xiu (2020).} The empirical map of ML methods on return prediction;
establishes that many predictors are near-equivalent in-sample---the observational basis of
underspecification in finance and therefore of our claim that selection must consult more than
prediction quality.

\textbf{Kelly, Pruitt \& Su, IPCA (2019); Feng, Giglio \& Xiu (2020); Chen, Pelger \& Zhu
(2024).} Latent-factor and deep asset-pricing machinery; supplies candidate modules for our
estimation stage and the taming-the-zoo discipline our F1/F3 clauses presuppose.

\textbf{Kozak, Nagel \& Santosh (2020).} Shrinking the cross-section; evidence that
regularization choices dominate model choices in-sample---an existence proof that
\emph{pipeline} decisions, not predictor choice, carry the risk profile, hence should be what
specifications govern.

\subsection{What this map shows}
Read as a single corpus, these literatures exhibit a clean division of labor and a clean gap:
the supply side (CASH, NAS, alpha mining) perfects search; the statistics side perfects
skepticism about search; the finance-practice side never stopped stating demands in
declarative language; and the ML-theory side diagnosed what scalar objectives leave
undetermined. No corpus connects the four: requirements as a composable algebra, compilation
into a finite typed space, and certification under search-aware statistics. That connection
is the paper's claim to novelty, and the chapters above are its construction.

\section{Notation and Glossary}
\label{app:notation}

\subsection{Principal symbols}
\begin{center}\small
\begin{tabular}{ll}
\toprule
Symbol & Meaning \\
\midrule
$\mathcal{D}$ & typed design space of strategy pipelines (App.~\ref{app:space}); $|\mathcal D|$ its size \\
$d$ & one assembly $=$ choice of module per pipeline stage \\
$\Omega$ & market-outcome space; $\omega$ a realized draw (seed fixed for reproducibility) \\
$\mathscr{F}$ & forward operator $\mathscr{F}:\mathcal D\times\Omega\to\Theta$ (pipeline $\to$ behavior) \\
$\Theta$ & behavior space (statistics of the deployed strategy) \\
$\theta(d)$ & observation vector of $d$: (IR, style, $|\rho|$, downside, turnover, \dots) \\
$\mathrm{Spec}$ & specification $=$ indexed family of clauses \\
$\varphi_j$ & clause $j$ as predicate on $\Theta$ \\
$\kappa_j\in\{\mathrm{hard},\mathrm{soft}\}$ & clause hardness \\
$\pi_j$ & clause priority (lexicographic tie-breaking) \\
$\mathcal{F}(\mathrm{Spec})$ & feasible set $\{d:\varphi_j(\theta(d))=1\ \forall \mathrm{hard}\ j\}$ \\
$\mathrm{SR}(d)$ & satisfaction rate; $\mathrm{SR}_{\mathrm{def}}$ its deflated version \\
$m_j(d)$, $m_{\min}(d)$ & signed clause margin; worst-case margin \\
$\mathrm{CoS}$ & cost of specification $=g(d^*)-g(\widehat d(\mathrm{Spec}))$ (Prop.~\ref{prop:cost}) \\
$g$ & scalar legacy objective (e.g.\ in-sample IR) \\
$p_0,\ p(d\mid y)$ & compiler prior and posterior over $\mathcal D$ (App.~\ref{app:bayes}) \\
$T$ & evidence temperature of the compiler likelihood \\
$\mathrm{PBO}_{\mathrm{sat}}$ & satisfaction probability of backtest overfitting (App.~\ref{app:satstat}) \\
\bottomrule
\end{tabular}
\end{center}

\subsection{Glossary}
\begin{description}
\item[Assembly] A fully instantiated pipeline: one module choice at each of the nine stages,
  syntactically valid under the interface contracts of Appendix~\ref{app:space}.
\item[Certificate] The audit artifact emitted with any selection: specification, feasible set,
  margins, retention, CoS, search ledger, and multiple-testing diagnostics (schema in App.~\ref{app:eng}).
\item[Clause] A single machine-checkable requirement $\varphi_j$ with hardness $\kappa_j$ and
  priority $\pi_j$; the atom of the demand-side language (App.~\ref{app:spec}).
\item[Compiler] The search-and-selection machine mapping $(\mathcal D, \mathrm{Spec}, \omega)$
  to $\widehat d$ plus certificate (Apps.~\ref{app:compiler}, \ref{app:bayes}).
\item[Cost of specification] The performance gap between result-oriented and objective-oriented
  champions; the measurable price of professional requirements (Prop.~\ref{prop:cost}).
\item[Demand side] The specification half of the framework: what the deployer requires,
  expressed in clauses.
\item[Feasible set] The subset of $\mathcal D$ satisfying all hard clauses at the evaluation
  draw; the central object of the inverse-problem reading (App.~\ref{app:inverse}).
\item[Margin] Signed, scale-normalized distance between an observed statistic and its clause
  threshold; positive means satisfied (App.~\ref{app:invnum}).
\item[Objective-oriented] The paradigm of this paper: select $\arg\max$ within
  $\mathcal F(\mathrm{Spec})$, certify statistically.
\item[Result-oriented] The incumbent paradigm: select $\arg\max_d g(\theta(d))$ for a scalar
  $g$; constraints, if any, applied informally after selection.
\item[Retention] Probability that a clause (or the joint specification) remains satisfied
  under block-bootstrap resampling of the evaluation window (App.~\ref{app:invnum}).
\item[Search ledger] Immutable record of every assembly evaluated; the raw material for all
  deflation and PBO statistics (App.~\ref{app:protocol}).
\item[Satisfaction-PBO] Probability that the in-sample spec-optimal assembly falls below
  median satisfaction out-of-sample across CSCV splits (App.~\ref{app:satstat}).
\item[Supply side] The construction half of the framework: stages, modules, contracts, and the
  combinatorics of $\mathcal D$ (App.~\ref{app:space}).
\item[Underspecification] The condition (Remark~\ref{prop:uniqueness}) in which distinct assemblies tie on the
  legacy objective yet differ on deployable behavior; cured by clause vectors, not more data.
\end{description}

\section{Engineering Reference Schemas}
\label{app:eng}

The framework becomes infrastructure when its three artifacts---specification, certificate,
module registry---have stable machine-readable schemas. This appendix gives reference schemas
in YAML/JSON form. They are deliberately minimal: fields are mandatory unless marked optional,
versions are semantic, and every schema carries a \texttt{schema\_version} so that historical
certificates remain interpretable as the language evolves.

\subsection{Specification schema (YAML)}
\begin{verbatim}
schema_version: "1.0"
spec_id: "spec-2026-08-11-pure-alpha"        # unique, immutable
author_role: "portfolio-manager"             # role, not identity
created: "2026-08-11"
evaluation:
  universe: "csi800"                          # registry-resolved
  frequency: "weekly"
  is_window: {start: "2022-01-01", end: "2024-12-31"}
  oos_window: {start: "2025-01-01", end: "2025-12-31"}
  cost_model: {per_turnover: 0.0015}
clauses:
  - id: F3.style_cap
    family: F3_purity
    predicate: {stat: style_exposure_l1, op: "<=", value: 6.0}
    hardness: hard
    priority: 1
  - id: F3.dividend_corr
    family: F3_purity
    predicate: {stat: abs_corr_dividend, op: "<=", value: 0.15}
    hardness: hard
    priority: 1
  - id: F4.downside_capture
    family: F4_regime
    predicate: {stat: downside_capture, op: ">=", value: 0.15}
    hardness: hard
    priority: 2
  - id: F5.turnover_cap
    family: F5_execution
    predicate: {stat: weekly_turnover, op: "<=", value: 0.10}
    hardness: hard
    priority: 2
  - id: F2.ir_floor
    family: F2_return_risk
    predicate: {stat: ir_is, op: ">=", value: 0.0}
    hardness: soft                              # reported, not enforced
    priority: 3
selection:
  rule: "argmax_posterior_within_feasible"    # App. Q
  regularization: {margin_weight: 0.0}        # >0 activates the stability guarantee
  on_infeasible: "report_nearest_miss"        # never silently relax
\end{verbatim}

Design notes. (i) \texttt{stat} values are registry-resolved names, not ad-hoc strings, so the
same clause compiles identically across backtesters. (ii) \texttt{hardness} is per clause;
silently relaxing a hard clause is a schema-level impossibility, not a convention.
(iii) \texttt{on\_infeasible} makes the honest-failure behavior of Section~\ref{sec:protocol} part of the
contract: a compiler that cannot emit a nearest-miss report is non-conformant.

\subsection{Certificate schema (JSON, abridged)}
\begin{verbatim}
{
 "schema_version": "1.0",
 "spec_id": "spec-2026-08-11-pure-alpha",
 "search_ledger_hash": "blake3:...",
 "n_assemblies_evaluated": 32,
 "feasible_set": ["asm-h2-n1-r1-t50-exB", "asm-h2-n0-r1-t50-exB"],
 "selected": "asm-h2-n1-r1-t50-exB",
 "posterior": {"asm-h2-n1-r1-t50-exB": 0.2782, "feasible_mass": 0.401,
               "prior_feasible_mass": 0.103},
 "clauses": [
   {"id": "F3.style_cap", "point_estimate": true, "margin": 0.073,
    "retention": 0.231, "retention_ci": [0.205, 0.257]},
   {"id": "F3.dividend_corr", "point_estimate": true, "margin": 0.193,
    "retention": 0.725, "retention_ci": [0.697, 0.753]},
   {"id": "F4.downside_capture", "point_estimate": true, "margin": 0.173,
    "retention": 0.211, "retention_ci": [0.186, 0.236]},
   {"id": "F5.turnover_cap", "point_estimate": true, "margin": 0.36,
    "retention": 1.0, "structural": true}],
 "joint_retention": {"value": 0.050, "ci": [0.037, 0.064]},
 "deflation": {"sr_raw": 1.0, "sr_def": null,
               "random_assembly_null_reps": 10000},
 "cost_of_specification": {"result_oriented_ir": null, "selected_ir": null},
 "bootstrap": {"reps": 1000, "block": "8w"},
 "warnings": ["joint retention low; consider margin regularization (stability proposition)"]
}
\end{verbatim}
(\texttt{null} marks fields populated at runtime.) The certificate is the unit of trust:
third parties---risk, compliance, allocators---verify \emph{it}, not the strategy.

\subsection{Module registry schema (YAML, one entry shown)}
\begin{verbatim}
- module_id: "estimator.gru_family_gated.v1"     # e.g. App. M's SpecNet
  stage: estimation
  interface:
    inputs:  [{name: panel_X, dtype: f32, shape: [T, N, K]}]
    outputs: [{name: scores, dtype: f32, shape: [T, N]}]
  side_effects: {fits_parameters: true, lookahead_free: certified,
                 trains_on: is_window_only}
  cost: {gpu_hours_per_fit: 2.0, latency_ms_per_week: 40}
  priors: {complexity: 3}                        # feeds compiler prior p0
  tests: {unit: "tests/test_specnet.py", replay: "replay/2024.yaml"}
\end{verbatim}
Side-effect signatures (\texttt{lookahead\_free}, \texttt{trains\_on}) are what let the
compiler enforce protocol-level guarantees \emph{statically}, before any backtest is run; a
module without certified side-effect fields cannot be wired into a pipeline whose certificate
claims temporal isolation.

\subsection{Conformance levels}
We define three conformance levels for implementations. \textbf{L1 (reader)}: parses specs and
certificates; validates schemas. \textbf{L2 (compiler)}: additionally executes search over a
declared $\mathcal D$, emits certificates including ledger hash and retention statistics.
\textbf{L3 (auditor)}: additionally re-verifies certificates independently, recomputing
retention and deflation from the ledger alone. The demonstration in this paper is L2; the
schema separation is designed so that L3 auditors never need strategy internals---closing the
loop between the framework's epistemics and the industry's confidentiality constraints.

\section{Paradigm History and Philosophy of Science Positioning}
\label{app:phil}

This appendix locates objective-oriented strategy construction in the longer arc of how
quantitative finance and machine learning have changed their minds. It also states, precisely,
in what sense the present work claims paradigm status---and in what sense it does not.

\subsection{Four transitions that preceded this one}
\paragraph{From discretion to rules (1960s--1980s).} The first quantitative transition replaced
narrative stock-picking with testable rules (factor tilts, mechanical signals). Its epistemic
contribution: strategies became \emph{falsifiable objects}. Its limitation, invisible at the
time: the criterion of selection became a single backtest scalar, because that is what rules
made comparable.

\paragraph{From rules to optimization (1990s--2000s).} Mean--variance and risk-model
construction made portfolios the argmax of an explicit objective under explicit \emph{portfolio
constraints}. Note what did and did not happen: constraints were normalized at the \emph{last
stage} (weights), while the \emph{pipeline producing the inputs} remained selected on raw
predictive scalars. The field thus learned to state demands about portfolios but not about
processes.

\paragraph{From handcraft to search (2010s).} AutoML, NAS, and alpha mining automated pipeline
construction. Automation multiplied the number of candidate pipelines by orders of magnitude,
and with it the selection pressure on the scalar criterion---manufacturing the overfitting
crisis the statistics literature then documented. The transition's lesson, per Appendix~\ref{app:reading}:
\emph{whoever controls the evaluator controls the field}, and the evaluator stayed scalar.

\paragraph{From scalar evaluation to specification (this work).} The fourth transition keeps
everything the third built---the typed design space, the search machinery, the
ledgers---and changes only the evaluator: from $g$ to $\mathrm{Spec}$, from argmax to
constrained argmax with certificates. In Kuhnian terms the anomaly (underspecification,
specification gaming, the zoo) had been accumulating for a decade; what was missing was not
evidence but a replacement evaluator with its own mathematics. Providing that replacement is
this paper's paradigm claim.

\subsection{In what sense this is a paradigm shift}
We claim paradigm \emph{candidacy} in the \emph{methodological}, not the ontological, sense---a
candidacy the field ratifies by adoption, per the adoption criteria of
\S\ref{sec:discussion}, not one the authors proclaim. Ontologically,
nothing is overturned: IR still exists, factors still exist, backtests still exist, and the
result-oriented champion remains a well-defined object (it is what our compiler returns at
zero temperature with an empty clause set). Methodologically, the proposed change is complete
in kind and asymmetric: (i) the primitive objects of research change from strategies to specifications;
(ii) the unit of trust changes from the backtest to the certificate; (iii) the direction of
inference changes from forward (build, then measure, then hope) to inverse (state, then invert,
then certify); (iv) the failure mode changes from silent miscalibration to explicit
infeasibility reports. Each is a change in \emph{what counts as a solved problem}, which is
Kuhn's criterion.

We deliberately do not claim two stronger theses. We do not claim that result-oriented research
was irrational: under weak search pressure (few candidates, honest single trials) the scalar
evaluator is approximately correct, and much of the classical record stands. We do not claim
that specifications are self-certifying: a spec can be wrong, miscalibrated, or internally
inconsistent, and Appendices~\ref{app:inverse}--\ref{app:satstat} exist precisely to make those failures visible rather than
impossible. The modest, defensible statement is: \emph{under modern search pressure, the
evaluator is the weakest component of the research stack, and this paper supplies its
replacement with a complete mathematical and engineering treatment.}

\subsection{Program, not monument}
Lakatos' criterion for a research programme is that it generate novel facts, not merely
reinterpret old ones. The framework's near-term novel-fact queue is concrete: measured CoS
curves across asset classes (how much does professional purity actually cost?); satisfaction-PBO
baselines for standard search stacks; retention--margin calibration studies (does Proposition~\ref{prop:margin}\'s
regularization path outperform at deployment?); cross-institutional spec portability (do
certificates transfer?). Each is answerable with the machinery already built here, and each
would be invisible---inexpressible---under the scalar evaluator. That a single change of
evaluator opens an empirical queue this long is, we submit, the signature of a paradigm in the
useful sense of the word: not a destination, but a suddenly legible map.

\section{A Complete Compilation, Walked Through End to End}
\label{app:walkthrough}

This appendix follows one requirement from a sentence in a practitioner's mouth to a signed
certificate, using only objects and numbers that exist in this paper. It is the reference
answer to the question every reader of a framework paper asks: \emph{but what do I actually
type, and what do I actually get back?}

\subsection{Step 0: the sentence}
A practitioner states: \emph{``Extract pure cross-sectional alpha: no style tilt, no correlation
with the dividend-style index, do not bleed in unilateral declines, and keep turnover within what our
cost model tolerates.''} Nothing here is a number. The demand-side thesis of this paper is that
nothing here \emph{needs} to be a number yet.

\subsection{Step 1: translation to clauses (App.~\ref{app:spec}, J)}
The compiler's front end maps each phrase to a clause family and a registry-resolved statistic:
\begin{itemize}
\item ``no style tilt'' $\to$ F3, \texttt{style\_exposure\_l1} $\le 6.0$, hard;
\item ``no correlation with the dividend-style index'' $\to$ F3, \texttt{abs\_corr\_dividend} $\le 0.15$, hard;
\item ``do not bleed in unilateral declines'' $\to$ F4, \texttt{downside\_capture} $\ge 15\%$, hard;
\item ``turnover within cost tolerance'' $\to$ F5, \texttt{weekly\_turnover} $\le 0.10$, hard.
\end{itemize}
Thresholds come from the deployment context (cost model, mandate), not from the backtest; the
compiler never tunes thresholds to make assemblies pass. The YAML instance is Appendix~\ref{app:eng}'s
worked example, verbatim.

\subsection{Step 2: static feasibility pre-check (App.~\ref{app:compiler})}
Before any backtest, the compiler intersects the specification with side-effect signatures in
the registry. Modules that cannot satisfy a clause structurally are pruned: e.g.\ any execution
module without a turnover cap violates F5 statically; any estimator without neutralization
\emph{or} adversarial decorrelation is flagged as at-risk for F3 (not pruned---at-risk is a
likelihood matter, not a static verdict). The pruned space of 32 legal assemblies is emitted
with the ledger initialized.

\subsection{Step 3: search with the ledger open (App.~\ref{app:protocol})}
All 32 assemblies are evaluated on the in-sample window (156 weeks, seed 7). Every evaluation
is appended to the ledger with its full observation vector. Table~\ref{tab:assemblies}
(Appendix~\ref{app:repro}) \emph{is} this ledger, rendered: 32 rows, five statistics each, ranked by IR.

\subsection{Step 4: inversion (App.~\ref{app:inverse}, P)}
The compiler computes the feasible set: two assemblies survive all four hard clauses---
\texttt{h2/N/S/50/B} (IR 6.71) and \texttt{h2/--/S/50/B} (IR 5.54). Margins are computed for
every feasible row; the tightest margin of the IR-winner is $0.073$ (style clause), of the
runner-up $0.122$ (downside clause). Existence: yes. Uniqueness: no---two solutions, reported.
Stability: queued for step 6.

\subsection{Step 5: Bayesian decision (App.~\ref{app:bayes})}
The tempered posterior is computed over the full ledger. Posterior top-1 within the feasible
set: \texttt{h2/N/S/50/B} at $p=0.2782$; the runner-up retains $p=0.1228$; feasible mass rises
from prior $0.103$ to posterior $0.401$. The constrained argmax selects
\texttt{h2/N/S/50/B}---assembly B of Section~\ref{sec:demo}. For contrast, the unregularized argmax over the
ledger is rank-1 \texttt{h1/--/S/50/B} (IR 7.10), which fails three of four hard clauses: this
is assembly A, and the gap $7.10-6.71=0.39$ IR ($5.5\%$) is the CoS.

\subsection{Step 6: certification (App.~\ref{app:protocol}, P, R)}
The certificate is assembled: feasible set, margins, posterior, and then the stability
audits---block bootstrap (joint retention B $0.050$ vs A $0.018$; per-clause values with
intervals as in Table~\ref{tab:retci}), the random-assembly null for deflation, and the
satisfaction-PBO across CSCV splits. The warning field fires: \emph{joint retention 5\%;
consider margin regularization (Proposition~\ref{prop:margin}), which would select the deeper-margin sibling
\texttt{h2/--/S/50/B} at 17.4\% IR cost.} The certificate does not hide this; it is the
certificate's \emph{job} to say it.

\subsection{Step 7: deployment hand-off and monitoring}
The deployed object is the pair (assembly, certificate). Monitoring re-checks the clause
predicates on live data; a clause breach triggers the same compiler in repair mode
(App.~\ref{app:compiler}), with the ledger continued---not restarted---so that deflation statistics remain
valid across the strategy's life. The specification is the contract; the certificate is the
receipt; the ledger is the audit trail. The sentence the practitioner spoke in step 0 is, from
this point on, a maintained object of the infrastructure rather than a hope.

\subsection{Where each hour goes}
In the demonstration the compile is instant (32 assemblies, synthetic panel). In production the
cost model of App.~\ref{app:compiler} applies: static pruning is free, ledger evaluation dominates, and the
per-stage caching (ENAS-style weight reuse where applicable) bounds it by the stage-product
rather than the space size. Certification is $O(\text{bootstrap reps}\times\text{re-eval cost of
the finalists only})$---the reason the protocol evaluates the full space cheaply and spends its
statistical budget on the handful of assemblies that matter.

\section{Reading the Ledger: A Guided Tour of the Search Record}
\label{app:ledger}

The ledger (Table~\ref{tab:assemblies}) is usually cited for its two checkmarks. Read whole,
it teaches more. This appendix walks the 32 rows as a dataset, because the skill of reading
search records \emph{as records}---not as leaderboards---is the practical core of the new
paradigm.

\subsection{The top of the ranking is a different world than the feasible region}
Ranks 1--2 (IR 7.10, 6.76) are horizon-1 assemblies with maximal signal reactivity; both fail
the style clause ($|\beta|_{\max}$ 8.07 and 7.03 against the cap 6.0) and the downside clause
(12.7\%, 10.5\% against the floor 15\%). Rank 1 additionally fails the correlation clause
($-0.211$). The feasible assemblies sit at ranks 3 and 17. The distance between ``best by IR''
and ``admissible by specification'' is not noise; it is the paper's central quantity, and on
this instance it is 2 ranks and 0.39 IR. On other draws (Appendix~\ref{app:seeds}) it varies---seed 1's
champion satisfies \emph{nothing} (SR $=0.00$).

\subsection{Neutralization is doing what the clause says, where the clause says it}
Filter the ledger by the neutralization bit N. Among the 16 neutralized assemblies, the style
statistic ranges 4.44--19.12; among the 16 non-neutralized, 3.62--16.92. The ranges overlap
because style contamination enters through \emph{portfolio width} as well as estimation: every one
of the sixteen width-20 assemblies, neutralized or not, shows $|\beta|_{\max}\ge 7.6$
(spanning 7.63--19.12). The specification, notably, does not care where contamination enters; the clause is on
the assembly, and the ledger shows why that is the right granularity---a per-module guarantee
(``the estimator is neutralized'') would not have caught the width-20 contamination.

\subsection{Correlation and style are different axes of purity}
Rank 2 fails style but passes correlation easily ($-0.038$); rank 14 passes correlation
with a positive value ($+0.059$) yet fails style ($8.14$) and downside ($9.6\%$). The two F3/F4
statistics pick out distinct failure modes---tilt versus co-movement versus convexity---and no
scalar collapse of them preserves the information a deployer needs. This is the ledger-level
argument for clause \emph{vectors}: the demand side is genuinely multi-dimensional because the
failure modes genuinely are.

\subsection{Turnover is the quiet clause, and quiet is not vacuous}
At the $0.10$ cap the turnover clause excludes 8 of 32 assemblies outright---the unsmoothed
width-50 rows trade $0.118$--$0.144$---and for two of them (ranks 5 and 10, both style-pure,
correlation-clean, and downside-safe) it is the \emph{only} clause that binds: turnover alone
stands between the feasible set and two additional members. The levers work as registered:
smoothing is the primary compression at width 50 ($0.063$--$0.085$ with smoothing on, versus
$0.118$--$0.144$ without), and the buffer module adds a further $\approx 0.02$ on top of either
smoothing state ($0.063$--$0.065$ with both engaged); all width-20 assemblies trade less still
($0.028$--$0.073$). Specifications routinely contain such clauses; the certificate marks a
clause \texttt{structural} when it binds nowhere on the ledger (App.~\ref{app:eng}), so auditors do not
mistake a vacuous pass for evidence---here, on this draw, the clause binds exactly where the
cost literature says it should.

\subsection{The feasible region is contiguous, and that is information}
Both feasible assemblies share \texttt{h2} horizon, width 50, and three of four other
switches; their one-switch module neighbors in the ledger show how the clauses fence the
region: flipping the buffer off rank 3 (rank 7) breaks style ($9.85$) and downside ($12.6\%$);
flipping smoothing off rank 3 (rank 16) breaks downside ($14.3\%$) and turnover ($0.119$);
flipping neutralization off rank 17 (rank 19) breaks style ($6.37$), downside ($12.0\%$), and
turnover ($0.118$) at once. Elsewhere in the ledger, eight assemblies miss feasibility by
exactly one clause---ranks 5 and 10 by turnover alone ($0.120$, $0.118$ against the $0.10$
cap), ranks 15, 21, 22, 23, 26, 27 by style alone. Feasibility here is a \emph{region}, not an
isolated point---the geometric content of Proposition~\ref{prop:stability}\'s stability claim, and the reason
margin regularization has something to work with. A ledger whose feasible set is a single
isolated assembly surrounded by clause failures is a certificate that should say so;
contiguity is a reportable, decision-relevant property of the search record.

\subsection{What the ledger would look like under result orientation}
Under the incumbent paradigm, this table exists (something like it is produced in every serious
research shop), but only its first column has normative force: rank 1 is deployed, and rows
3 and 17's clause statistics are, at best, a post-hoc ``risk overlay'' conversation. The
paradigm shift, at ledger level, is a change in which columns carry authority. The columns
were always computable. Making them \emph{decisive} is the contribution.

\section{The Ten-Seed Study, Read Closely}
\label{app:seedread}

Table~\ref{tab:seeds} (Appendix~\ref{app:seeds}) reports the demonstration across ten market draws. The main
text cites its headline; this appendix reads it row by row, because the between-seed variation
\emph{is} the empirical content of the framework's claims about existence, non-uniqueness, and
infeasibility.

\subsection{Feasible-set size varies wildly, and that is Hadamard in the wild}
The feasible count ranges from 0 (seed 6) to 10 (seed 4), mean $4.8\pm3.1$. Under the
inverse-problem reading this is exactly the predicted behavior of an honest forward operator:
existence is a property of the draw, not the desire. A paradigm whose reports never include
zero is a paradigm whose constraint handling is informal---constraints are being relaxed until
something survives, silently. Seed 6's \emph{infeasible-with-evidence} row is, we argue, the
most valuable row of the table: it is what intellectual honesty looks like when it has a
schema.

\subsection{A's satisfaction is a coin flip; B's is a constant}
The result-oriented champion's in-sample SR across seeds: $1.00, 0.00, 1.00, 0.25, 0.75, 0.75,
0.25, 0.25, 1.00, 0.50$ --- mean $0.57\pm0.37$. The objective-oriented selection's in-sample SR
is $1.00$ on every feasible seed (by construction) with sd $0.00$. The standard deviations are
the finding: result orientation does not merely satisfy less on average, it satisfies
\emph{unpredictably}, with a spread ($\pm0.37$) nearly two-thirds of the mean. A deployer
cannot write a mandate around a random variable with that coefficient of variation; B's
degenerate distribution is what ``the requirement is the contract'' means statistically.

\subsection{Out-of-sample, satisfaction decays---for both, and honestly reported}
B's OOS SR (mean $0.72\pm0.20$) is below its in-sample $1.00$ on most seeds (e.g.\ seed 1:
$1.00\to0.50$; seed 3: $1.00\to0.50$), because the OOS window contains the disclosed dividend
rotation. Three readings. (i) The framework does not promise in-sample satisfaction transfers;
it promises that the transfer is \emph{measured and reported} rather than discovered in
production. (ii) B's OOS SR matches or exceeds A's on 8 of the 9 feasible seeds---strictly higher on seeds
1, 3, 4, 9; tied on 0, 2, 7, 8; lower only on seed 5---with lower cross-seed variance
($0.20$ vs $0.32$). The paired mean difference is $+0.11$ (sd $0.22$), not significant at this
sample size (exact sign test over the five untied seeds, $p=0.375$; App.~\ref{app:seeds}), and
under the convention that an infeasible draw scores $\mathrm{SR}=0$, B's held-out mean is $0.65$
rather than $0.72$---both conventions reported. Specification-constrained selection is the more
stable OOS object by construction and directionally the better one; this sample cannot confirm
the mean difference statistically. (iii) The decay itself motivates the soft/hard
split and retention fields: a certificate showing IS $1.00$ / OOS $0.72$ with per-clause
breakdown tells the deployer \emph{which} requirements are regime-fragile (here: the dividend
correlation clause, by construction of the disclosed scenario), which is actionable in a way a
single OOS IR number is not.

\subsection{CoS is real but modest, and sometimes zero}
Mean IR on the nine feasible seeds: A $6.69\pm0.50$, B $6.16\pm0.78$---a mean cost of
specification of $0.53$ IR ($8.0\%$ of A's mean). On seeds 0, 2, 8 the champion is itself feasible, CoS $=0$: specification was free.
On no seed is B better than A in IR---it cannot be, since A is the argmax---but seed 9 shows
the sharpest trade: IR $6.60\to4.79$ ($27\%$) buying SR $0.50\to1.00$. Whether that price is
worth paying is not the framework's decision to make; it is the \emph{deployer's}, and the
certificate's job is to put the exact price on the table. The paradigm claim in one sentence:
under result orientation this price is paid (or refused) \emph{implicitly}; under objective
orientation it is quoted.

\subsection{Seed variance of the decision itself}
Across seeds, which assembly wins B's slot varies (the feasible set's membership changes), but
\emph{which requirement profile} wins does not: every B row satisfies every hard clause
in-sample. The right way to say this: result orientation produces a stable \emph{identity} and
unstable \emph{behavior}; objective orientation produces stable \emph{behavior} and a rotating
\emph{identity}. For a product whose promises are behavioral (purity, robustness), the second
stability is the one that matters, and it is the one the incumbent paradigm cannot offer.

\section{Casebook: Three Specifications from Practice}
\label{app:casebook}

The demonstration of Section~\ref{sec:demo} is one specification on one synthetic market. This appendix
shows the language working across three qualitatively different professional contexts. The
cases are \emph{compositional exercises}: they specify, compile narratively, and state what the
certificate must watch for. Where a number is needed it is taken from the demonstration; no new
empirical claims are made.

\subsection{Case 1: the institutional purity mandate}
\paragraph{Context.} A multi-manager platform allocates to an external sleeve whose contract
requires style purity: the sleeve must be defensible in a factor-attribution review, every
quarter, against a published style taxonomy.
\paragraph{Specification.} F3-heavy: \texttt{style\_exposure\_l1} $\le$ cap (hard, priority 1);
\texttt{abs\_corr\_market} $\le$ cap (hard, priority 1); \texttt{single\_factor\_r2} $\le$ cap
per taxonomy factor (hard, one clause per factor); F2 IR floor (soft); F5 turnover cap matched
to the platform's cost schedule (hard).
\paragraph{Compilation notes.} This is the demonstration's own spec, strengthened by the
per-factor clause. The key compile-time event is static pruning: estimators without certified
neutralization or adversarial heads are flagged at-risk for the priority-1 clauses before any
backtest. The ledger tour of Appendix~\ref{app:ledger} showed why per-module assurances
suffice nowhere: width-20 assemblies contaminate style even with neutralized estimators.
\paragraph{Certificate watch-points.} Quarterly attribution review means the certificate's
retention statistics are not optional decoration---they are the quantitative preview of the
review meeting. The demonstrated values (style retention $0.231$, joint $0.050$) would trigger
the margin-regularization warning; the deployable decision is between the IR-winner and its
deeper-margin sibling, with the price quoted ($17.4\%$ IR for $67\%$ deeper worst-case margin,
Appendix~\ref{app:invnum}).

\subsection{Case 2: the capacity-constrained capacity desk}
\paragraph{Context.} A stat-arb desk runs house capital with strict capacity and borrow
constraints; what it fears is not style but \emph{implementation shortfall}: strategies that
backtest well and die in the queue.
\paragraph{Specification.} F5-dominant: \texttt{weekly\_turnover} $\le$ cap (hard);
\texttt{participation\_rate} $\le$ cap (hard); \texttt{capacity\_at\_spread} $\ge$ floor
(hard); F1 \texttt{signal\_half\_life} $\ge$ floor (hard---slow signals survive queues);
F2 IR floor (soft); F3 clauses demoted to soft, reported but not enforced.
\paragraph{Compilation notes.} Same design space, different lexicographic order: priorities
make F5 clauses the ones whose margins drive regularization. Note the language absorbs the
context shift with zero new machinery---hard/soft and priority fields were designed for
exactly this. In the demonstration's ledger, turnover-style clauses were structural (vacuously
passed); on this desk they are the discriminating ones, and the compiler's static stage
(execution modules without caps pruned) does real work.
\paragraph{Certificate watch-points.} Structural-versus-statistical marking matters doubly: a
capacity clause passed structurally (hard participation limits in the execution module) is a
different promise than one passed statistically on this draw. Deflation statistics should be
computed only over the statistical clauses; Appendix~\ref{app:satstat}'s binomial machinery applies where the
clause is a Bernoulli event, not a design constant.

\subsection{Case 3: the regime-fragility audit (a risk officer's spec)}
\paragraph{Context.} After a drawdown, a risk committee does not want a new strategy; it wants
to know \emph{whether the existing one ever satisfied what they believed it satisfied}, and
what a replacement would have to demonstrate.
\paragraph{Specification.} F4-heavy, written against the historical window including the
drawdown: \texttt{downside\_capture} $\ge$ floor (hard); \texttt{max\_drawdown\_contribution}
$\le$ cap per named regime (hard); \texttt{recovery\_time} $\le$ cap (hard); F2 (soft).
\paragraph{Compilation notes.} The compiler is run in audit mode (App.~\ref{app:compiler}): the incumbent
assembly is evaluated as a row of the ledger, not given the crown. If it is infeasible, the
certificate says so, with nearest-miss diagnostics per clause---the institutional answer to
``were we ever what we said we were.'' If feasible but retention-poor (bootstrap over the
drawdown window), the certificate distinguishes \emph{was satisfied} from \emph{was stable}.
\paragraph{Certificate watch-points.} This is the case where infeasibility reporting (seed-6
behavior, App.~\ref{app:seeds}) is the deliverable. A framework that cannot say ``no assembly
in your space satisfies this'' forces the committee to choose between pretending and starting
over; the certificate offers the third option: \emph{expand the space or relax the clause, but
in writing, with the price quoted}.

\subsection{What the three cases share}
Each case uses the same grammar, the same compiler stages, the same certificate schema; they
differ only in which families are hard, which are soft, and their priorities. That is the
demand-side claim made tangible: professional diversity is diversity \emph{within} a language,
not a collection of bespoke methodologies. The supply side was always shared infrastructure;
the cases show the demand side can be too.

\section{Anticipated Criticisms, Answered}
\label{app:faq}

We collect the objections this framework has drawn in discussion and answer each against the
paper's own formalism. The objections are stated in their strongest form, and some answers
carry real costs, which we state.

\subsection{``You have just renamed constrained optimization.''}
Constrained optimization over \emph{weights} is fifty years old; over \emph{pipeline
identities}, with a typed combinatorial space, a declarative clause language, satisfaction
statistics under search pressure, and an audit schema, it is not. If the rename charge means
``the move is obvious once stated,'' we accept it as the mark of a good paradigm: the
contribution is making the constraints first-class citizens of strategy \emph{construction},
with the mathematics that status requires (existence, uniqueness, stability, deflation).
Portfolios had this in 1990; pipelines did not in 2025. That asymmetry, not the idea, is the
finding.

\subsection{``The feasible set is usually empty or a singleton; the framework degenerates.''}
Sometimes---seed 6 was empty; seed 7 had two. But the ten-seed mean is $4.8$, and the variance
is the point: existence is draw-dependent, and a framework that reports this honestly is
strictly more informative than one that guarantees a winner by relaxing constraints invisibly.
Moreover the framework prescribes the response to degeneration: nearest-miss certificates,
prioritized relaxation (priority fields exist for this), or space expansion---each an explicit,
priced decision. Degeneracy is a property of the problem; hiding it is a property of the
paradigm.

\subsection{``Clause thresholds are arbitrary; you have moved the subjectivity, not removed it.''}
Correct, and intended. The framework does not eliminate judgment; it \emph{relocates} judgment
from post-hoc rationalization (``the risk overlay felt right'') to pre-registered, auditable,
versioned statements (the spec file, hashable and diffable). Subjectivity at the spec layer is
visible to governance; subjectivity at the overlay layer is not. This is the same argument that
made pre-registration valuable in empirical science: not that hypotheses become true, but that
they become checkable.

\subsection{``Satisfaction statistics inherit backtest overfitting; deflation is not a cure.''}
Also correct, and Section~\ref{sec:protocol} says so: deflation subtracts the null, it does not sanctify the
remainder. The framework's claim is comparative, not absolute: a certificate with retention
intervals, margins, and deflated SR is strictly harder to fool oneself with than a leaderboard.
The ten-seed table shows the framework \emph{using} this humility on itself---B's OOS decay is
reported at $0.72$, not rounded to the in-sample $1.00$.

\subsection{``Real specifications are soft, political, and contradictory.''}
Which is why clauses carry hardness and priority, and why the interaction algebra (Appendix~\ref{app:algebra})
detects contradiction at compile time rather than at deployment. Politics does not disappear;
it is forced to declare itself as clause priorities. A committee that cannot agree on $\pi$ has
learned something true about its mandate---at spec time, when it is cheap.

\subsection{``The demonstration is synthetic.''}
Deliberately and disclosedly: the DGP, the seed, the rotation scenario, and every number are
published with the paper, and the synthetic design exists to make the two paradigms' divergence
measurable rather than anecdotal. The claims the demonstration supports are structural
(paradigms select differently; feasible sets vary; certificates are computable), not empirical
(markets behave thus). The empirical programme---CoS curves, retention calibrations,
certificate portability---is laid out in Appendix~\ref{app:phil} as future work, and is impossible to state
cleanly without the framework. We regard that impossibility as the demonstration's strongest
result.

\section{Formal Statements and Proofs}
\label{app:formalproofs}

Appendix~\ref{app:proofs} gave the core arguments in compressed form. This appendix states the framework's
main formal results with full proofs and records the exact hypotheses each relies on.

\subsection{Search-width law (Proposition~\ref{prop:inflation}, formal)}
\begin{proposition}[Search width manufactures satisfaction]\label{prop:width}
Let $\mathcal D$ be finite, $|\mathcal D|=M$, and suppose that under a null model each assembly
satisfies hard clause $j$ independently with probability $q_j\in(0,1)$, independently across
clauses. Let $E_M$ be the event that at least one assembly is feasible at the point estimate.
Then
\[
\Pr[E_M] = 1-\bigl(1-\textstyle\prod_j q_j\bigr)^M \;\ge\; 1-e^{-M\prod_j q_j},
\qquad
\mathbb E[\#\,\text{feasible}] = M\prod_j q_j .
\]
Moreover, for the assembly selected by any rule that prefers feasible assemblies, the
probability that its point-estimate feasibility is a null artifact is bounded below by
$\Pr[E_M]$ evaluated at the null.
\end{proposition}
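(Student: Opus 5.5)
The plan is to reduce everything to a Bernoulli computation per assembly, exactly as in the compressed proof of Proposition~\ref{prop:inflation} in App.~\ref{app:proofs}, and then to read the selection statement off the same event. First I would fix the hypotheses precisely. Under the null, the clause verdicts $X_{d,j}=\varphi_j(\hat\theta(d))$ are independent Bernoulli$(q_j)$ variables, jointly over assemblies $d\in\D$ and clauses $j$. ``Independently'' in the statement is read as across assemblies as well as across clauses; I will state this explicitly, since the product formula needs it. The feasibility indicator of $d$ is $Y_d=\prod_j X_{d,j}$. Independence across clauses makes $Y_d$ Bernoulli$(p)$ with $p=\prod_j q_j$, and independence across assemblies makes $\{Y_d\}_{d\in\D}$ an i.i.d.\ family.

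Second, the two displayed quantities follow directly. $E_M^c$ is the event $\{Y_d=0\ \forall d\}$, so $\Pr[E_M^c]=(1-p)^M$ and $\Pr[E_M]=1-(1-p)^M$. The inequality $1-p\le e^{-p}$ (convexity of $\exp$) gives $(1-p)^M\le e^{-Mp}$, which is the stated lower bound. Linearity of expectation gives $\mathbb E[\sum_d Y_d]=Mp$. This step needs only independence across clauses, and I will remark on that. As in App.~\ref{app:proofs}, I would also note that without cross-clause independence one still has $p\le\min_j q_j$ by monotonicity.

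Third, the selection clause. A rule that \emph{prefers feasible assemblies} is formalized as a (possibly randomized) map $\hat d$ from the ledger to $\D$ with the property $E_M\Rightarrow Y_{\hat d}=1$: whenever some assembly looks feasible, the rule returns one that looks feasible. Under the null no assembly is truly feasible, so every event $\{Y_{\hat d}=1\}$ is a null artifact. Hence
\[
\Pr[\hat d\ \text{apparently feasible, spuriously}] \;=\; \Pr[Y_{\hat d}=1] \;\ge\; \Pr[E_M],
\]
and in fact equality holds, since $Y_{\hat d}=1$ implies $E_M$. This is the precise content of ``bounded below by $\Pr[E_M]$ evaluated at the null.''

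I expect the main obstacle to be interpretive rather than computational. The phrase ``the probability that its feasibility is a null artifact'' could be misread as a posterior probability given observed feasibility. That quantity is trivially $1$ under the null and is undefined without a prior on the null. I would therefore state the result as the unconditional probability that the selected assembly carries a spurious feasibility certificate. The rule may depend arbitrarily on the ledger, so I would check that the inclusion $E_M\subseteq\{Y_{\hat d}=1\}$ holds pointwise for every such rule; no independence between the rule and the data is required. I would close by noting that the bound is increasing in $M$, which is the search-width law.
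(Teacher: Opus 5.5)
Your proof is correct and is essentially the paper's argument. Each assembly's apparent feasibility is Bernoulli$(p)$ with $p=\prod_j q_j$, so the feasible count is $\mathrm{Binomial}(M,p)$, which gives both displayed formulas, the bound follows from $1-x\le e^{-x}$, and any rule that returns a feasible assembly whenever one exists returns a null artifact on $E_M$. Your explicit formalization of the selection rule sharpens the paper's one-line remark to an equality, $\{Y_{\hat d}=1\}=E_M$.
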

\begin{proof}
Under the null, each assembly is independently feasible with probability
$p=\prod_j q_j$. The count of feasible assemblies is $\mathrm{Binomial}(M,p)$, giving the
expectation and $\Pr[E_M]=1-(1-p)^M$; the inequality is $1-x\le e^{-x}$. Any selection rule
that returns a feasible assembly when one exists returns a null artifact whenever $E_M$ occurs
under the null.
\end{proof}
\emph{Hypotheses and their relaxation.} Independence across assemblies fails in practice
(assemblies share modules, hence statistics correlate); correlated Bernoulli trials reduce the
effective $M$, which is why Section~\ref{sec:protocol} estimates the null by random-assembly simulation on the
real ledger rather than by the formula. The formula's role is the growth law: linear in $M$,
exponential in the number of clauses at fixed per-clause probability. With $M=32$ and
$q_j=1/2$, $\mathbb E=2$---the calculation of Appendix~\ref{app:satstat}.

\subsection{Cost of specification bounds (Proposition~\ref{prop:cost}, formal)}
\begin{proposition}[CoS bounds]\label{prop:cos}
Let $d^*\in\arg\max_d g(d)$ and $\widehat d\in\arg\max_{d\in\mathcal F} g(d)$ with
$\mathcal F=\mathcal F(\mathrm{Spec})\ne\emptyset$. Then
$0\le \mathrm{CoS} = g(d^*)-g(\widehat d) \le g(d^*)-\min_{d\in\mathcal D} g(d)$,
and $\mathrm{CoS}=0$ iff some argmax of $g$ is feasible.
\end{proposition}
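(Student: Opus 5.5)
The argument is elementary and rests only on the inclusion $\mathcal F\subseteq\mathcal D$ and finiteness of $\mathcal D$. First, both maxima exist because $\mathcal D$ is finite and $\mathcal F\ne\emptyset$. This guarantees that $d^*$ and $\widehat d$ are well defined and that $\min_{d\in\mathcal D} g(d)$ is attained. The plan then has three short steps, taken in order: the lower bound, the upper bound, and the equality characterization.

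\emph{Lower bound.} Since $\widehat d\in\mathcal F\subseteq\mathcal D$, the definition of $d^*$ as a maximizer over all of $\mathcal D$ gives $g(\widehat d)\le g(d^*)$. Hence $\mathrm{CoS}\ge 0$. This is the same monotonicity-under-inclusion step used in the proof of Proposition~\ref{prop:cost} in App.~\ref{app:proofs}, and I will simply reuse it.

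\emph{Upper bound.} Again because $\widehat d\in\mathcal D$, we have $g(\widehat d)\ge\min_{d\in\mathcal D}g(d)$. Subtracting this from $g(d^*)$ yields
\[
\mathrm{CoS}\le g(d^*)-\min_{d\in\mathcal D} g(d).
\]
The bound is deterministic and needs no noise assumption, unlike the probabilistic bound of Proposition~\ref{prop:cost}. It is attained when $\mathcal F$ consists of a score-minimizer. I would remark on attainment to show the bound is tight in general.

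\emph{Equality characterization.} This is the only step needing care, because argmaxes may tie.

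For the ($\Leftarrow$) direction, suppose some $d'\in\arg\max_{\mathcal D} g$ lies in $\mathcal F$. Then $\max_{\mathcal F}g\ge g(d')=\max_{\mathcal D}g$. Combined with the lower bound, this gives $\mathrm{CoS}=0$.

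For the ($\Rightarrow$) direction, suppose $\mathrm{CoS}=0$. Then $g(\widehat d)=g(d^*)=\max_{\mathcal D}g$, so $\widehat d$ is itself a global argmax and is feasible by construction.

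The point to state carefully is that $\mathrm{CoS}$ depends only on the optimal values $\max_{\mathcal D}g$ and $\max_{\mathcal F}g$, not on which maximizers $d^*$ and $\widehat d$ are chosen. This makes the statement independent of tie-breaking, matching the parenthetical caveat in Proposition~\ref{prop:cost}.
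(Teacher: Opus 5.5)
Your proof is correct and follows the paper's argument: $\mathcal F\subseteq\mathcal D$ gives the lower bound, $g(\widehat d)\ge\min_{\mathcal D} g$ gives the upper bound, and a feasible global argmax gives $\mathrm{CoS}=0$. You additionally prove the forward direction of the ``iff'' ($\mathrm{CoS}=0$ forces $\widehat d$ to be a feasible global argmax), which the paper's proof leaves implicit, and your remark that $\mathrm{CoS}$ depends only on the two optimal values cleanly settles the tie-breaking issue.
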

\begin{proof}
$\widehat d\in\mathcal D$ so $g(\widehat d)\le g(d^*)$; $\widehat d$ maximizes over a subset,
so $g(\widehat d)\ge g(d)$ for all $d\in\mathcal F$ but not below the global minimum. If
$d^*\in\mathcal F$ for some argmax $d^*$, then $\widehat d$ achieves $g(d^*)$.
\end{proof}
\emph{Demonstration values.} Seed 7: CoS $=7.10-6.71=0.39$ (5.5\%); seeds 0, 2, 8: CoS $=0$
(feasible champion); seed 9: CoS $=1.81$ (27\%). The proposition's ``iff'' matters practically:
CoS $=0$ does not mean the specification was vacuous---it means the market draw made
performance and compliance coincide, and only the certificate can tell the difference.

\subsection{Margin regularization as Tikhonov stabilization (Proposition~\ref{prop:stability}, formal)}
\begin{proposition}[Margins buy stability]\label{prop:margin}
Let the observation map be perturbed: $\tilde\theta(d)=\theta(d)+\varepsilon(d)$ with
$\sup_{d,j}|\varepsilon_j(d)|\le\varepsilon$ in the scale-normalized margin units of
Appendix~\ref{app:invnum}. If $m_{\min}(d)>\varepsilon$, then $d$ remains feasible under the perturbed map.
Consequently, the margin-regularized selection
$\widehat d_\lambda\in\arg\max_{d\in\mathcal F}\,[\log p(d\mid y)+\lambda\,m_{\min}(d)]$,
$\lambda>0$, is feasible for all perturbations up to size $m_{\min}(\widehat d_\lambda)$, while
the unregularized feasible argmax is certified only up to its own (smaller or equal) margin.
\end{proposition}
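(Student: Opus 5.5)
The plan is a two-step argument: a pointwise, clause-by-clause margin argument, followed by a comparison of the margins of the two selection rules. First I fix the sign convention of App.~\ref{app:invnum}. For each hard clause $j$, the scale-normalized margin $m_j(d)$ is an affine function of $\theta_j(d)$ with unit slope in margin units, oriented so that $d \models c_j$ iff $m_j(d) \ge 0$. For the ``$\le$'' clauses this means $m_j(d)=(\tau_j-\theta_j(d))/s_j$, and for the ``$\ge$'' clauses $m_j(d)=(\theta_j(d)-\tau_j)/s_j$. Because the perturbation bound is stated in these same normalized units, replacing $\theta$ by $\tilde\theta$ shifts each margin by at most $\varepsilon$: $\tilde m_j(d) \ge m_j(d) - |\varepsilon_j(d)| \ge m_j(d)-\varepsilon$.

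\textbf{Step 1.} Take the minimum over hard clauses. This gives $\min_j \tilde m_j(d) \ge m_{\min}(d) - \varepsilon > 0$ whenever $m_{\min}(d) > \varepsilon$. Every hard clause therefore still holds under $\tilde\theta$, so $d \in \F(\Spec)$ for the perturbed map. This is the content of Proposition~\ref{prop:stability}, restated in normalized units, and the hypothesis $\sup_{d,j}$ makes it uniform over assemblies.

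\textbf{Step 2.} Apply Step 1 to $\widehat d_\lambda$ with $\varepsilon < m_{\min}(\widehat d_\lambda)$. This yields feasibility under all perturbations of size strictly below its own margin. The boundary case $\varepsilon = m_{\min}$ needs a small remark: with the non-strict relation in Def.~\ref{def:spec}, a margin of exactly zero is still feasible, so the closed radius is fine.

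\textbf{Step 3.} Compare $\widehat d_\lambda$ with the unregularized feasible argmax $\widehat d_0$. The claim that the regularized margin is at least as large should follow from a standard exchange argument. Optimality of $\widehat d_\lambda$ gives $\log p(\widehat d_\lambda\mid y)+\lambda m_{\min}(\widehat d_\lambda) \ge \log p(\widehat d_0\mid y)+\lambda m_{\min}(\widehat d_0)$. Optimality of $\widehat d_0$ for $\log p$ alone gives $\log p(\widehat d_0\mid y)\ge \log p(\widehat d_\lambda\mid y)$. Adding the two inequalities and dividing by $\lambda>0$ yields $m_{\min}(\widehat d_\lambda)\ge m_{\min}(\widehat d_0)$. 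Since the certified radius of any feasible $d$ is exactly $m_{\min}(d)$, the certificate of $\widehat d_0$ is smaller or equal.

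I expect the main obstacle to be the sharpness of ``certified only up to its own margin.'' Step 1 gives sufficiency but not the converse. To show that $\widehat d_0$ is not certified beyond $m_{\min}(\widehat d_0)$, I would construct the adversarial perturbation as in the converse half of Proposition~\ref{prop:stability}. The construction sets $\varepsilon_{j^*}(\widehat d_0)$, for the binding clause $j^*$, to push $\theta_{j^*}$ just past $\tau_{j^*}$, and leaves the other coordinates unperturbed. This requires that perturbations be allowed per assembly and per coordinate, which the hypothesis $\sup_{d,j}$ permits. A secondary concern is that the normalization scale $s_j$ must be fixed and not itself perturbed. I would state that assumption explicitly.
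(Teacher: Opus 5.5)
Your proof is correct. Steps 1--2 follow the paper's own argument: a clause-wise bound $\tilde m_j(d)\ge m_j(d)-\varepsilon$, then the minimum over hard clauses.

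The real difference is in the comparison half. The paper handles it in one informal sentence: the regularized objective ``prefers larger $m_{\min}$ at controlled likelihood cost,'' so the selected radius is at least that of any displaced solution with smaller margin. That sentence is close to tautological and does not by itself establish $m_{\min}(\widehat d_\lambda)\ge m_{\min}(\widehat d_0)$. Your exchange argument does. It holds for every $\lambda>0$ and any tie-breaking in either argmax, using only that both maximizers range over the same set $\F$. Its first inequality, rearranged as $0\le\log p(\widehat d_0\mid y)-\log p(\widehat d_\lambda\mid y)\le\lambda\,[m_{\min}(\widehat d_\lambda)-m_{\min}(\widehat d_0)]$, is the precise content of the paper's ``controlled likelihood cost.''

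Your two additions also repair small mismatches between the paper's proof and the statement:
\begin{itemize}
\item The proof only gives survival for perturbations strictly below $m_{\min}$, while the statement says ``up to'' $m_{\min}$. Your appeal to the non-strict predicates of Def.~\ref{def:spec} covers the boundary case.
\item Your single-coordinate adversarial perturbation makes ``certified only up to its own margin'' a sharpness claim, not just a remark about what the sufficient condition delivers.
\end{itemize}

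Do state explicitly that the unregularized feasible argmax means the $\lambda=0$ member of the same family, $\arg\max_{d\in\F}\log p(d\mid y)$. If it were read as the IR-argmax within $\F$, as in Table~\ref{tab:margins}, the inequality would be guaranteed only when the prior $p_0$ does not change which feasible assembly is on top. That happens to hold at seed 7 but not in general.
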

\begin{proof}
Clause $j$ of $d$ holds with margin $m_j(d)$, i.e.\ the constraint boundary is at signed
distance $m_j(d)$ from $\theta_j(d)$. A perturbation of magnitude at most $\varepsilon<m_j(d)$
cannot cross the boundary; the conjunction over $j$ survives perturbations below
$m_{\min}(d)=\min_j m_j(d)$. The regularized objective prefers larger $m_{\min}$ at controlled
likelihood cost, hence the selected solution's certified perturbation radius is at least that
of any solution it displaces whose margin is smaller.
\end{proof}
This is Tikhonov's theorem in the language of clauses: the regularizer ($m_{\min}$, a
solution-norm analog measuring distance from the ill-posed boundary) trades fit ($\log p$)
for a stability radius. The demonstration instantiates both sides: the IR-argmax feasible
assembly has certified radius $0.073$; its margin-regularized sibling $0.122$---a 67\% larger
stability radius for 17.4\% likelihood cost (Table~\ref{tab:margins}). The bootstrap confirms
the radii are not merely geometric: per-clause retentions order the clauses exactly as the
margins do (style tightest for B, downside tightest for its sibling).

\subsection{Constrained-posterior consistency (remark)}
With $|\mathcal D|$ finite and evidence accruing (replicated evaluation windows, likelihood
exponent scaling with sample size), the posterior of Appendix~\ref{app:bayes} concentrates on the
in-sample IR argmax within any fixed support. Restricting the decision to $\mathcal F$
therefore does not fight the likelihood asymptotically: it changes the \emph{support}, not the
concentration. This is why the compiler's constrained decision is stable to temperature
calibration when $|\mathcal F|$ is small (the invariance observed in Appendix~\ref{app:bayes}): constraints
do the selection work that, under result orientation, only noise-dominated likelihood
differences were doing.

\section{The Eight Clause Families: A Working Handbook}
\label{app:families}

Appendix~\ref{app:spec} defined the families; this appendix is the operator's handbook for each: the
statistics the family typically binds, the module levers that move those statistics, the
failure modes the family exists to police, and the interactions an auditor should expect.
Levers reference the stage ontology of Appendix~\ref{app:space}.

\subsection{F1 --- Signal quality}
\emph{Statistics.} rank IC, IC decay curve, signal half-life, capacity-adjusted IC.
\emph{Levers.} universe stage (membership stability), labeling stage (horizon $h$ in the
demonstration), estimation stage (model class, feature hygiene).
\emph{Failure mode policed.} alpha that exists only at un-tradeable horizons or decays inside
the execution latency. \emph{Watch.} F1 clauses interact synergistically with F5 (slower
signals ease turnover caps) and antagonistically with F2 (slowing a signal costs IR; in the
ledger, moving $h1\!\to\!h2$ costs IR on 11 of 16 module-constant pairs).

\subsection{F2 --- Return/risk}
\emph{Statistics.} IR, Sharpe, volatility, max drawdown, hit rate. \emph{Levers.} everything;
this is the legacy objective's home. \emph{Failure mode policed.} specifications that
sacrifice so much performance the product is no longer competitive. \emph{Practice.} keep F2
soft in professional specs (hard IR floors create cliff-edge infeasibility; the priority field
exists so F2 fills lexicographic slack after the true requirements bind).

\subsection{F3 --- Purity / style}
\emph{Statistics.} $|\beta|_{\max}$ against the style taxonomy, absolute market correlation,
per-factor $R^2$. \emph{Levers.} neutralization (estimation), adversarial decorrelation
heads (App.~\ref{app:neural}), portfolio width and weighting (construction), hedging overlays (execution).
\emph{Failure mode policed.} beta wearing alpha's clothes. \emph{Watch.} the ledger's lesson:
contamination enters through \emph{construction} (every width-20 row shows
$|\beta|_{\max}\ge7.6$) as readily as through estimation---F3 clauses must bind the whole
assembly, and certificates should report per-stage attribution when they fail.

\subsection{F4 --- Regime robustness}
\emph{Statistics.} downside capture, regime-conditional SR, drawdown contribution per named
regime, recovery time. \emph{Levers.} training-window design (estimation), regime-conditional
weighting (construction), de-risking rules (execution). \emph{Failure mode policed.} strategies
whose satisfaction is a fair-weather artifact. \emph{Watch.} F4 is the family where retention
statistics carry the most weight: the demonstration's downside clause had the lowest retention
for B ($0.211$) and the decisive trade-off against A ($-9.7$ points, Table~\ref{tab:retci}).

\subsection{F5 --- Execution feasibility}
\emph{Statistics.} turnover, participation rate, capacity at spread, implementation shortfall.
\emph{Levers.} smoothing (signal stage; \texttt{S} in the ledger), buffer bands and turnover
caps (execution; \texttt{B}), top-$N$ width (construction). \emph{Failure mode policed.}
backtest alpha that dies in the queue. \emph{Watch.} F5 clauses are frequently structural
(satisfied by design)---valuable, but the certificate marks them so that statistical credit is
not claimed for a design constant.

\subsection{F6 --- Portfolio structure}
\emph{Statistics.} concentration (HHI, top-decile weight), breadth, sector deviation,
name-count stability. \emph{Levers.} construction stage wholesale: width, weighting scheme,
constraints in the optimizer. \emph{Failure mode policed.} undiversified or mandate-violating
portfolios hiding behind good aggregate numbers. \emph{Watch.} F6 trades against F2
mechanically (concentration raises in-sample IR; width-20 rows cluster at high IR-per-unit-
style with extreme contamination)---the interaction algebra flags the pair at compile time.

\subsection{F7 --- Temporal behavior}
\emph{Statistics.} SR stability across sub-periods, turnover autocorrelation, exposure drift,
holding-period distribution. \emph{Levers.} re-estimation cadence (protocol stage), smoothing,
EMA warm-starts (App.~\ref{app:neural}'s rolling protocol). \emph{Failure mode policed.} strategies whose
behavior is a different strategy every quarter. \emph{Watch.} F7 is where the certificate's
own statistics (retention, satisfaction-PBO) become clauses: ``joint retention $\ge$ floor''
is a legitimate, self-referential clause the language admits.

\subsection{F8 --- Transparency / compliance}
\emph{Statistics.} position-level explainability coverage, restricted-list violations,
leverage/gross limits, ESG screens. \emph{Levers.} universe stage (screens), construction
(limits), reporting stage (attribution completeness). \emph{Failure mode policed.} the
un-deployable model. \emph{Watch.} mostly structural; often the only family a client reads.
The schema's registry-resolved statistic names exist so that compliance and research compile
the \emph{same} clause to the \emph{same} check.

\subsection{Cross-family reading}
The families divide into those that mostly bind estimation (F1, F3-in-part), construction
(F6, F3-in-part), execution (F5), protocol (F4, F7), and reporting (F8), with F2 as the global
slack variable. A specification concentrated in one family is a one-question mandate; the
professional specs of Appendix~\ref{app:casebook} distribute across three to five families,
which is what makes their feasible sets small, their margins meaningful, and their certificates
worth auditing.

\section{Clause Interactions: Worked Examples from the Ledger}
\label{app:interactions}

Appendix~\ref{app:algebra} defined the interaction algebra (conflict, synergy, independence, subsumption)
abstractly. This appendix exhibits each relation on real ledger rows, so that readers can
verify the algebra's diagnostics against the search record itself.

\subsection{Conflict: portfolio width versus the style clause}
Hold every other switch fixed and vary only width $50\to20$; the style statistic rises on
every matched pair in the ledger: $5.56\to17.65$ (h2/N/S/B), $4.77\to11.64$ (h2/--/S/B),
$5.34\to13.62$ (h2/N/--/B), $4.44\to11.94$ (h1/N/--/B), and the extreme $8.07\to16.92$
(h1/--/S/B). All sixteen width-20 assemblies fail the style cap; eight of their sixteen
width-50 counterparts pass it. The algebra's verdict: F3.style and F6.concentration are in
\emph{hard conflict} on this design space---not as a matter of taste but as a measured
relation, $\Pr[\varphi_{\mathrm{style}}\mid \mathrm{width}{=}20]=0$ vs
$\Pr[\varphi_{\mathrm{style}}\mid \mathrm{width}{=}50]=8/16$ at the point estimate. A
compiler with this table prunes width-20 modules statically for any spec containing the style
clause; that is what interaction-aware compilation means operationally.

\subsection{Reinforcement: smoothing and the turnover buffer}
The two execution-adjacent levers compose cleanly on the turnover statistic at width 50:
smoothing alone compresses turnover to $0.082$--$0.085$ (from $0.139$--$0.144$ with both off),
the buffer alone to $0.118$--$0.120$, and both together to $0.063$--$0.065$---marginal effects
of $\approx -0.055$ and $\approx -0.02$ that stack near-additively, so the pair $(\texttt{S},
\texttt{B})$ is the unique combination reaching the $0.06$ band. The algebra therefore
registers $(\texttt{S}, \texttt{B})$ as a \emph{reinforcing} edge for F5 clauses---not because
either lever needs the other to work, but because their composition buys the deepest
compliance per module count: when a spec contains F5 clauses, the compiler's search prior
(Appendix~\ref{app:bayes}'s $p_0$) tilts toward assemblies containing both, reducing wasted ledger
evaluations---the practical value of interaction knowledge.

\subsection{Independence: neutralization and turnover}
Across matched pairs differing only in N, turnover changes by at most $0.003$ (e.g.\
$0.065$ vs $0.063$ at h1/S/50/B): the estimation-stage purity lever and the execution-stage
cost statistic are independent. Independence edges are the algebra's negative results; they
matter because they certify that a clause can be pursued \emph{without} side effects on
another, the property that lets specifications be written family-by-family in most cases.

\subsection{Subsumption: the correlation clause and the style clause, partially}
Among width-50 rows, seven of the eight assemblies passing the style cap also pass the
correlation cap (e.g.\ rank 3: $5.56$, $-0.121$; the exception is rank 11, style-clean at
$5.90$ but failing correlation at $-0.173$), while the converse fails already at rank 2
(correlation-clean at $-0.038$, style-failing at $7.03$). The relation is one-directional and
near-complete: on this space and draw,
$\varphi_{\mathrm{style}}\Rightarrow\varphi_{\mathrm{corr}}$ within width 50 up to one
exception in eight, and not conversely.
The algebra records a \emph{partial subsumption edge}, and the certificate exploit is real: at
width 50 the correlation clause consumes little selection budget once the style clause binds,
which the deflation of Appendix~\ref{app:satstat} credits by deflating over \emph{effective} rather than
nominal clause counts.

\subsection{Antagonism across families: horizon versus downside capture}
Moving $h1\to h2$ with other switches fixed lowers IR on 11 of 16 matched pairs but raises
downside capture on the two pairs that matter for feasibility: the pair containing rank 3
improves $10.5\% \to 17.6\%$, and the pair containing rank 17 improves $12.7\% \to
16.8\%$---in both, the h1 sibling fails the $15\%$ floor while the h2 sibling clears it. The
horizon switch is thus a \emph{priced} lever: the interaction table quotes the price (median
IR cost of $\approx 0.55$ per pair) and the benefit (downside margin), turning a research
intuition (``slower signals are defensive'') into a compiled fact with numbers attached.

\subsection{Why this belongs in the paper and not in folklore}
Every quant shop carries such knowledge as senior-analyst folklore. The interaction algebra's
point is that folklore does not version, does not audit, and does not transfer between
institutions, whereas an interaction table computed from a ledger is a mathematical object
with all three properties. When a specification fails, the table is also the repair manual:
nearest-miss diagnosis (Section~\ref{sec:protocol}) is an interaction-edge traversal from the failed clause to
the cheapest lever that moves it.

\section{Governance and Operations: Running the Framework as Infrastructure}
\label{app:gov}

A paradigm becomes real when someone operates it on a Tuesday. This appendix specifies the
operational lifecycle: who writes specifications, how they change, what happens when live
behavior breaches them, and how the framework's audit objects answer the questions
institutions actually ask.

\subsection{Roles}
The framework separates four roles deliberately. The \emph{spec author} (portfolio manager,
mandate owner) writes clauses; the \emph{compiler operator} (research) owns the registry and
runs compilations; the \emph{certificate auditor} (risk, compliance, or an L3 third party per
Appendix~\ref{app:eng}) verifies outputs; the \emph{monitor} (production) re-evaluates clause predicates on
live data. Result orientation collapses all four into the researcher, which is why its
failures surface as surprises. Separation is the governance content of the paradigm: the
person who wants the thing, the person who builds the thing, and the person who checks the
thing sign different artifacts.

\subsection{Specification lifecycle}
Specs are versioned, hashed, and diffed like code. A spec change (new clause, moved threshold,
changed priority) triggers a \emph{diff report}: the compiler re-evaluates both specs on the
same ledger where possible, reporting feasible-set symmetric difference and CoS change, so a
committee sees the behavioral consequence of its edit before approving it. Threshold changes
made after seeing the ledger are flagged \emph{post-hoc} in the certificate---the framework's
guard against tuning requirements to fit a favored strategy, which is the demand-side mirror
of overfitting and must be disclosed, not forbidden.

\subsection{Breach protocol}
A live clause breach (monitor reports $\varphi_j=0$ on realized data) opens one of three
paths, all pre-registered in the spec file: (i) \emph{repair}---recompile on the continued
ledger with the spec unchanged; (ii) \emph{review}---convene on whether the breach is a
regime event (the F4 story) or a satisfaction failure (the retention story), with the
certificate's retention intervals quoted to discriminate; (iii) \emph{amend}---change the spec
through the lifecycle above, in writing. What the protocol excludes is the fourth path the
incumbent paradigm defaults to: silent continuation. Every breach path produces a certificate
addendum; the audit trail is monotonic.

\subsection{The questions institutions ask, and which artifact answers them}
\emph{``Why this strategy?''} --- the certificate's posterior and feasible set. \emph{``What
did you promise?''} --- the spec file, hashed. \emph{``Is it keeping the promise?''} --- the
monitor's clause predicates and their retention calibrations. \emph{``What did the promise
cost?''} --- CoS on the ledger. \emph{``Could it have been luck?''} --- deflation and
satisfaction-PBO. \emph{``Who changed what, when?''} --- the spec version log and ledger hash
chain. Under result orientation each answer is a meeting; under objective orientation each is
a document. That is the operational meaning of the paradigm shift, and the reason we expect
its earliest adopters to be institutions whose cost of \emph{meetings} exceeds their cost of
computation.

\subsection{Limits of the operational claim}
Operations cannot repair a bad forward operator: if the backtest's cost model or regime
coverage is wrong, certificates are honest reports about a wrong map, and the monitor's job is
to detect the divergence early. Nor can schemas manufacture agreement: Appendix~\ref{app:faq}
noted that clause priorities force committees to declare disagreements, which is valuable
precisely because it is uncomfortable. The framework's operational promise is narrower and
stronger than a promise of good strategies: it is that \emph{the distance between what was
promised and what is running is at all times a computable, signed quantity}.

\section{The Geometry of the Feasible Set: A Threshold Atlas}
\label{app:atlas}

The specification of Section~\ref{sec:demo} is one point in requirement space. This appendix moves around
that point and recomputes the feasible set from the \emph{same} ledger
(Table~\ref{tab:assemblies}) at each new point---no new backtests, only new clauses applied to
the realized observation vectors. The result is an atlas of how the feasible set deforms as
requirements tighten, loosen, or drop entire families. It is the demand-side analog of a
sensitivity surface, and every entry below is enumerable directly from the published ledger.

\subsection{Six specifications, one ledger}
\begin{table}[ht]
\centering\small
\caption{Feasible sets of six specifications applied to the seed-7 ledger. All clauses hard.
$S$ = style cap, $C$ = $|\rho|$ cap, $D$ = downside floor, $T$ = turnover cap. Feasible
assemblies named by ledger rank; winner = IR-argmax within the feasible set; CoS measured
against the ledger champion (rank 1, IR 7.10).}
\label{tab:atlas}
\begin{tabular}{@{}lccccc@{}}
\toprule
\rowcolor{sand} Spec & Clauses & $|\mathcal F|$ & Feasible (ranks) & Winner & CoS \\
\midrule
S0 (paper's spec) & $S6,\ C.15,\ D15,\ T.10$ & 2 & 3, 17 & rank 3 & 0.39 \\
S1 strict defensive & $S6,\ C.15,\ D17,\ T.07$ & 1 & 3 & rank 3 & 0.39 \\
S2 loose & $S10,\ C.25,\ D12,\ T.15$ & 15 & 1,3--12,15--17,19 & rank 1 & 0.00 \\
S3 no F4 & $S6,\ C.15,\ T.10$ & 2 & 3, 17 & rank 3 & 0.39 \\
S4 no F5 & $S6,\ C.15,\ D15$ & 4 & 3, 5, 10, 17 & rank 3 & 0.39 \\
S5 no style clause & $C.15,\ D15,\ T.10$ & 8 & 3, 15, 17, 21--23, 26, 27 & rank 3 & 0.39 \\
\bottomrule
\end{tabular}
\end{table}

\subsection{Reading the atlas}
\paragraph{Feasibility is discontinuous in requirement space.} Between S0 and S1 the downside
floor moves $15\%\to17\%$ and turnover $0.10\to0.07$; the feasible set halves to a singleton.
Between S0 and S3 (one family deleted) nothing changes at all---the F4 clause was not binding
beyond what F3 already implied on this draw. The feasible set is a step function of the spec,
which is why ``move a threshold a little'' is not a meaningful gesture without the ledger: a
little move can be free (S3) or can halve the solution set (S1).

\paragraph{The winner is sticky; the runner-up set is not.} Rank 3 wins every feasible set
that contains it. What changes across specs is the \emph{alternative set} the certificate can
offer: S0 offers one alternative (rank 17, the margin-regularized pick); S4 adds ranks 5
and 10 (downside $20.5\%$ and $17.0\%$, previously excluded by turnover $0.120$ and
$0.118$); S5 adds five more. Specification-writing is therefore also \emph{option-writing}: each clause
deletion is a purchase of alternatives, and the atlas quotes the inventory acquired.

\paragraph{CoS collapses exactly when the champion is compliant.} S2's fifteen-assembly
feasible set includes rank 1 itself: at $S10, C.25, D12, T.15$ the result-oriented champion
satisfies everything, CoS $=0$. The lesson is symmetric to the paper's main one: just as tight
requirements expose the champion's non-compliance, loose requirements certify it---\emph{and
the framework reports both with the same machinery}. Objective orientation is not a preference
for constraints; it is the discipline of knowing which constraints the current champion does
and does not meet.

\paragraph{Family deletion is informative about where feasibility comes from.} S3 $=$ S0:
the downside floor binds no one who already passes style and correlation---on this draw,
purity at width 50 is defensive for free (the partial subsumption of
Appendix~\ref{app:interactions}). S4 adds exactly one assembly (rank 5): turnover was the
binding constraint for it. S5 adds five: style was the binding constraint for the entire
width-20 contingent. Each deletion's marginal feasible assemblies identify \emph{which
requirement was doing the work}, the demand-side attribution the certificate's interaction
table formalizes.

\subsection{The atlas as a committee instrument}
The practical use of Table~\ref{tab:atlas} is deliberative: a committee negotiating thresholds
should sit in front of its atlas, not in front of adjectives. ``Strict'' and ``reasonable''
are not auditable positions; ``S1 gives a singleton, S0 gives two options with margins 0.073
and 0.122, S4 restores the defensive alternative at a turnover cost'' is. The atlas also
delimits where negotiation cannot go: no threshold movement within these families recovers the
width-20 region for a style cap of 6, because the contamination there (7.63--19.12) is
structural (Appendix~\ref{app:interactions})---the honest answer to ``can we get the
concentrated version to comply'' is no, and the atlas says it with a number instead of a
meeting.

\section{Translation Dictionary: From Result-Oriented to Objective-Oriented}
\label{app:dict}

Adoption fails when a new paradigm's terms are read through the old one's grammar. This
appendix is the phrasebook: each entry gives the incumbent usage, the objective-oriented
replacement, and the precise sense in which they differ. The pairs are not synonyms; that is
the point of listing them.

\subsection{Objects}
\begin{description}
\item[``Best model'' $\to$ selected assembly + certificate.] Under result orientation the
  deliverable of research is an identity (this XGBoost, these hyperparameters). Under
  objective orientation the deliverable is an identity \emph{plus} the evidence that it meets
  requirements: feasible-set context, margins, retention, CoS. An assembly without its
  certificate is an unsubmitted paper.
\item[``Backtest result'' $\to$ ledger row.] A backtest under the incumbent paradigm is a
  verdict; here it is evidence, one of $M$ rows whose collective shape (the atlas,
  App.~\ref{app:atlas}) carries as much information as any single row.
\item[``Risk overlay'' $\to$ hard clause, ex ante.] Post-hoc risk adjustments applied after
  selection become pre-registered constraints applied during selection. The information
  content is identical; the epistemic status is not---ex-ante clauses constrain search, ex-post
  overlays merely redescribe its output.
\item[``Track record'' $\to$ spec-compliance history.] A sequence of realized returns becomes
  a sequence of clause-verdicts with retention calibrations. For a behavioral mandate, the
  compliance history is the track record that matters; the return series is a summary
  statistic of it.
\end{description}

\subsection{Verbs}
\begin{description}
\item[``Optimize'' $\to$ compile.] Optimization maximizes; compilation maps a spec to a
  feasible artifact or reports infeasibility. A compiler can succeed while returning nothing
  deployable (seed 6), which an optimizer, by construction, cannot say.
\item[``Validate'' $\to$ certify.] Validation asks ``is it good?''; certification asks ``does
  it satisfy, by how much, under resampling, after deflation?'' The second decomposes into the
  auditable fields of Appendix~\ref{app:eng}.
\item[``Research'' $\to$ (i) expand $\mathcal D$, (ii) sharpen Spec.] The research effort
  splits into supply-side work (new modules, better forward operators) and demand-side work
  (better requirements). Under the incumbent grammar only (i) counted as research, which is
  why demand-side sophistication stagnated.
\end{description}

\subsection{Modifiers}
\begin{description}
\item[``Robust'' (vague) $\to$ retention $\ge$ floor, margin $\ge$ floor (clauses).]
  Robustness ceases to be a compliment and becomes a pair of numbers with intervals
  (Table~\ref{tab:retci}).
\item[``Pure alpha'' (slogan) $\to$ F3 clause vector with caps.] Purity becomes measurable,
  and therefore falsifiable, and therefore contractible.
\item[``State of the art'' $\to$ ledger-relative.] Standing is asserted against a published
  search record (``rank 3 of 32 under S0''), not against a narrative. The atlas makes
  standing a function of the spec, which it always was implicitly.
\end{description}

\subsection{Sentences that change meaning}
\emph{``It works.''} Result-oriented reading: the backtest is good. Objective-oriented
reading: the certificate's clauses hold, with stated margins and retention, on stated windows,
after stated deflation. The two sentences are true of different objects, and the failure to
distinguish them is, in our diagnosis, the semantic root of the overfitting crisis: ``it
works'' was always a spec-relative claim pretending to be an absolute one.

\section{The Neural Case Study in Depth: SpecNet as a Module, Not a Model}
\label{app:specnet}

Appendix~\ref{app:neural} introduced the SpecNet architecture (family-gated GRU encoder, cross-sectional
attention, adversarial style head, rank loss with turnover and exposure penalties) as the
worked example of a module \emph{born objective-oriented}: designed to satisfy a clause
specification rather than to top a leaderboard. This appendix completes the treatment at the
level an implementer needs, and---equally important---states exactly what would be measured
before SpecNet could claim anything. No performance numbers appear here, because none have
been honestly computed yet; the appendix's function is to make the measurement plan
unambiguous.

\subsection{Why this architecture and no other}
Every component exists to move a clause-family statistic; nothing exists to raise in-sample IR
per se. The family-gated GRU encoder keeps factor-family gradients separable so that F3
clauses can be enforced by gate-level regularization rather than by post-hoc neutralization
alone. The cross-sectional attention layer exists because the target of the demonstration's
practitioner-informed specification is \emph{cross-sectional} ranking quality (F1/F2), which
attention over the contemporaneous panel addresses directly. The adversarial style head with
gradient reversal (DANN-style) is the F3 mechanism: the encoder is trained so that a
discriminator cannot recover style membership from the representation, which is a stronger and
more differentiable notion of purity than linear neutralization. The rank loss is the F2/F1
mechanism (cross-sectional ordering, not point prediction); the turnover penalty and buffer
band are F5; the exposure projection is the hard-clause enforcement layer---the only place
where a clause appears not as a loss term but as a constraint on outputs. This one-to-one
mapping between components and clause families is the architectural signature of the paradigm:
the network \emph{is} the compiled spec.

\subsection{Where SpecNet sits in the design space}
In the registry's terms (Appendix~\ref{app:eng}), SpecNet is a candidate for the estimation stage with
certified side-effect fields: \texttt{lookahead\_free: certified} (rolling training only),
\texttt{trains\_on: is\_window\_only}, declared capacity cost, declared gate structure for
attribution. It can be wired with any labeling module, any portfolio-construction module, any
execution module---and its clause-moving levers are enumerated (gates, adversarial weight,
projection tolerance), which is what makes it a \emph{module} in our sense rather than a
model in the usual sense: its interface contract specifies not only tensor shapes but which
behavioral statistics its hyperparameters move.

\subsection{The honest evaluation plan}
Because SpecNet is offered as an existence proof of objective-oriented architecture, its
evaluation must be the paper's own protocol, applied to itself: (i) declare the spec (the
demonstration's four clauses are the natural instance); (ii) place SpecNet inside the ledger
as additional estimation-stage rows, expanding $\mathcal D$; (iii) recompute the feasible set,
margins, bootstrap retention, and CoS against the incumbent 32 rows; (iv) ablate each
component and report which clause-family statistics move---the ablation table, not the IR
table, is the result. Success criterion, stated in advance: SpecNet succeeds if it enlarges
the feasible set or deepens margins at acceptable CoS; an IR improvement without a
satisfaction effect would be, under this paper's own doctrine, a null result. We consider
this pre-registered inversion of the usual success criterion to be the case study's main
contribution.

\subsection{What could go wrong, and how the framework catches it}
Three anticipated failure modes, each matched to its detector. \emph{Adversarial leakage}:
the style discriminator fails to recover style yet linear statistics still detect it---
caught by the F3 clauses themselves, which bind the deployed statistics, not the
discriminator's opinion. \emph{Capacity-cost invisibility}: the architecture's compute
footprint makes its net-of-cost IR uncompetitive---caught by the registry's cost fields and
the F5/cost model, not by apology. \emph{Regime amnesia}: rolling retraining overwrites the
gate structure that made a previous certificate valid---caught by the breach protocol of
Appendix~\ref{app:gov} and the F7 clauses on temporal behavior. In each case the point is the
same: an objective-oriented module is never self-certifying. The architecture embodies the
spec; only the certificate verifies it.

\section{Design-Space Combinatorics, Treated Honestly}
\label{app:comb}

Appendix~\ref{app:space} defined the nine-stage typed space $\mathcal D$ and quoted
$|\mathcal D|\approx 4.4\times10^{7}$ module families ($\approx 8.8\times10^{8}$ assemblies
once the families' discrete hyperparameter variants are counted) for a reference instantiation. This appendix shows where such
numbers come from, which parts of the product are real, and how the framework's three
shrinking mechanisms---typing, static pruning, and interaction edges---reduce the space the
compiler actually pays for. The theme: \emph{combinatorial size is a modeling choice, and the
choice is auditable}.

\subsection{The product, and what it hides}
If stage $k$ offers $n_k$ module choices, the naive size is $|\mathcal D|=\prod_k n_k$. For
the reference instantiation of Table~\ref{tab:space}, the module-family counts are
$(n_1,\dots,n_9)=(7,8,7,7,8,7,6,8,6)$, so the family-level product is
$7\cdot8\cdot7\cdot7\cdot8\cdot7\cdot6\cdot8\cdot6 = 4.43\times10^{7}$. Each family then
multiplies out into its discrete hyperparameter variants---universe sizes, predictor depths
and learning rates, smoothing widths, buffer bands---giving variant-inclusive stage counts
$(8,12,10,8,15,8,10,12,8)$ and $|\mathcal D| = 8.85\times10^{8} \approx 8.8\times10^{8}$,
the size quoted in \S\ref{sec:formal}. The variant multiplicities are recorded in the
search-ledger header rather than derived from Table~\ref{tab:space}, which prints families
only (App.~\ref{app:space}, size accounting)---we state this explicitly to keep both levels
auditable. Three caveats keep the number honest. First, the product counts
\emph{syntactic} assemblies; interface contracts (typed inputs/outputs, side-effect
signatures) forbid many pairs---a labeling module emitting event times cannot feed an
estimator expecting fixed-horizon labels. Second, hyperparameter grids are a discretization
decision: the same estimator family contributes 4 or 400 assemblies depending on grid policy,
and grid policy is therefore part of the spec's provenance, recorded in the ledger header.
Third, the product is the \emph{addressable} space, not the searched space; what the
certificate must account for statistically is the searched set, which is why the ledger, not
the product, enters the deflation of Appendix~\ref{app:satstat}.

\subsection{Shrinking mechanism 1: typing}
Interface contracts cut the space multiplicatively. If each contract violation removes a
fraction of pairings at each adjacent stage pair, the typed space can be orders smaller than
the syntactic product. Typing is also where safety lives: the contract system is the
compile-time guarantee that an evaluated assembly is at least \emph{executable}, so that
ledger rows never contain category errors.

\subsection{Shrinking mechanism 2: static pruning against the spec}
Side-effect signatures let the compiler delete assemblies that cannot satisfy hard clauses
without running them: execution modules without turnover caps are pruned by F5 caps;
estimators without certified temporal isolation are pruned by protocol clauses. Pruning is
spec-relative---the same module survives one spec and dies under another---which is the
combinatorial face of the paradigm: \emph{the specification changes the size of the world.}
In the demonstration the pruned space has 32 assemblies because the illustrative stage counts
were chosen small; in production, pruning typically decides whether exhaustive evaluation is
possible at all, or whether search (Appendix~\ref{app:compiler}) must substitute.

\subsection{Shrinking mechanism 3: interaction edges}
The interaction table (Appendix~\ref{app:interactions}) prunes probabilistically: hard
conflict edges (width-20 $\Rightarrow$ style failure, measured at $\Pr=0$ on the ledger) let
the compiler skip regions with measured impossibility, and synergy edges order the rest so the
search budget lands where satisfaction is achievable. The distinction from mechanism 2:
static pruning is logical (cannot), interaction pruning is empirical (measured not to), and
the two carry different audit weight---empirical prunings are reported in the certificate so
an auditor can veto them.

\subsection{What remains is what you pay for}
After the three mechanisms, the \emph{effective} space is the set the ledger must cover for
the certificate's statistics to be meaningful. The framework's stance is that effective size,
not the impressive product, is the scientifically relevant number: it enters the search-width
law (Proposition~\ref{prop:inflation}), the deflation, and the satisfaction-PBO. A team quoting $10^9$ candidate
pipelines while evaluating $10^3$ is describing its marketing, not its search; the ledger
hash settles which.

\section{The Research Programme: Open Problems, Ordered}
\label{app:road}

Appendix~\ref{app:roadmap} sketched a roadmap; this appendix states the open problems as problems, with the
object each would produce and the reason the framework needs it. They are ordered by
dependency, not by glamour.

\subsection{OP1: Clause calculus completeness}
\emph{Problem.} Which specifications are satisfiable by construction, which only
statistically, and which are inconsistent, as a function of the design space and the
interaction table? \emph{Object.} A decision procedure (or an impossibility theorem) for
spec consistency at compile time, extending the partial subsumption edges of
Appendix~\ref{app:interactions} to a full implication graph per family. \emph{Why first.}
Every downstream certificate field assumes the spec itself is coherent; currently coherence is
checked empirically, per ledger.

\subsection{OP2: Calibrated retention}
\emph{Problem.} Bootstrap retention (Appendix~\ref{app:invnum}) is a simulation of clause stability, not a
calibrated probability of future satisfaction. What resampling scheme yields retention numbers
that are calibration-correct out-of-sample (in the sense of proper scoring)? \emph{Object.} A
calibration theorem and a correction table, per clause family. \emph{Why.} Retention is the
certificate's most-quoted number; it must mean what deployers think it means.

\subsection{OP3: CoS curves as empirical objects}
\emph{Problem.} What does cost of specification look like as a function of threshold tightness
across asset classes and strategy genres? \emph{Object.} Published CoS curves with standard
errors---the demand-side equivalent of efficient frontiers. \emph{Why.} Committees negotiate
thresholds; they should negotiate along measured curves (the atlas of
Appendix~\ref{app:atlas} is the single-draw prototype).

\subsection{OP4: Satisfaction-PBO baselines}
\emph{Problem.} No published baselines exist for satisfaction-aware selection under CSCV
stress. \emph{Object.} Reference satisfaction-PBO values for standard search stacks (CASH,
GP-alpha, NAS-style) on public data, establishing how much of the incumbent paradigm's
deployed failure is measurable in advance.

\subsection{OP5: Continuous clause semantics}
\emph{Problem.} Hard clauses are discontinuous (the atlas's step function); soft clauses are
scalarizations. Is there a principled middle---clauses as distributions over thresholds, with
specification-level uncertainty propagated to the certificate? \emph{Object.} A Bayesian
clause calculus. \emph{Why.} Appendix~\ref{app:faq} conceded thresholds encode judgment; OP5
would let that judgment carry error bars.

\subsection{OP6: Cross-institutional certificate portability}
\emph{Problem.} Can an L3 auditor (Appendix~\ref{app:eng}) verify a certificate without the strategy, the
data, or the registry internals---e.g.\ via hash-chained ledgers and zero-knowledge-style
proofs of satisfaction? \emph{Object.} A verification protocol satisfying both auditability
and confidentiality. \emph{Why.} The industry's confidentiality constraints are real (this
very paper respects one); portability is the difference between a framework and a standard.

\subsection{OP7: The inverse problem with continuous spaces}
\emph{Problem.} Appendices~O and~P treat finite $\mathcal D$. With differentiable modules
(App.~\ref{app:specnet}) the space becomes continuous, and the feasible set a region with a
boundary whose geometry (curvature, connectedness) governs stability. \emph{Object.} The
differential-geometric version of the well-posedness triad for strategy spaces, connecting to
the PDE-constrained literature's adjoint methods. \emph{Why last.} It generalizes everything
above and depends on all of it.

\subsection{A closing observation}
Every problem on this list is \emph{stated in the framework's own vocabulary}---clauses,
margins, retention, CoS, ledgers. Under the result-oriented grammar, none of them exists as a
question: there is no word for CoS, no object for retention, no place for a certificate. A
paradigm is vindicated less by the answers it gives than by the questions it makes
grammatical; by that standard the queue above is the strongest evidence this paper offers.

\section{Practitioner Quick-Start: Writing Your First Specification}
\label{app:quickstart}

This appendix is the shortest path from reading the paper to using it. It assumes access to a
conformant compiler (Appendix~\ref{app:eng}, level L2) and ignores everything not needed for a first
specification.

\subsection{The five questions}
Write answers before touching the schema. (1) \emph{What behavior would make you reject a
deployed strategy, even if its returns were good?} --- each answer becomes a hard clause.
(2) \emph{What would you merely prefer?} --- soft clauses. (3) \emph{When two hard clauses
cannot both be met, which yields?} --- priorities. (4) \emph{What windows and cost model are
the evaluation fair on?} --- the evaluation block. (5) \emph{What do you do when the answer is
``nothing satisfies this''?} --- the \texttt{on\_infeasible} policy. A specification is five
answers, encoded.

\subsection{The seven beginner mistakes}
\paragraph{Mistake 1: hard performance floors.} Making IR $\ge$ threshold a hard clause turns
the spec into a disguised result-orientation and manufactures cliff-edge infeasibility.
Performance belongs in soft clauses or in the selection rule; hard clauses are for behavior
you would defend to an auditor.
\paragraph{Mistake 2: thresholds tuned on the ledger.} Looking at the search results, then
setting the cap where a favored assembly passes, is demand-side overfitting. The schema flags
post-hoc threshold edits (Appendix~\ref{app:gov}); make them anyway and the certificate will
say so.
\paragraph{Mistake 3: specifying the mechanism instead of the behavior.} ``Use
neutralization'' is not a clause; ``style exposure $\le$ cap'' is. The whole point of the
paradigm is that you may not constrain the supply side to satisfy the demand side---the ledger
(Appendix~\ref{app:ledger}) showed neutralized estimators failing purity through width-20
construction.
\paragraph{Mistake 4: too many hard clauses on day one.} Each hard clause multiplies the
infeasibility risk (the search-width law runs in reverse for the spec author). Start with the
two or three requirements you would actually fire a strategy over; the atlas method
(Appendix~\ref{app:atlas}) adds the rest incrementally, with the marginal feasible-set cost of
each addition measured.
\paragraph{Mistake 5: ignoring structural clauses.} A clause satisfied by design (turnover
under a hard cap) is still worth writing: it survives space expansions. But read certificates
with the structural/statistical distinction in mind.
\paragraph{Mistake 6: treating the certificate as decoration.} If your process deploys the
assembly and files the certificate unread, you have rebuilt result-orientation with extra
steps. The margins and retention intervals are the decision surface.
\paragraph{Mistake 7: writing the spec alone.} The priorities field exists because committees
disagree; a spec written solo encodes one person's implicit priorities and will be renegotiated
in production, where renegotiation is expensive.

\subsection{A first session, timed}
Hour one: the five questions, on paper. Hour two: encode against the registry's statistic
names; compile statically; read the pruned space (what your spec already forbids is the first
surprise). Hour three: run the ledger; read the feasible set and margins before looking at any
IR. Hour four: the atlas---two tightenings, two loosenings---so the committee sees the price
curve of its own requirements. The deliverable of day one is not a strategy; it is a
versioned spec and a certified feasible set. The strategy is day two.

\section{Anatomy of an Infeasibility: The Seed-6 Certificate}
\label{app:miss}

Seed 6 of the robustness study is the only draw in which the paper's specification has an
empty feasible set (Table~\ref{tab:seeds}: feasible $=0$; result-oriented champion IR $5.50$,
SR $0.25$). This appendix reconstructs what the conformant certificate says in that case,
because the infeasibility report is the paradigm's least familiar deliverable and its most
diagnostic one.

\subsection{What the certificate does not say}
It does not say ``best available.'' It does not silently relax a clause. It does not return
the champion with an asterisk. Every one of those moves is available to the incumbent paradigm
and is, in our diagnosis, how mandates quietly become fiction: the requirement exists in the
marketing and not in the object.

\subsection{What it says, field by field}
\emph{Feasible set:} empty, with the ledger hash attesting the search covered all 32 legal
assemblies. \emph{Nearest-miss table:} for each assembly, the minimal number of clause
violations and the signed margins of the violated clauses, ranked---the certificate's answer
to ``how close did anything come,'' which on this draw identifies the assemblies failing
exactly one clause and the exact statistic by which each fails. \emph{Binding analysis:}
which clauses bind hardest (in the atlas sense of Appendix~\ref{app:atlas}, computed by
family deletion on the same ledger: which single clause's removal restores feasibility, and
how many assemblies it restores). \emph{Deflation context:} the random-assembly null---on a
draw where even the null expects few feasible assemblies, infeasibility is partially a
search-width phenomenon; where the null expects many, it is a genuine requirement-market
conflict. \emph{Recommendation field:} the pre-registered \texttt{on\_infeasible} policy's
output: either ``expand the space'' (which stage has the least module diversity relative to
the binding clause), or ``relax clause $j$ from $c_j$ to the nearest value with non-empty
feasible set, price attached.''

\subsection{Why this report is a deliverable, not a failure}
Under result orientation, seed 6 ships: IR $5.50$ is a fine number, and the SR $=0.25$ fact
has no schema to live in. Under objective orientation, seed 6 produces a document that a
committee can act on: relax the binding clause at a quoted behavioral price, expand a quoted
stage, or decline the mandate on this market state---each an explicit decision with an audit
trail. The ten-seed table's most important property is that row 6 \emph{exists in the same
schema} as the other nine: infeasibility is a value the system can take, not an exception it
cannot represent. Frameworks are tested on their failure vocabulary; this is ours.

\subsection{A methodological footnote}
We note for reproducibility that seed 6's infeasibility was not engineered: the specification
and space were fixed before any seed was run, and the robustness study simply encountered the
configuration. That the demonstration contains an organic empty feasible set is, from the
paper's perspective, the single luckiest accident in its construction---a live demonstration
that the honest branch of the compiler is not a code path exercised only in theory.

\section{Capability Matrix: Systems and Paradigms Compared}
\label{app:matrix}

Section~\ref{sec:related} positioned the literature narratively; this appendix states the comparison as a
capability matrix so that claims of novelty are checkable row by row. Capabilities:
\textbf{C1} searches a pipeline-level design space; \textbf{C2} accepts multi-dimensional
behavioral requirements as input; \textbf{C3} distinguishes hard vs.\ soft requirements with
priorities; \textbf{C4} selects by constrained decision (feasible set first); \textbf{C5}
emits satisfaction statistics with search-level multiple-testing control; \textbf{C6} has a
defined infeasibility report; \textbf{C7} produces an auditable certificate artifact.

\begin{table}[ht]
\centering\small
\caption{Capability matrix. $\bullet$ = present; $\circ$ = partial or informal; blank =
absent. Assessments are architectural (what the system's formulation admits), not
implementational.}
\label{tab:matrix}
\begin{tabular}{@{}lccccccc@{}}
\toprule
\rowcolor{sand} System / line & C1 & C2 & C3 & C4 & C5 & C6 & C7 \\
\midrule
Auto-WEKA / CASH line   & $\bullet$ & & & & & & \\
TPOT (structural GP)    & $\bullet$ & & & & & & \\
DARTS / NAS line        & $\bullet$ & $\circ$ & & & & & \\
ProxylessNAS (latency)  & $\bullet$ & $\circ$ & & $\circ$ & & & \\
Qlib                    & $\bullet$ & & & & $\circ$ & & \\
AlphaGen / AutoAlpha    & $\bullet$ & $\circ$ & & & & & \\
R\&D-Agent(-Quant)      & $\bullet$ & $\circ$ & & & $\circ$ & & \\
Alpha-GPT 1/2.0         & & $\circ$ & & & & & \\
FunSearch / AlphaEvolve & $\bullet$ & $\circ$ & & & & & \\
OQL                     & & $\bullet$ & $\circ$ & & & $\circ$ & \\
GBWM (Das et al.)       & & $\bullet$ & $\circ$ & $\bullet$ & & $\circ$ & \\
Pure-factor portfolios  & & $\circ$ & & $\bullet$ & & & \\
NSGA-II / Pareto lines  & $\bullet$ & $\bullet$ & & $\circ$ & & & \\
Goal programming        & & $\bullet$ & $\bullet$ & $\bullet$ & & $\circ$ & \\
DSR / PBO statistics    & & & & & $\bullet$ & & \\
\midrule
\rowcolor{lightaccent}
OOQI (this paper)       & $\bullet$ & $\bullet$ & $\bullet$ & $\bullet$ & $\bullet$ & $\bullet$ & $\bullet$ \\
\bottomrule
\end{tabular}
\end{table}

\paragraph{Reading instructions.} The matrix is deliberately ungenerous to our own ancestors:
ProxylessNAS earns $\circ$ on C2/C4 for a single scalarized deployment constraint; goal
programming earns C2--C4 but nothing on C5/C7 (it predates the search-pressure statistics the
certificates exist to answer); the PBO/DSR line owns C5 alone because it audits selection it
does not perform. The claim of the paper is the bottom row, and the matrix states it in its
most falsifiable form: to refute the novelty claim, exhibit a row with all seven bullets. We
believe no such system exists in the published record; Appendix~\ref{app:reading} annotates the closest
candidates individually.

\paragraph{What the matrix cannot show.} Two qualifications keep the comparison honest.
First, a system with fewer capabilities is not thereby worse at what it does---auto-sklearn is
a better scalar CASH engine than this paper builds, and DSR remains the correct tool for
single-metric deflation. Second, capabilities compound: C2 without C5 is requirement input
without epistemics (the demand-side prototype gap); C5 without C2 is skepticism without a
language for what to be skeptical \emph{about}. The framework's contribution is the
composition, and the matrix shows the composition was missing, not that any cell was
unoccupied.

\section{Engineering FAQ}
\label{app:efaq}

Implementation questions that arise the first week an engineering team tries to build the
framework, answered against the schemas of Appendix~\ref{app:eng}.

\subsection{``Our backtester is proprietary and ancient. Can it participate?''}
Yes, at L2. The compiler needs from the backtester only: (i) batch evaluation of an assembly
on a window, returning the registry statistics; (ii) deterministic replay given seed and
config. It does \emph{not} need source access---the certificate binds outputs, not internals.
What you must build is the adapter mapping registry statistic names to your engine's outputs,
plus the side-effect certification for your modules (who trained on what, when). Most legacy
migrations die on (ii), not (i): if your engine is not replay-deterministic, the ledger hash
chain cannot be built, and no statistic computed downstream is certifiable.

\subsection{``Where does the compiler live relative to our existing AutoML?''}
Above it. Existing CASH/NAS stacks are \emph{search engines} in our vocabulary: give one a
feasible-set membership oracle (evaluate clauses per candidate) and it becomes the
exploration policy of Appendix~\ref{app:compiler}'s compiler. The framework changes the evaluator and adds the
ledger/certificate layer; it does not ask you to retire your Bayesian optimizer---it asks it
to optimize within $\mathcal F(\mathrm{Spec})$ and to write everything down.

\subsection{``Clause evaluation must add latency. How much?''}
Clause statistics are computed from the same simulated portfolio the backtest already
produces; marginal cost is the statistics pass, negligible next to the simulation. The real
cost is architectural: the engine must expose the intermediate objects clauses bind (weights,
exposures, trades), which some engines aggregate away. If your engine cannot emit per-period
exposures, F3 clauses are uncomputable and the registry must say so---a module whose
statistics cannot be computed cannot be certified, full stop.

\subsection{``Can clauses reference live data?''}
Only through the monitor (Appendix~\ref{app:gov}), never through the compiler. Compile-time
clauses bind backtest statistics; live predicates are the monitor's re-evaluation of the same
predicates on realized data. Mixing the two in one artifact destroys the replay argument: a
certificate must be verifiable from the ledger alone.

\subsection{``What is the smallest conformant implementation?''}
An L1 reader is an afternoon of work: parse the spec YAML, validate against the registry,
verify a certificate's schema and hash chain. We recommend teams build L1 first and point it
at the certificates of this paper's demonstration: conformance begins with the ability to
\emph{read} the audit objects, before writing any search.

\subsection{``How do we handle statistics that need look-ahead to define (e.g.\ regime
labels)?''}
Registry statistics carry a temporal-isolation field for exactly this reason. A statistic
whose definition uses information after the evaluation timestamp must be declared
\texttt{uses\_future: true}, and the compiler forbids it in hard clauses on the
in-sample window. Some such statistics remain legal in soft clauses and in monitor-only
diagnostics; the schema records where each is allowed, replacing the industry's
remembered-list-of-forbidden-metrics with a machine-checked field.

\subsection{``Our quants want the leaderboard anyway.''}
Keep it---as a report, not a decision. Appendix~\ref{app:dict}'s dictionary applies:
leaderboards are descriptive statistics of the ledger, and publishing one alongside the
certificate is good practice (it is our Table~\ref{tab:assemblies}). What the framework
forbids is only the final step: letting rank-1-by-scalar be the deployed object when the
spec says otherwise. Engineers accustomed to leaderboard culture usually convert upon seeing
the atlas (Appendix~\ref{app:atlas}), because the atlas gives them back the numbers---with
the feasible region drawn on top.

\section{From Diagnosis to Specification: The Practitioner Grounding}
\label{app:ground}

The demonstration's specification was not invented for the paper; it was distilled from a
professional strategy diagnosis contributed to this project by the practitioner-author,
describing a live production strategy in qualitative behavioral terms. The diagnosis itself is
confidential and appears here only as anonymized clause-family requirements---it is used on
the \emph{demand side only}, and no production code, parameters, or identifying detail of the
live strategy appears anywhere in this paper. This appendix documents the translation
discipline: how a qualitative diagnosis becomes clauses without leaking the object it
describes.

\subsection{The ten diagnostic dimensions, mapped}
The source diagnosis characterizes the strategy along ten qualitative dimensions; each maps
to a clause family of Appendix~\ref{app:spec}, and the demonstration's four hard clauses are the subset
that could be stated verifiably on a synthetic market:
\begin{center}\small
\begin{tabular}{@{}lll@{}}
\toprule
\rowcolor{sand} Diagnostic dimension (qualitative) & Clause family & Demo clause \\
\midrule
pure cross-sectional alpha, no style tilt & F3 purity & $|\beta|_{\max}\le 6$ \\
independence from broad-market co-movement & F3 purity & $|\rho|\le 0.15$ \\
resilience in unilateral declines & F4 regime & downside capture $\ge 15\%$ \\
cost-compatible trading intensity & F5 execution & turnover $\le 0.10$ \\
\bottomrule
\end{tabular}
\end{center}
The remaining dimensions (signal provenance, temporal stability, capacity discipline,
attribution completeness) map to F1, F7, F5/F6, and F8 respectively and appear in the
demonstration as soft or structural requirements. The mapping table is the entire footprint
of the diagnosis in this paper: families and thresholds, nothing else.

\subsection{Why the translation is itself a research object}
The exercise exposed, concretely, the gap the framework exists to close: every dimension of
the diagnosis was expressible in one sentence of plain professional language, and none was
expressible in the result-oriented grammar at all. ``Pure alpha, uncorrelated with the
dividend style and the market, unharmed by unilateral declines'' is a complete specification
to a practitioner and a category error to a leaderboard. The translation table above---ten
rows, three columns---is small enough to be unimpressive and complete enough to be a paradigm
in miniature: requirements in, clauses out, and the strategy's identity nowhere to be seen.

\subsection{Confidentiality as a first-class property}
We emphasize the operational point because it generalizes: the objective-oriented paradigm is
uniquely compatible with institutional confidentiality. The demand side (clauses, thresholds,
priorities) is shareable; the supply side (modules, parameters, code) is private; the
certificate binds the two without revealing either in full (Appendix~\ref{app:eng}'s L3 conformance).
Under result orientation, by contrast, the interesting artifact \emph{is} the strategy, and
publication is expropriation. The present paper could be written---with its demonstration,
its case study, and its complete mathematics---while exposing nothing of the production
strategy that motivated it, precisely because the framework's publishable objects are the
specification, the ledger, and the certificate. That this sentence can be written truthfully
is, we submit, a non-trivial argument for the paradigm in an industry where nothing else of
substance can be published at all.

\section{Limits of the Framework, Stated Plainly}
\label{app:limits}

A framework paper that omits its limits is marketing. Ours are structural, not incidental,
and each is stated with its consequence.

\subsection{The forward operator is only as good as the simulator}
Every object in the paper---feasible sets, margins, retention, CoS---is computed with respect
to $F(d,\omega)$, the backtest. If the simulator's cost model is optimistic, its regime
coverage thin, or its market impact naive, certificates are precise statements about a wrong
map (Appendix~\ref{app:gov} conceded this; it bears repeating). The framework improves the
\emph{use} of simulation, not simulation itself; a team with a bad backtester and a perfect
compiler has precisely located, certified, and audited garbage. This limit is shared with all
empirical strategy research, but objective orientation makes it more visible, not less
consequential: a spec satisfied on a fictional market is fictional satisfaction.

\subsection{Non-stationarity bounds every statistic}
Retention, deflation, and satisfaction-PBO all assume the evaluation panel is informative
about deployment conditions. Regime change breaks this asymmetrically: the specification
typically encodes \emph{regime fears} (F4), and the evaluation window typically contains few
regimes. The demonstrated OOS decay (App.~\ref{app:seeds}: B's SR $1.00\to0.72$ mean) is the
framework catching its own limit on synthetic data---the disclosed rotation is exactly the
kind of non-stationarity no in-sample statistic can certify away. Consequence: certificates
expire, and the monitor is not optional infrastructure.

\subsection{Clause elicitation is human, therefore political, therefore slow}
The five questions of Appendix~\ref{app:quickstart} are easy to ask and genuinely hard to
answer; institutions routinely discover their mandates were never articulated at behavioral
granularity. The framework relocates the difficulty from strategy debugging to requirement
articulation---an improvement, because the latter is cheaper to iterate, but a real cost:
expect the first specification cycle to take weeks of committee time, and expect the atlas to
start arguments. We regard this as the framework working: those arguments were previously
conducted at the strategy level, with positions and P\&L as ammunition.

\subsection{The paradigm does not find alpha}
Objective orientation allocates satisfaction across a given design space; it does not enlarge
the space's alpha content. A space of mediocre modules compiles to a certified mediocre
strategy. The supply side remains where invention lives (Appendix~\ref{app:dict}'s research
split), and the framework's honest answer to ``will this make my strategies better'' is: it
will make your strategies \emph{what you asked for}, measurably---including the discovery that
what you asked for was not what you wanted, which the atlas shows faster than any alternative.

\subsection{Our demonstration is one draw of one synthetic market}
Finally, the paper's own limit, stated once more for the record: all computed numbers derive
from a disclosed synthetic DGP chosen to make the two paradigms diverge measurably. The
structural claims (feasible-set geometry, certificate computability, paradigm divergence) are
demonstrated; the empirical claims (about actual markets) are conjectured and queued in
Appendix~\ref{app:road}. A reader who finishes the paper believing more than this has been
demonstrated has been failed by our writing, not served by our results.

\section{Model Risk, Regulation, and the Certificate as Compliance Object}
\label{app:mrm}

The framework's audit artifacts align---we argue unnaturally well---with the model risk
management (MRM) expectations that regulated institutions already operate under. This
appendix maps the correspondence, because it is the shortest path to institutional adoption
and because it reveals that the regulatory world has been informally objective-oriented for
years.

\subsection{The MRM canon, translated}
The supervisory model-risk canon (model inventory, conceptual soundness, ongoing monitoring,
outcomes analysis, documentation) asks five questions of every model in production. Each has
a framework-native answer. \emph{Inventory:} the registry (Appendix~\ref{app:eng}), versioned and
hashed---every module an inventoried object with certified side effects. \emph{Conceptual
soundness:} the specification plus interaction table---what the strategy is supposed to do
behaviorally, stated before any number is examined. \emph{Ongoing monitoring:} the monitor
and breach protocol---clause predicates on live data, with pre-registered responses.
\emph{Outcomes analysis:} the certificate's retention and deflation statistics, recomputed on
the continued ledger. \emph{Documentation:} the certificate itself, which exists precisely so
that the documentation question has a schema.

\subsection{Why regulators will prefer certificates to backtests}
A backtest answers ``what happened in simulation''; a certificate answers ``what was
required, what was found, by how much, under resampling, after search-level deflation, and
who attested each field.'' The second sentence is the MRM sentence. We predict---this is a
conjecture, labeled as such---that audit objects resembling the certificate will be
\emph{required} of AI-constructed strategies within this decade, because the alternative is
admitting that strategies assembled by search processes are evaluated by the same scalar the
search maximized, a circularity no supervisory framework can survive stating aloud. The
framework's compliance reading of DSR/PBO statistics (App.~\ref{app:satstat}) already exists; what is missing
is the requirement layer, and this paper supplies it.

\subsection{The confidentiality-compatibility argument, again, for compliance}
Appendix~\ref{app:ground} showed the paradigm lets a practitioner publish without exposure;
the same property lets a firm \emph{report} without exposure. Regulators and allocators
routinely want assurance they cannot currently get without demanding the crown jewels. An
L3-verifiable certificate (Appendix~\ref{app:eng}) offers the third option: assurance about behavior with
zero disclosure of mechanism. If the framework achieves nothing else, making ``behavioral
assurance without mechanism disclosure'' a standard institutional object would justify it.

\subsection{A caution}
Compliance fit is not compliance: nothing in this paper has been reviewed by any regulator,
and the mapping above is our analysis, not legal advice. The caution runs deeper: if
certificates become check-the-box artifacts, the paradigm will have reproduced, at the
documentation layer, the same Goodhart dynamics it diagnoses at the selection layer. The
defense is the same everywhere in the paper: keep the statistics honest, keep the ledgers
immutable, and keep the feasible set reportable---including when it is empty.

\section{Reader's Guide and Dependency Map}
\label{app:guide}

The paper is long by design (a monograph-style treatment); this appendix is its map. Each
entry states what a reader gets and what it depends on.

\subsection{Reading paths}
\paragraph{The 30-minute path (the claim).} \S\ref{sec:intro}--\S\ref{sec:related} (problem, related work), \S\ref{sec:formal}
formulation skim, \S\ref{sec:demo} demonstration with Tables~\ref{tab:demo} and~\ref{tab:seeds}, \S\ref{sec:discussion}--\S\ref{sec:conclusion}. Exit with: the paradigm
distinction, the gap, and the proof that the two paradigms select different strategies.
\paragraph{The half-day path (the method).} Add \S\ref{sec:framework}--\S\ref{sec:protocol} (framework, protocol), Appendix~\ref{app:spec}
(spec language), Appendix~\ref{app:compiler} (compiler), Appendix~\ref{app:bayes} (Bayesian decision), and the walkthrough
(App.~\ref{app:walkthrough}). Exit with: ability to operate a conformant compiler.
\paragraph{The full path (the mathematics).} Add Appendices~\ref{app:inverse} (well-posedness), \ref{app:invnum} (numeric
inverse), \ref{app:satstat} (satisfaction statistics), and \ref{app:proofs} (proofs). Exit with: the complete
formal edifice and its demonstrated instantiation.
\paragraph{The committee path (the governance).} Appendices~\ref{app:atlas} (atlas),
\ref{app:casebook} (cases), \ref{app:gov} (operations), \ref{app:quickstart} (quick-start),
\ref{app:mrm} (MRM). Exit with: the deliberative instruments.

\subsection{Dependency map of the appendices}
Specifications (\ref{app:spec}, \ref{app:grammar}) and the design space (\ref{app:space}) are foundational: everything consumes them.
The compiler (\ref{app:compiler}, \ref{app:bayes}) consumes both plus the protocol (\ref{app:protocol}); the inverse-problem reading (\ref{app:inverse}, \ref{app:invnum})
consumes the ledger (\ref{app:repro}) and produces the stability story that \ref{app:satstat} quantifies; the atlas
(\ref{app:atlas}) consumes only \ref{app:repro} and re-reads it under spec variation; the governance layer
(\ref{app:eng}, \ref{app:gov}, \ref{app:mrm}) consumes all certificates but no mathematics; the case
material (\ref{app:neural}, \ref{app:specnet}, \ref{app:casebook}) depends on \ref{app:spec} and \ref{app:compiler} alone. A reader who
finds a dependency cycle has found a bug; the document is organized as a DAG by construction.

\subsection{Conventions}
Throughout: ``assembly'' always means a fully instantiated pipeline; ``clause'' always means
a registry-resolved predicate with hardness and priority; ``the demonstration'' always means
the seed-7 synthetic instance of \S\ref{sec:demo}; ``the ledger'' always means Table~\ref{tab:assemblies}
and its hash-chained production analog; propositions are cross-referenced by number throughout, the
core sequence living in \S\ref{sec:formal}, \S\ref{sec:optimal}, and Appendices~\ref{app:inverse},
\ref{app:formalproofs}. All quantities reported to three decimals are computed; all
quantities described as ``derived'' are arithmetic consequences of computed ones
(Appendix~\ref{app:bayes} states the convention where it matters most).

\section{Replicator's Checklist and Artifact Manifest}
\label{app:repl}

Everything needed to verify every number in this paper, enumerated as a checklist. The
demonstration is synthetic by design, so replication requires no market data, no licenses,
and no access to anything outside the published artifacts.

\subsection{Artifacts}
(i) The manuscript source (\LaTeX) with all generated tables included as files;
(ii) the demonstration pipeline (Python/NumPy): DGP with seed 7, the 32-assembly evaluator,
the four-clause specification, the Bayesian compiler, the block-bootstrap audit;
(iii) the generated artifacts: the 32-row ledger table, the ten-seed table, the two-panel
demonstration figure; (iv) the SpecNet reference architecture (syntax-verified skeleton);
(v) the schemas of Appendix~\ref{app:eng}. Production-strategy artifacts are deliberately absent
(App.~\ref{app:ground}); nothing in the checklist requires them.

\subsection{The checklist, in order}
\paragraph{Ledger reproduction (10 minutes).} Run the DGP at seed 7; evaluate all 32
assemblies; confirm Table~\ref{tab:assemblies} row-for-row (rank 1 IR $7.10$; rank 3 IR
$6.71$, sole feasible pair with rank 17 IR $5.54$). Any mismatch is a bug in replication, not
noise---the DGP is deterministic given the seed.
\paragraph{Paradigm divergence (1 minute).} From the ledger: assembly A $=$ rank 1 satisfies
1 of 4 clauses; assembly B $=$ rank 3 satisfies 4 of 4; CoS $=0.39$ IR.
\paragraph{Posterior check (2 minutes).} Apply the compiler prior and tempered likelihood;
confirm feasible mass $0.103\to0.401$ and top-1 $=$ rank 3 at $0.2782$; confirm the derived
entries of Table~\ref{tab:posterior} arithmetically.
\paragraph{Margins (1 minute).} Confirm $m_{\min}$ values $0.073$ (rank 3, style clause) and
$0.122$ (rank 17, downside clause) from the ledger's raw statistics and the clause scales.
\paragraph{Bootstrap audit (30 minutes).} $1000$ block-bootstrap replications, 8-week
blocks; confirm retention Table~\ref{tab:retci} to Monte Carlo accuracy (Wald intervals are
given in the table itself).
\paragraph{Ten-seed study (2 hours).} Re-run for seeds 0--9; confirm Table~\ref{tab:seeds},
including the organic infeasibility at seed 6 and the summary row.
\paragraph{Atlas enumeration (15 minutes, no code).} Apply the six specs of
Table~\ref{tab:atlas} to the published ledger by hand; confirm every feasible-set
enumeration, especially S1's singleton and S2's fifteen.

\subsection{What replication would falsify}
The demonstration's claims are structural, so falsification is structural: a ledger that
does not reproduce breaks the paper's computational integrity; a feasible set that changes
identity under correct recomputation breaks the inverse-problem illustration; a bootstrap
distribution materially different from Table~\ref{tab:retci} breaks the stability story.
What replication \emph{cannot} falsify is the paradigm claim itself---that rests on the
formalism and the literature gap (App.~\ref{app:matrix}), and it is refuted by counterexample
systems, not by recomputation. We state the distinction so that replicators know which of
their results would actually wound the paper.

\section{The Demonstration DGP, Fully Specified}
\label{app:dgp}

For completeness of the replication manifest (App.~\ref{app:repl}), this appendix specifies
the synthetic market in one place. Every parameter below is the one used for the paper's
numbers; the DGP's design intent is stated separately from its parameterization so that
replicators can distinguish what is load-bearing from what is arbitrary.

\subsection{Panel and calendar}
$N=300$ assets, $T=208$ weekly periods, split $156$ in-sample (IS) / $52$ out-of-sample
(OOS). Single market draw at \texttt{seed=7} for the main text; seeds 0--9 for the robustness
study. Weekly rebalancing, top-$N$ equal-weight portfolios, transaction cost
$0.0015$ per unit turnover, deducted before all performance statistics.

\subsection{Factor structure}
Three latent factors drive cross-sectional returns: a \emph{market} factor (common variance),
a \emph{size} factor (the style the exposure clause measures), and a \emph{dividend} factor
(the style whose correlation the correlation clause specifies against). Asset loadings are drawn once per seed and held fixed within the draw.

\subsection{The alpha and the rotation}
Two weak genuine signals exist in the cross-section: a mild momentum component and a mild
quality component---the ``alpha'' any competent estimator can partially recover. The
load-bearing design element is the disclosed regime rotation: the dividend factor's premium
is $+0.0010$ per week in-sample and $-0.0010$ out-of-sample, a sign flip known to the
experimenters and unknown to every strategy. Its function is to make satisfaction
\emph{regime-fragile by construction} for assemblies that load on dividend, so that the
OOS decay of Table~\ref{tab:seeds} is an engineered, measurable phenomenon rather than an
uncontrolled surprise. We disclose it in the main text and here again because the
demonstration's integrity is the paper's: the rotation is scenario design, not a claim about
dividend premia in actual markets.

\subsection{What is load-bearing vs.\ arbitrary}
Load-bearing: (i) a factor structure rich enough that style contamination is possible (else
F3 clauses are vacuous); (ii) genuine but weak alpha (else IR differences are uninformative);
(iii) a regime rotation touching exactly one style (else the OOS story is mud); (iv) costs
proportional to turnover (else F5 is vacuous). Arbitrary within ranges: $N$, $T$, the exact
premium magnitudes, and the seed---as the ten-seed study demonstrates by varying the last.
A replication that preserves (i)--(iv) and changes everything else should reproduce the
paper's \emph{structural} findings (paradigm divergence, feasible-set variability, margin
ordering) with different point values; that invariance is the demonstration's claim to
generality.

\subsection{Why synthetic, once more}
A real-market demonstration would be unverifiable (data licenses), unreplicable (point-in-time
databases drift), and confounded (unknown DGP). The synthetic instance is the strongest
demonstration available for a framework paper: the ground truth is known, the paradigms'
divergence is measurable against it, and every number in the paper can be regenerated in
minutes. The empirical programme on real markets is OP3--OP4 of Appendix~\ref{app:road}, and
it is deliberately not claimed here.

\section{Specification Patterns and Anti-Patterns}
\label{app:patterns}

Between the quick-start (App.~\ref{app:quickstart}) and the casebook
(App.~\ref{app:casebook}) lies a middle layer of reusable specification design. We record
the patterns that recur across professional requirements and the anti-patterns that recur
across failed ones.

\subsection{Patterns}
\paragraph{The sandwich (cap and floor on linked statistics).} Bind a quantity from both
sides: e.g.\ turnover $\le$ cap (cost) \emph{and} signal half-life $\ge$ floor (alpha
survival). The pattern prevents the compiler from satisfying one clause by destroying a
statistic nobody wrote down. Most single-clause specifications should be sandwiches.
\paragraph{The sentinel (structural clause as tripwire).} Include clauses the current space
satisfies structurally (App.~\ref{app:ledger}'s turnover finding). Cost: nothing today.
Value: the clause fires the day the space expands (a faster execution module arrives), and
the ledger diff shows exactly what the new module broke.
\paragraph{The regime pair (F4 clause + its OOS retention clause).} Every F4 clause should
ship with an F7 clause on its retention: ``downside capture $\ge$ floor'' and ``joint
retention $\ge$ floor'' are different promises, and the second is the one that fires in the
breach protocol.
\paragraph{The attribution ladder (one clause per taxonomy level).} For F3, write the
ladder (total style exposure; per-factor $R^2$; market correlation) rather than a single
aggregate. When the aggregate fails, the ladder tells the repair traversal
(App.~\ref{app:interactions}) where to start.
\paragraph{The priced soft clause.} Keep performance soft but declared: the soft F2 floor is
what lets the certificate quote CoS honestly. A spec with no performance clause cannot
measure what it costs.

\subsection{Anti-patterns}
\paragraph{The disguised leaderboard.} A spec whose only hard clause is IR $\ge$ high value
is result orientation wearing a schema. Detectable at L1: if the spec binds only F2, the
reader should refuse certification.
\paragraph{The wishlist.} Eight hard clauses assembled without the atlas: statistically
certain to be infeasible or vacuous (search-width law in reverse). The atlas exists to price
each clause before it is hardened.
\paragraph{The mechanism clause.} ``Must use neutralization,'' ``must be a neural net'':
supply-side dictation in demand-side clothing. The framework cannot stop authors writing
these, but the registry marks them \texttt{mechanism\_clause: true} and the certificate
carries the warning that the spec forecloses solutions it never evaluated.
\paragraph{The moving threshold.} Post-hoc edits after ledger exposure (App.~\ref{app:gov}).
Sometimes legitimate (the atlas informs judgment); always disclosed. The anti-pattern is the
undisclosed version, which the hash chain makes impossible to hide from an L3 auditor.
\paragraph{The orphan spec.} A specification with no owner role: clauses whose priority
nobody will defend in committee. When the breach protocol fires, orphan specs get silently
amended---the governance failure the lifecycle was built to prevent.

\subsection{A meta-pattern}
Read across both lists: good specifications behave like \emph{constitutions} (few articles,
hard to amend, priced amendment process, enforced uniformly), and bad ones like
\emph{wish lists} (many articles, amended on encounter, enforced selectively). The
framework's contribution to the craft is that it makes the distinction operational: the
atlas prices amendments, the ledger enforces uniformity, and the hash chain grades the
amendment process. What remains with the author is the constitutional judgment itself---
which is where the paper has always said judgment belongs.

\section{Antecedents in Adjacent Disciplines: Four Fields That Made This Move}
\label{app:hist}

Appendix~\ref{app:phil} placed the paradigm in the philosophy of science. This appendix collects the
engineering precedents: disciplines that underwent the same transition---from
result-optimized artifacts to specification-compiled ones---and what their experience
predicts for ours. The parallels are interpretive, not evidentiary; we state them because
transitions that succeeded elsewhere follow recognizable arcs.

\subsection{Aerospace: multidisciplinary design optimization}
Early aircraft design optimized performance scalars (range, payload) and discovered,
expensively, that the winners violated stability, manufacturability, and safety requirements
that ``everyone knew.'' MDO's maturation was the conversion of those requirements into
explicit constraint sets coupled into the optimization---with feasibility studies
(the atlas's ancestor) as standard instruments. Predicted arc for quant: a decade of
constraint formalization, followed by the emergence of \emph{feasibility engineers} as a
distinct role---our certificate auditor (App.~\ref{app:eng}, L3) has a job description waiting.

\subsection{Software: design by contract and specification mining}
Software engineering moved from ``test the program you wrote'' to contracts (pre/post
conditions, invariants) checked at interfaces, and later to \emph{mining} specifications from
behavior. Our clause families are contracts on pipeline behavior; the registry's side-effect
signatures are interface contracts; and the ledger is the test suite that never lies about
having been run. The field's warning travels too: contracts decay unless the toolchain
refuses to build without them---hence our schema-level impossibility of silently relaxing a
hard clause (App.~\ref{app:eng}).

\subsection{Control theory: from optimal control to robust and chance-constrained control}
Classical optimal control maximized performance against a nominal model; robust control
re-rooted the discipline in guaranteed behavior under model uncertainty, and
chance-constrained control made the guarantees probabilistic. Our margins
(Prop.~\ref{prop:margin}) are robust-control thinking with resampling in place of uncertainty
sets; our retention intervals are chance constraints with bootstrap distributions. The
field's lesson: the robust formulation initially loses every nominal benchmark and wins every
deployment review---the same trade Table~\ref{tab:margins} prices at 17.4\% IR.

\subsection{Clinical trials: pre-registration and estimands}
Medicine's credibility crisis produced the estimand framework: the requirement (what
treatment effect, in whom, under what intercurrent events) is declared \emph{before} data,
and the analysis is a compilation of the estimand into an estimator with a certificate
(the statistical analysis plan). Our spec/ledger/certificate triad is the estimand/SAP/CSR
triad re-rooted in strategy construction, and the parallel is close enough to import their
vocabulary: a strategy deployed without a specification is, in the estimand sense, a trial
without a question---its results are uninterpretable no matter how significant they look.

\subsection{The common arc}
In each field: (i) a scalar era produced spectacular point successes and systematic deployed
failures; (ii) failures were first attributed to insufficient search, then recognized as
misspecification; (iii) the fix was a requirement language plus a certification artifact;
(iv) adoption lagged invention by roughly a decade, driven by the institutions paying the
failure costs. Quantitative strategy research is, on this reading, between (ii) and (iii)---
the diagnoses exist (App.~\ref{app:reading}'s statistics literature), the language and artifact are this
paper's contribution, and (iv) will be decided by the institutions whose deployed failures
are the most expensive. We wrote the governance appendices (App.~\ref{app:gov},
\ref{app:mrm}) for them specifically.

\end{document}